\def\emph{\textbf}
\theoremstyle{plain}
\newtheorem{theorem}             {Theorem}
\newtheorem{lemma}      [theorem]{Lemma}
\newtheorem*{theorem*}    {Theorem}
\newtheorem*{proposition*}{Proposition}
\newtheorem*{lemma*}      {Lemma}
\newtheorem*{corollary*}  {Corollary}
\newtheorem*{conjecture*} {Conjecture}
\theoremstyle{definition}
\newtheorem{definition}[theorem]{Definition}
\newtheorem*{definition*}{Definition}
\newtheorem*{example*}   {Example}
\newtheorem*{question*}   {Question}
\newtheorem*{idea*}  {Idea}
\theoremstyle{remark}
\newcommand{\defeq}{\colonequals} 
\def\implies{\Rightarrow}
\newcommand{\Tr}{\mathrm{Tr}}
\DeclareRobustCommand{\circbullet}{\mathbin{\vphantom{\circ}\text{\circbullet@}}}
\newcommand{\circbullet@}{%
  \check@mathfonts
  \m@th\ooalign{%
    \clipbox{0 0 0 {\dimexpr\height-\fontdimen22\textfont2}}{$\bullet$}\cr
    $\circ$\cr
  }%
}
\newcommand{\mixedstate}[1]{#1^\bullet}
\newcommand{\purestate}[1]{#1^\circ}
\newcommand{\anystate}[1]{#1^{\circbullet}}
\newcommand{\wdim}[2]{{#1}^{(#2)}}
\newcommand{\wdimd}[1]{\wdim{#1}{d}}
\newcommand{\cQ}{\mathcal{Q}}
\newcommand{\cC}{\mathcal{C}}
\newcommand{\cP}{\mathcal{P}}
\newcommand{\cN}{\mathcal{N}}
\newcommand{\cR}{\mathcal{R}}
\newcommand{\cI}{\mathcal{I}}
\newcommand{\cZ}{\mathcal{Z}}
\newcommand{\cX}{\mathcal{X}}
\newcommand{\mQt}{\mixedstate{\cQ}_3}
\newcommand{\mCt}{\mixedstate{\cC}_3}
\newcommand{\mPt}{\mixedstate{\cP}_3}
\newcommand{\mNt}{\mixedstate{\cN}_3}
\newcommand{\mRt}{\mixedstate{\cR}_3}
\newcommand{\mIt}{\mixedstate{\cI}_3}
\newcommand{\mZt}{\mixedstate{\cZ}_3}
\newcommand{\mXt}{\mixedstate{\cX}_3}
\newcommand{\pQt}{\purestate{\cQ}_3}
\newcommand{\pCt}{\purestate{\cC}_3}
\newcommand{\pPt}{\purestate{\cP}_3}
\newcommand{\pNt}{\purestate{\cN}_3}
\newcommand{\pRt}{\purestate{\cR}_3}
\newcommand{\pIt}{\purestate{\cI}_3}
\newcommand{\pZt}{\purestate{\cZ}_3}
\newcommand{\pXt}{\purestate{\cX}_3}
\newcommand{\aQt}{\anystate{\cQ}_3}
\newcommand{\aPt}{\anystate{\cP}_3}
\newcommand{\aNt}{\anystate{\cN}_3}
\newcommand{\aRt}{\anystate{\cR}_3}
\newcommand{\aIt}{\anystate{\cI}_3}
\newcommand{\aZt}{\anystate{\cZ}_3}
\newcommand{\aXt}{\anystate{\cX}_3}
\newcommand{\stkout}[1]{\ifmmode\text{\sout{\ensuremath{#1}}}\else\sout{#1}\fi}
\def\p@subsection{}
\def\p@subsubsection{}
\renewcommand{\emph}{\textit}
\begin{document}

\title{Quantum circuits for measuring weak values, \texorpdfstring{\\}{} Kirkwood--Dirac quasiprobability distributions, and state spectra}

\newcommand{\bina}{Faculty of Engineering and the Institute of Nanotechnology and Advanced Materials, Bar Ilan University, Ramat Gan 5290002, Israel}
\newcommand{\binashort}{Faculty of Engineering and the Institute of Nanotechnology and Advanced Materials, Bar Ilan University, Ramat Gan, Israel}
\newcommand{\inl}{INL -- International Iberian Nanotechnology Laboratory, Av. Mestre Jos\'{e} Veiga s/n, 4715-330 Braga, Portugal}
\newcommand{\inlshort}{INL -- International Iberian Nanotechnology Laboratory, Braga, Portugal}
\newcommand{\uff}{Instituto de F\'{i}sica, Universidade Federal Fluminense, Av. Gal. Milton Tavares de Souza s/n, Niter\'{o}i -- RJ, 24210-340, Brazil}
\newcommand{\uffshort}{Instituto de F\'{i}sica, Universidade Federal Fluminense, Niter\'{o}i -- RJ, Brazil}
\newcommand{\cfum}{Centro de F\'{i}sica, Universidade do Minho, Campus de Gualtar, 4710-057 Braga, Portugal}
\newcommand{\cfumshort}{Centro de F\'{i}sica, Universidade do Minho, Braga, Portugal}
\newcommand{\huji}{School of Computer Science and Engineering, Hebrew University, Jerusalem 9190401, Israel}
\newcommand{\hujishort}{School of Computer Science and Engineering, Hebrew University, Jerusalem, Israel}
\newcommand{\uob}{H. H. Wills Physics Laboratory, University of Bristol, Tyndall Avenue, Bristol BS8 1TL, United Kingdom}
\newcommand{\uobshort}{H. H. Wills Physics Laboratory, University of Bristol, Bristol, United Kingdom}
\newcommand{\diumshort}{Departamento de Inform\'atica, Universidade do Minho, Braga, Portugal}

\author{Rafael Wagner}
\email{rafael.wagner@inl.int}
\affiliation{\inlshort}
\affiliation{\cfumshort}

\author{Zohar Schwartzman-Nowik}
\affiliation{\hujishort}
\affiliation{\binashort}

\author{Ismael L. Paiva}
\affiliation{\uobshort}

\author{Amit Te'eni}
\affiliation{\binashort}

\author{Antonio Ruiz-Molero}
\affiliation{\inlshort}
\affiliation{\diumshort}

\author{Rui Soares Barbosa}
\affiliation{\inlshort}

\author{Eliahu Cohen}
\affiliation{\binashort}

\author{Ernesto F. Galv\~ao}
\email{ernesto.galvao@inl.int}
\affiliation{\inlshort}
\affiliation{\uffshort}
\date{\today}

\begin{abstract}
Weak values and Kirkwood--Dirac (KD) quasiprobability distributions have been independently associated with both foundational issues in quantum  theory and advantages in quantum metrology.
We propose simple quantum circuits to measure weak values, KD distributions, and spectra of density matrices without the need for post-selection.
This is achieved by measuring unitary-invariant, relational properties of quantum states, which
are functions of Bargmann invariants, the concept that underpins our unified perspective.
Our circuits also enable experimental implementation of various functions of KD distributions, such as out-of-time-ordered correlators (OTOCs) and the quantum Fisher information in post-selected parameter estimation, among others.
An upshot is a unified view of nonclassicality in all those tasks. In particular, we discuss how negativity and imaginarity of Bargmann invariants relate to set coherence.
\end{abstract}

\maketitle

\tableofcontents

\section{Introduction}\label{sec: intro}

Two concepts have profoundly impacted quantum foundations,  metrology, and thermodynamics: weak values, introduced in the seminal work of Aharonov, Albert, and Vaidman~\cite{aharonov1988result}, and the quasiprobability distribution introduced by Kirkwood and Dirac~\cite{kirkwood1933quantum, dirac1945analogy}.
These were recently connected with one another \cite{yunger_Halpern2018quasiprobability}, and both can be experimentally measured using weak measurement schemes involving pre- and post-selection of carefully chosen observables~\cite{lostaglio2022kirkwood}.
Despite such great impact, investigations of measurement schemes that can be implemented using currently relevant quantum circuit architectures without post-selection and without weak coupling are only now beginning to appear~\cite{cohen2018determination,lostaglio2022kirkwood}.

Another important carrier of nonclassical information about a quantum state $\rho$, seemingly unrelated to those discussed above, is the spectrum $\text{Spec}(\rho)$.
Spectral properties are well known to capture nonclassical features of states such as entanglement~\cite{horodecki2009quantum} and basis-dependent coherence~\cite{streltsov2017colloquium}.
Various nonclassicality properties can be learned from the spectrum of a quantum state, or from non-linear functions thereof, such as univariate traces of the form $\text{Tr}(\rho^n)$ for some integer $n\geq1$.
In quantum information, knowledge of such univariate traces has recently been used to witness non-stabilizerness (i.e., ``magic'') of quantum states~\cite{tirrito2023quantifying}, or in subroutines for variational quantum eigensolvers~\cite{wang2021variational}.

In this article, we describe a unified framework that enables expressing Kirkwood--Dirac (KD) quasiprobability functions, weak values, and univariate traces $\text{Tr}(\rho^n)$ in terms of more fundamental quantities, known as \emph{Bargmann invariants}.
These invariants fully characterize the relational properties of a set of quantum states, that is, all the properties that remain invariant under the application of the same unitary to all states in the set. The simplest non-trivial Bargmann invariant is the overlap $\text{Tr}(\rho\sigma)$ between two states $\rho$ and $\sigma$, and it is also the easiest to probe experimentally~\cite{giordani2021witnesses}. 

The starting point for our conceptual unification is the observation that KD distributions, weak values, univariate traces, and many other constructions of interest can be written as functions of Bargmann invariants. From this simple yet powerful observation, we draw foundational and practical implications.
On the one hand, the connection with Bargmann invariants provides a unified view for studying nonclassicality of sets of states for all use cases considered in this work.
On the other, it opens the door for the use of a family of circuits that measure Bargmann invariants to estimate \textit{any} function of Bargmann invariants, including all those mentioned above.

A crucial notion of nonclassicality we consider here is that of \emph{set coherence},
a basis-independent notion of coherence proposed in Ref.~\cite{designolle2021set}, whereby a set of quantum states is said to be coherent if and only if it is not pairwise commuting. We show how this form of nonclassicality connects with the theory of Bargmann invariants. 

KD distributions can exhibit a different kind of nonclassicality by taking on negative or even  non-real complex values. Similarly, the weak value of an observable is considered nonclassical when it does not lie within the observable's spectrum.
As it turns out, the nonclassicality of KD distributions and weak values, which underpins their relevance as quantum information resources, is a relational property described by Bargmann invariants.
We show that, in general, learning the values of higher-order invariants (beyond overlaps) is \textit{necessary} to assess nonclassicality. We also show how further assumptions can be used to ascertain nonclassicality using overlaps only.

More pragmatically, the nonclassicality of weak values and KD distributions has been linked to quantum advantage in metrology~\cite{arvidsson_Shukur2020quantum} and to quantification of quantum information scrambling~\cite{gonzalez2019out,alonso2022diagnosing}.
Our framework describing those quantities via Bargmann invariants allows us to use recently proposed cycle test circuits \cite{oszmaniec2021measuring, quek2022multivariate} to directly measure weak values and KD distributions.
This enables quantum circuit measurements of  {functions} associated with multiple applications;
see Fig.~\ref{fig:concept} for a conceptual scheme describing our main contributions.
We compare the performance of these circuits with the usual strategies for performing weak measurements and quantum state tomography. We also show how out-of-time-ordered correlators (OTOCs), used to quantify information scrambling in quantum dynamics, can be measured using the same type of circuit.
We show that this is also true for the quantum Fisher information obtained in post-selected parameter estimation.

We recall how these circuits can also estimate the spectrum of a given $d$-dimensional quantum state $\rho$ via estimation of $\Tr(\rho^n)$ for $n=1,\hdots,d$. This circuit architecture is well known~\cite{yirka2021qubitefficient}, as is the fact that one can learn spectral properties by classical post-processing using the Faddeev--LeVerrier algorithm~\cite{escofier2012galois}. Besides our unified perspective that such quantities correspond to specific Bargmann invariants, we provide new sample complexity arguments and numerical simulations suggesting that learning the spectrum via this technique is useful for near-term applications, and is significantly simpler when compared with other efficient techniques.

We expect our contributions to open the path to the quantification of nonclassicality and many other applications, for example in the characterization of coherence, entanglement, and quantum computational advantage.

\begin{figure*}
    \includegraphics[width=\textwidth]{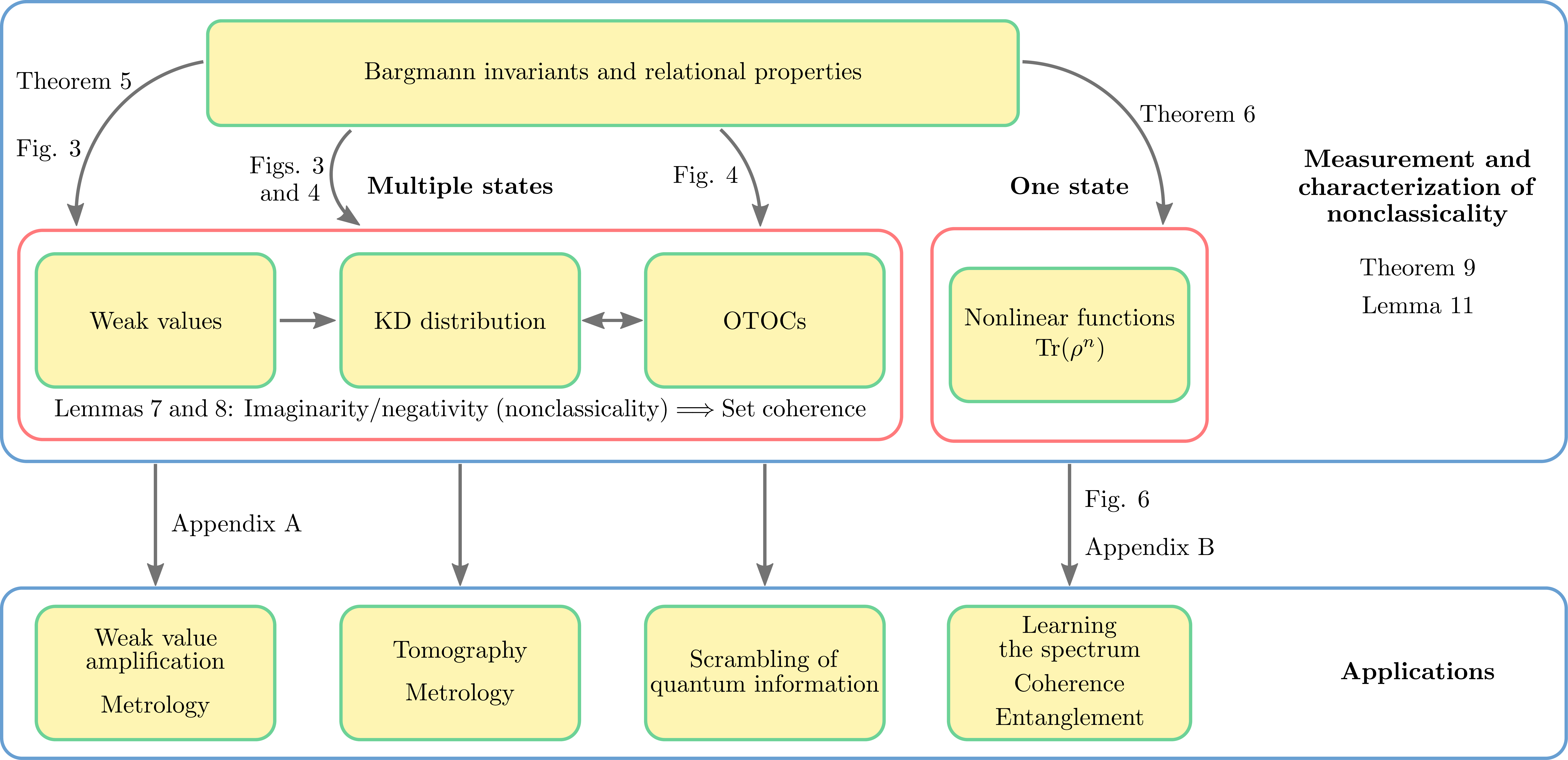}
    \caption{\textbf{Conceptual scheme describing the content and main contributions and their inter-relations.} We show how weak values, the Kirkwood--Dirac (KD) quasiprobability distribution, and out-of-time-ordered correlators (OTOCs) can all be expressed as unitary-invariant quantities known as Bargmann invariants in Sec.~\ref{sec: unifying}. Circuits to measure them are discussed in Sec.~\ref{sec: circuits}. Moreover, results on the nonclassicality of Bargmann invariants, such as those obtained in Sec.~\ref{sec: nonclassicality}, enable a better understanding and quantification of quantum advantage, where we show that negativity and imaginarity of Bargmann invariants require set coherence.}
    \label{fig:concept}
\end{figure*}

\textbf{Outline.} This paper is organized as follows. In Sec.~\ref{sec: preliminaries} we introduce the basic definitions and properties of KD distributions, weak values arising from weak measurements, extended KD distributions, and Bargmann invariants. In Sec.~\ref{sec: unifying} we then propose Bargmann invariants as the unifying concept connecting all the constructions just described.
We proceed to discuss the pragmatic quantum circuit measurement schemes in Sec.~\ref{sec: circuits}, and the foundational implications for analyzing nonclassicality in Sec.~\ref{sec: nonclassicality}. Specifically, in Sec.~\ref{sec: circuits}, we present our main results, describing how one can experimentally access weak values, KD distributions, and state spectra via the  measurement of Bargmann invariants. We compare the efficiency and experimental particularities of estimating weak values using the standard weak measurement approach against our proposed quantum circuits. We also compare our scheme with other approaches to estimate the KD distribution. Moreover, we study how the measurement of higher-order Bargmann invariants enables the estimation of the quantum Fisher information obtained in post-selected parameter amplification, OTOCs, and the spectrum of a density matrix. In Sec.~\ref{sec: nonclassicality}, we start by formally connecting Bargmann nonclassicality with set coherence, hence connecting KD and weak value nonclassicality with various other notions and results in the literature. We then study how nonclassical properties of invariants help characterize nonclassical quasiprobability distributions. We finish this section by analyzing minimal conditions for characterizing the nonclassicality of KD distributions and related quantities.
We conclude in Sec.~\ref{sec: conclusion} with an outlook on possible future work.

\section{Preliminaries}\label{sec: preliminaries}

In this section, we recall the necessary background on weak values, (extended) KD distributions, and Bargmann invariants.

\subsection{Weak values}\label{sec: weak values}

Consider a system prepared (pre-selected) in state $\vert\psi\rangle$, on which one performs a weak measurement of an observable $A$, i.e., a measurement associated with a small coupling strength (when compared to the standard deviation of the measuring pointer). Although this measurement has the apparent downside of not generating a significant average shift of the pointer compared to its standard deviation, it has the benefit of causing little disturbance to the system of interest. Finally, let the system be post-selected in the state $|\phi\rangle$. It turns out that for this post-selection, the average shift of the pointer used for the intermediate weak measurement is proportional to
\begin{equation}
    A_w \defeq \frac{\langle\phi| A |\psi\rangle}{\langle\phi|\psi\rangle},
    \label{eq:def-wv}
\end{equation}
which is known as the weak value of $A$ \cite{aharonov1988result}. $A_w$ arises from a first-order approximation of a certain Taylor expansion \cite{dressel2014colloquium}. This quantity may lie outside the spectrum of the measured operator, either by having a non-zero imaginary part or by having a real part outside the spectrum of $A$, in which case the weak value is said to be \emph{anomalous}.
In fact, with appropriate choices of pre- and post-selections, anomalous weak values can have arbitrarily large real or imaginary parts.  For future reference, we state this as a definition.

\begin{definition}[Weak value nonclassicality] \label{def: nonclassical weak values}
Let $A: \mathcal{H}\to \mathcal{H}$ be a
Hermitian operator and $\ket{\psi}, \ket{\phi} \in \mathcal{H}$ two non-orthogonal vectors in a finite-dimensional Hilbert space $\mathcal{H}$.
We say that the weak value $A_w = \langle \phi | A | \psi \rangle / \langle \phi | \psi \rangle$ is \emph{nonclassical} or \emph{anomalous} if $A_w \notin \text{Spec}(A)$, where $\text{Spec}(A)$ denotes the spectrum of $A$.
\end{definition}

Due to the appearance of such anomalous values and the usual way in which these quantities are experimentally obtained, via weak measurements, the quantum nature of weak values has been questioned since their introduction; see Ref.~\cite{vaidman2017weak} for an overview.
More recently, this situation seems to have changed, with the nonclassicality of certain weak values being more well established. Important theoretical results contributing to this understanding state that anomalous weak values obtained from weak measurements constitute proofs of generalized contextuality~\cite{pusey2014anomalous, kunjwal2019anomalous}.

Since weak measurements cause little disturbance to the measured system, they also do not extract much information about it. As a result, protocols involving weak measurements typically require large ensembles in order to decrease the variance associated with them; see, however, the recent demonstration in Ref.~\cite{rebufello2021anomalous}.
In spite of this, weak values have found various practical applications. In particular, through a method known as \emph{weak value amplification}, large weak values are used for realizing extremely sensitive measurements~\cite{dixon2009ultrasensitive, turner11, susa2012optimal, dressel2013strengthening, jordan2014technical, pang2014entanglement, alves2015weak, harris2017weak, pfender19, cujia19, fang2021weak, huang2021amplification, paiva2022geometric}. This technique is especially helpful in the presence of technical noise or detector saturation~\cite{jordan2014technical, harris2017weak}.

It is also noteworthy that weak values can be used to estimate wave functions, i.e., a state's representation in position coordinates $\Psi(x) \coloneqq \langle x | \psi \rangle$, or specific density matrix terms~\cite{thekkadath2016direct, lundeen2012procedure, lundeen2011direct, shi2015scanfree, malik2014direct}.
Observe, however, that the present work focuses exclusively on the finite-dimensional setting.

Considering the spectral decomposition of the observable $A$,
$A = \sum_{a \in { \text{Spec}}(A)}a\vert a\rangle\langle a\vert$, with $\text{Spec}(A)$ denoting the spectrum of $A$, we see that
\begin{equation}
    A_w = \sum_{a\in{ \text{Spec}}(A)}\! a \frac{\langle \phi \vert a \rangle \langle a \vert \psi \rangle}{\langle \phi \vert \psi \rangle } =  \sum_{a\in{ \text{Spec}}(A)}\! a \frac{\langle \phi \vert a \rangle \langle a \vert \psi \rangle \langle \psi \vert \phi \rangle }{\vert \langle \phi \vert \psi \rangle\vert^2 }
    \label{eq:wv-expanded}
\end{equation}
since, by construction, $\langle \psi \vert \phi \rangle \neq 0$.
Other generalizations of weak values, such as sequential weak values or joint weak values, provide valuable information regarding the behavior of a given state with respect to two---possibly incompatible---observables~\cite{lundeen2005practical, lundeen2009experimental, martinezBecerril2021theoryexperiment, kim2018direct}.
It so happens that the numerators in the expression for the weak value above, or those of joint weak values, correspond to values of the KD quasiprobability distribution at the phase-space point determined by the pre- and post-selected states~\cite{johansen2004nonclassicality}.
These have their own relevance in quantum information theory~\cite{yunger_Halpern2018quasiprobability}.
We now proceed to review some basic facts on the KD quasiprobability distribution.

\subsection{Kirkwood--Dirac quasiprobability distribution}\label{subsec: preliminaries KD}

Consider a finite discrete phase space $I \times F${,} associated with quantum states $\{\ket{i}\}_{i\in I}$ and $\{\ket{f}\}_{f\in F}$ that form two orthonormal bases of a $d$-dimensional Hilbert space $\mathcal{H}$.
The \emph{KD quasiprobability distribution} for a given state $\rho\in\mathcal{D}(\mathcal{H})$\footnote{We denote by $\mathcal{D}(\mathcal{H})$ the set of all quantum states in the Hilbert space $\mathcal{H}$.} {is the function $\xi(\rho\vert\cdot)\colon I \times F \to \mathbb{C}$} given by
\begin{equation}\label{eq: KD-definition}
    \xi(\rho \vert i,f) \defeq \langle f \vert i \rangle \langle i \vert \rho \vert f \rangle    
\end{equation}
at each phase-space point $(i,f) \in I \times F$.
Under the assumption that $\langle f \vert i \rangle \neq 0$ for all $i,f$, the KD distribution provides complete information about the state $\rho$.
This is because, in this case, the values of the KD distribution at phase space points are the decomposition coefficients with respect to the orthonormal basis $\left\{\vert i \rangle \langle f \vert/\langle f\vert i \rangle \right\}_{(i,f)\in I\times F}$ of $\mathcal{B}(\mathcal{H})$, the space of bounded operators on $\mathcal{H}$~\cite{johansen2007quantum,johansen2008quantum}. The KD distribution has been experimentally measured in Refs.~\cite{bamber2014observing, thekkadath2017determining, hofmann2012how, salvail2013full, hernandezgomez2023projective}.

The KD distribution can be extended to be well defined for a larger number of bases, or even general projection-valued measures (PVMs).
For a family of PVMs $M_i = \{\Pi^{i}_k\}_{k \in K_i}$, with $i=1,\dots,n$, the \emph{extended KD distribution} for a state $\rho$ at a phase space point $(k_1,\dots,k_n) \in K_1 \times\dots\times K_n$ reads 
 \begin{equation}\label{eq: KD extended}
     \xi(\rho \vert k_1,\dots,k_n) = \text{Tr}(\Pi^1_{k_1}\Pi^2_{k_2}\dots\Pi^n_{k_n}\rho).
 \end{equation}

The following definition formalizes what is understood by nonclassicality in the context of KD distributions. 

\begin{definition}[KD nonclassicality]\label{def: KD nonclassicality}
Fix $\{\vert i \rangle\}_{i\in I}$, $\{\vert f \rangle\}_{f\in F}$ two reference orthonormal bases of a finite-dimensional Hilbert space $\mathcal{H}$.
We say that the KD quasiprobability distribution $\xi(\rho\vert\cdot)$ associated with a state $\rho \in \mathcal{D}(\mathcal{H})$ is \emph{nonclassical}
when there exists a phase-space point $(i,f) \in I\times F$ such that $\xi(\rho \vert i,f) \notin [0,1]$.     
\end{definition}

For extended KD quasiprobability distributions, we similarly define KD nonclassicality to mean taking on values outside the unit interval.

\subsubsection{Post-selected quantum Fisher information} \label{subsec: prelminaries post-selected QFI}

The KD distribution and its extended variants are key constructions for witnessing many nonclassical properties of quantum dynamics~\cite{yunger_Halpern2018quasiprobability, lostaglio2022kirkwood, budiyono2023quantifying}.
Recent developments have shown that the KD distribution is deeply connected to the quantum Fisher information~\cite{lostaglio2022kirkwood}.
The latter provides the optimal rate with which one can learn some (set of) parameter(s) $\theta$ encoded in quantum states $\rho_\theta$ via the Cram\'er--Rao bound~\cite{cramer1999mathematical, rao1945information}. For good introductory works on Fisher information theory and quantum or classical estimation theory, see, e.g., Refs.~\cite{ly2017tutorial, liu2019quantum, paris2009quantum}. It is therefore interesting from a foundational perspective to seek ways of estimating the quantum Fisher information, an important task for the noisy intermediate-scale quantum (NISQ) era of technological capabilities~\cite{Preskill2018, Bharti2022nisq, bharti2021fisher}. Indeed, this is currently an active research topic~\cite{meyer2021fisher, zhang2022direct}.

Let us focus on a recently found connection with quantum metrology. In Ref.~\cite{arvidsson_Shukur2020quantum},  nonclassicality of the KD distribution {was connected} with metrological advantage in parameter estimation.
In a prepare-and-measure metrological scenario {on a $d$-dimensional Hilbert space $\mathcal{H}$}, an initial pure state {$\rho = \vert \psi \rangle \langle \psi \vert$} is prepared, information about a parameter $\theta$ is encoded through the action of a unitary $U = e^{-i\theta I}$ with $I = \sum_i \lambda_i \vert i \rangle \langle i \vert$, evolving to a final state $\rho_ \theta^{\text{ps}}$, which is successfully post-selected with respect to a projector $F = \sum_{f}\vert f \rangle \langle f \vert $ with probability $p_\theta^{\text{ps}} = \text{Tr}(F\rho_\theta)$.
The quantum Fisher information $\mathcal{I}^{\text{ps}}$ captures the optimal rate one can learn about the parameter $\theta$ in this setup.
The following relationship was established in Ref.~\cite{arvidsson_Shukur2020quantum} between the extended KD distribution and the quantum Fisher information of post-selected states:
\begin{equation}\label{eq: ps QFI}
    \mathcal{I}^{\text{ps}}= 4 \sum_{i,i',f}\lambda_i\lambda_{i'}\frac{\xi( \rho_\theta \vert i,i',f)}{p_{\theta}^{\text{ps}}}-4\left\vert\sum_{i,i',f}\lambda_i\frac{\xi( \rho_\theta \vert i,i',f)}{p_{\theta}^{\text{ps}}} \right\vert^2.
\end{equation}
The extended KD distribution for this scenario takes the form
\begin{equation}\label{eq: KD-PPA}
    \xi(\rho_\theta\vert i,i',f) = \langle i \vert \rho_\theta \vert i' \rangle \langle i' \vert f \rangle \langle f \vert i \rangle .
\end{equation}
{For a generalization of Eq.~\eqref{eq: ps QFI} to the case of multi-parameter estimation and non-ideal observables $F$, see Refs.~\cite{salvati2023compression,jenne2022unbounded}.}
Importantly, Eq.~\eqref{eq: ps QFI} holds only for pure states {$\rho_\theta = \vert \psi_\theta \rangle \langle \psi_\theta \vert$}, and was later generalized in Ref.~\cite{lupu_Gladstein2022negative}. Moreover, in Ref.~\cite{lupu_Gladstein2022negative}, the relationship {displayed} in Eq.~\eqref{eq: ps QFI} { was used to unravel} metrological advantages of post-selected parameter amplification for the estimation of a small parameter $\theta$. Also, in Ref.~\cite{das2023saturating}, the efficiency of the {post-selection} advantage was bounded, formally relating the growth of the post-selected version of Fisher information with the factor $p_\theta^{\text{ps}}$.

As a remark, the KD distribution also appears as part of the quantum Fisher information without the need for post-selection in the so-called linear response regime, as discussed in Ref.~\cite{lostaglio2022kirkwood}.\newline

\subsubsection{Out-of-time-ordered correlators} \label{subsec: prelminaries OTOC}

In Ref.~\cite{yunger_Halpern2018quasiprobability} it was shown that the extended KD distribution underlies the out-of-time-ordered correlators (OTOCs), commonly used to witness the scrambling of quantum information. These are given by the expression
\begin{equation}
    \text{OTOC}(t) \defeq \text{Tr}(W^\dagger(t)V^\dagger W(t)V\rho).
\end{equation}
Intuitively, OTOCs witness scrambling of information in the following situation. Consider a many-body system $\rho$ and two observables {$W=W(0)$} and $V$ acting over { distant} regions of the system; the canonical example considers {local} observables acting on the initial and final spins in a one-dimensional lattice of size $N\gg1$. The function $\text{OTOC}(t)$ witnesses the noncommutativity of the observable $W(t) = U(t)^\dagger W U(t)$,
the unitary evolution {of $W$} in the Heisenberg picture by {the one-parameter group} $U(t)$, with respect to $V$.
It thus signals the delocalization of quantum information.
The OTOC can be written as \cite{yunger_Halpern2018quasiprobability}
\begin{equation}\label{eq: OTOC as invariants}
    \sum_{v_1,w_2,v_2,w_3} v_1w_2v_2^*w_3^*\text{Tr}(\Pi^{W(t)}_{w_3} \Pi^{V}_{v_2} \Pi^{W(t)}_{w_2} \Pi^{V}_{v_1} \rho ),
\end{equation}
where $v$ and $w$ range over the eigenvalues of $V = \!\!\sum_{v \in { \text{Spec}}(V)} v \Pi_v^V$ and $W = \!\!\sum_{w \in { \text{Spec}}(W)} w \Pi_w^{W}$ associated to eigenprojectors $\Pi_v^V$ and $\Pi_w^{W}$, respectively.
We also have that $\Pi_w^{W(t)} = U(t)^\dagger \Pi_w^W U(t)$.
The choice of labels made in Eq.~\eqref{eq: OTOC as invariants} comes from the interpretation of Ref.~\cite{yunger_Halpern2018quasiprobability}, where the OTOC is described by a quasiprobability naturally arising from comparing {the} forward and backward evolution of the system under a unitary action.

The coarse-graining hides the possible degeneracy in the spectral decomposition of the observables. In fact, if we take such degeneracy into account, we have that each observable is described by
\begin{equation}
    \Pi_v^V = \sum_{\lambda_v}\vert v, \lambda_v \rangle \langle v, \lambda_v \vert,
\end{equation}
where the eigenspace associated with eigenvalue $v$ is described by the complete set of vectors $\vert v, \lambda_v \rangle$ with $\lambda_v$ ranging over the degeneracy parameters. We find that the OTOC can be described in a more fine-grained way by
\begin{equation}
    \text{OTOC}(t) = \sum_{\substack{(v_1,\lambda_{v_1}),(v_2,\lambda_{v_2}), \\ (w_2,\lambda_{w_2}),(w_3,\lambda_{w_3})}} v_1 w_2 v_2^*w_3^*\tilde{A}_\rho,
\end{equation}
where $\tilde{A}_\rho = \tilde{A}_\rho(v_1,\lambda_{v_1};w_2,\lambda_{w_2};v_2,\lambda_{v_2};w_3,\lambda_{w_3})$ is the quasiprobability behind the OTOC~\cite{yunger_Halpern2018quasiprobability}. This is an extended KD distribution given at each phase-space point by
\begin{equation}
    \begin{aligned}
        \tilde{A}_\rho \,=\, &\langle w_3, \lambda_{w_3} \vert U \vert v_2,\lambda_{v_2}\rangle \langle v_2, \lambda_{v_2}\vert U^\dagger \vert w_2, \lambda_{w_2} \rangle \\
        &\langle w_2, \lambda_{w_2} \vert U \vert v_1, \lambda_{v_1}\rangle \langle v_1, \lambda_{v_1} \vert \rho U^\dagger \vert w_3, \lambda_{w_3}\rangle,
    \end{aligned}
    \label{eq: OTOC fine-grained}
\end{equation}
where each $\vert v_l, \lambda_{v_l}\rangle \langle v_l, \lambda_{v_l}\vert $ {or $\vert w_l, \lambda_{w_l}\rangle \langle w_l, \lambda_{w_l}\vert $}  corresponds to the state of the system after a measurement (related to observable $V$ {or $W$}) has been performed on it and an outcome $(v_l, \lambda_{v_l})$ {or $(w_l, \lambda_{w_l})$ has been} obtained.

As we will see shortly, the various fundamental concepts discussed so far---namely{,} the KD distribution in Eq.~\eqref{eq: KD-definition}, its extended version in Eq.~\eqref{eq: KD extended}, the quasiprobability behind the OTOC in Eq.~\eqref{eq: OTOC fine-grained}, {the post-selected quantum Fisher information in Eq.~\eqref{eq: ps QFI},} and the expression for weak values in Eq.~\eqref{eq:wv-expanded}---are all written in terms of Bargmann invariants, which we review next.\\

{\subsection{Bargmann invariants and how to measure them}\label{sec: Bargmann invariants}

\begin{figure*}
    \centering
    \includegraphics[width=0.9\textwidth]{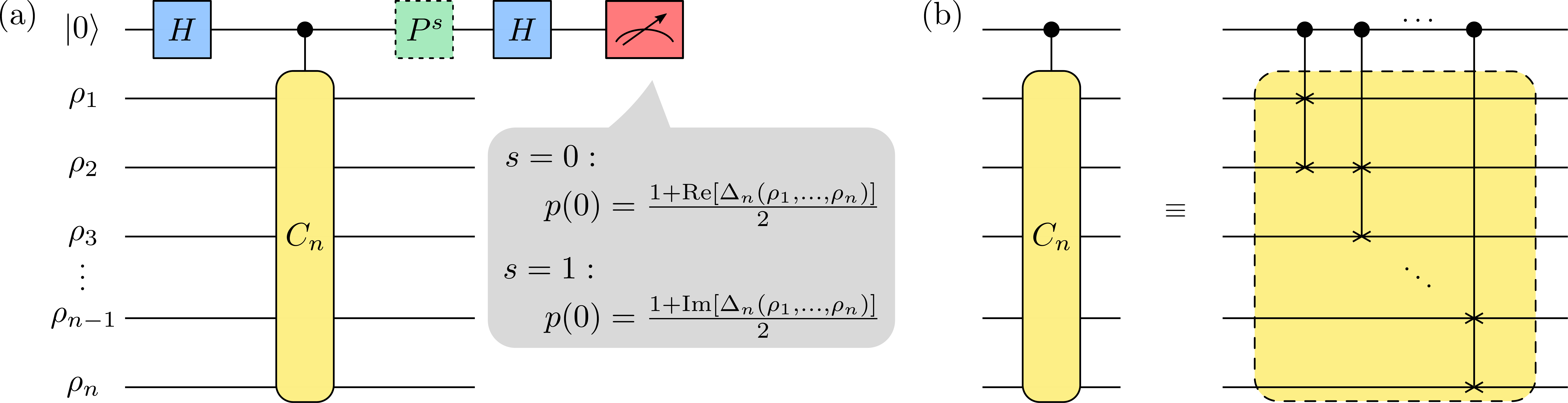}
    \caption{{\textbf{Hadamard test circuit to measure Bargmann invariants.} (a) This Hadamard test circuit can be used to estimate the real and imaginary part of the $n$th order Bargmann invariant $\Delta_n(\rho_1,\dots,\rho_n) = \text{Tr}(\rho_1 \dots \rho_n)$. The controlled-unitary (in yellow) is a controlled cycle permutation, and the $P^s$ gate equals identity for $s=0$, and $\text{diag}(1,i)$ for $s=1$. (b) Explicit decomposition of the controlled-cycle gate in terms of 3-system {controlled-swap} gates, also known  {in the qubit case} as Fredkin gates.}}
    \label{fig:cycle-test preliminaries}
\end{figure*}

General multivariate traces of quantum states can be directly measured in an efficient manner using the cycle test scheme~\cite{oszmaniec2021measuring}  or constant-depth circuit variations thereof~\cite{quek2022multivariate}. They are also known as $n$-th order Bargmann invariants~\cite{bargmann1964note}:
{given an $n$-tuple of states $(\rho_1, \ldots, \rho_n)$, its Bargmann invariant is}
\begin{equation}\label{eq: bargmann invariant definition}
  \Delta_n(\rho_1,\dots,\rho_n) = \text{Tr}(\rho_1 \dots \rho_n).
\end{equation}
These quantities are invariant under the simultaneous conjugation $\rho_i \mapsto Z\rho_iZ^{-1}$ by any invertible matrix $Z$.\footnote{{Throughout the text, when dealing with pure states $\psi$, we adopt the terser notation $\Delta_n(\ldots, \psi, \ldots)$ for $\Delta_n(\ldots, \ket{\psi}\!\bra{\psi}, \ldots)$.}}

{In general, we are interested in studying a set of states $\{\rho_i\}_{i}$ through all the Bargmann invariants among them:
the quantities $\Delta_n(\rho_{i_1}, \ldots, \rho_{i_n})$ for tuples of (not necessarily distinct) labels $(i_1,\dots,i_n)$.}

Bargmann invariants play an important role in linear optics~\cite{jones2022distinguishability}, characterizing multiphoton interference. In the context of invariant theory, multivariate traces of the form $\text{Tr}(X_1 \dots X_n)$ have been studied in depth since they completely generate the ring of invariants over matrix tuples~\cite{wigderson2019mathematics}.
In Ref.~\cite{oszmaniec2021measuring}, it was pointed out that this feature can be used to completely characterize the equivalence class{es} of  unitarily equivalent tuples of states, i.e., tuples $(\rho_i)_{i=1}^m$ that 
are mapped to one another by { the action of} a unitary map { $U$ simultaneously on every state in the tuple:} $\rho_i \mapsto U \rho_i U^\dagger$.
We refer to properties of sets of states that are invariant under {such} global unitary symmetry as \emph{relational properties}.
This is the most fundamental property of Bargmann invariants: every relational property of a set (or even a tuple) of states must be expressible as a function of the Bargmann invariants among those states.%
\footnote{\label{footnotecomment}{To avert potential misunderstanding, we reserve the letter $n$ to indicate the order of Bargmann invariants. When studying the relational invariant properties of a tuple (or set) of states, we use the letter $m$ to denote its size if its finiteness is relevant, as in $(\rho_i)_{i=1}^m$. Such properties are characterized by Bargmann invariants $\Delta_n(\rho_{i_1}, \ldots, \rho_{i_n})$ where $i_1, \ldots, i_n \in \{1, \ldots, m\}$ are $n$ labels.}}
As we will see later, in many non-trivial situations, one requires non-linear functions of invariants, as is the case for weak values.
Moreover, recognizing that a given function on a set of states can be written solely in terms of Bargmann invariants reveals that such a function captures relational information {(see Def.~\ref{def: relational nonclassicality} below)}.
{ We refer to Ref.~\cite[Sec. 13.9.2]{wigderson2019mathematics}, and references therein, for more information on polynomials over sets of matrices that are unitarily invariant.}

If we let $C_n$ be the unitary representation of the cyclic permutation of $n$ elements
\begin{equation}
    \label{eq:Cn}
    (\, 1 \; 2 \; 3 \; \cdots \; n \,) \; = \;  (\, 1 \; 2 \,) (\, 2 \; 3 \,) \cdots (\, n\!-\!1 \;\, n\,)
\end{equation}
that acts as
\begin{equation}
    (a_1, a_2, \ldots, a_{n-1}, a_n) \stackrel{C_n}{\longmapsto} (a_n, a_1, a_2, \ldots, a_{n-1}),
\end{equation}
it is possible to write any Bargmann invariant as 
\begin{equation}
    \Delta_n(\rho_1,\dots,\rho_n) = \text{Tr}(C_n \; \rho_1 \otimes \dots \otimes \rho_n). \label{eq:bitrc}
\end{equation} 

Because $C_n$ is unitary, we can use a quantum circuit known as the Hadamard test, shown in Fig.~\ref{fig:cycle-test preliminaries}(a),  to estimate the Bargmann invariant from Eq.~\eqref{eq:bitrc}. This procedure was detailed in Ref.~\cite{oszmaniec2021measuring}.
To estimate a generic $n$-order Bargmann invariant $\Delta_n(\rho_1,\dots,\rho_n)$ we input, in parallel, an auxiliary qubit in the state $\vert 0\rangle $ and the quantum states $\rho_1,\dots,\rho_n$.
We apply a Hadamard gate to put the auxiliary qubit in superposition, followed by a controlled unitary operation of the unitary $C_n$. Depending on whether we want to estimate the real ($s=0$) or imaginary ($s=1$) part of the invariant, we apply the gate $P^s = \text{diag}(1,i^s)$ followed by another Hadamard gate. We conclude the protocol by measuring the auxiliary qubit in the $Z$ basis. 

There are different circuit constructions of $C_n$~\cite{oszmaniec2021measuring,quek2022multivariate}, the simplest of which consists of the applying successive SWAP operators, thus translating into circuit form the decomposition of the cyclic permutation into transpositions from Eq.~\eqref{eq:Cn}:
\begin{equation}
    C_n \;=\; \text{SWAP}_{1,2}\circ \text{SWAP}_{2,3} \circ \dots \circ \text{SWAP}_{n-1,n},
\end{equation}
as depicted in Fig~\ref{fig:cycle-test preliminaries}(b).
The circuit must be run multiple times to estimate the invariant from the relative frequency estimation of a final computational basis measurement of the auxiliary qubit. For the specifics of implementing a qudit SWAP operation, see Ref.~\cite{garciaescartin2013aSWAPgate}.

In Sec.~\ref{sec: nonclassicality}, we consider how this Hadamard test circuit can be used to estimate weak values, KD distributions, and state spectra.

Among the most commonly considered higher-order Bargmann invariants in the literature are the \emph{univariate traces} of a single quantum state, i.e., quantities of the form $\Delta_n(\rho{, \ldots, \rho}) = \text{Tr}(\rho^n)$.
It is known that learning sufficiently many of these can be used to estimate the spectrum $\text{Spec}(\rho)$ of a given mixed state $\rho$~\cite{horodecki2003fromlimits}, an experimental task relevant for, e.g., quantifying  coherence~\cite{streltsov2017colloquium} or entanglement~{\cite{vedral1997quantifying, bovino2005direct, horodecki2009quantum, daley2012measuring, subacsi2019entanglement, quek2022multivariate}} using resource theory monotones.

{Procedures for measuring the quantities $\{\text{Tr}(\rho^n)\}_{n=1}^d$}}
have been proposed using visibility-based quantum algorithms~\cite{ekert2002direct,alves2003direct,horodecki2003fromlimits,horodecki2002method}, via similar implementations of the cycle operator~\cite{tanaka2014determining, quek2022multivariate}, and using random measurements~\cite{van2012measuring}.

{More generally, one may be interested in characterizing all relational (i.e., unitary-invariant) properties of a set of $m$ states. For pure states, there are known constructions of complete sets of invariants using $O(m^2)$ invariants of at most order $m$~\cite{oszmaniec2021measuring, chien2016characterization}.
Less is known for mixed states, but the minimum number of invariants in a complete set is known to depend also on the system dimension~\cite{oszmaniec2021measuring}.}

\subsection{Nonclassicality of Bargmann invariants}\label{sec: nonclassicality of Bargmann invariants preliminaries}

A particularly important relational property of a set of states, which can be completely characterized by Bargmann invariants, is pairwise commutativity. Given any set of states {$\{\rho_i\}_{i}$}, this is equivalent to the existence of some unitary $U$ such that $U\rho_i U^\dagger$ is diagonal for all $i$.
In Ref.~\cite{designolle2021set} the term \emph{set coherence} was used to describe the property of a set of states for which no such unitary exists. { For future reference, we write this notion as a definition.

\begin{definition}[Set coherence~\cite{designolle2021set}]\label{def: set-coherence}
    Let $\{\rho_i\}_{i}$ be a set of quantum states in a finite-dimensional Hilbert space $\mathcal{H}$.
    This set is said to be \emph{set incoherent} if all the states in it pairwise commute, or equivalently, if all these states are simultaneously diagonalizable, i.e.~are represented by diagonal density matrices with respect to the same reference basis.
    Otherwise, we say that the set $\{\rho_i\}_i$ is \emph{set coherent}.
\end{definition}
}
Refs.~\cite{galvao2020quantum,wagner2022inequalities,wagner2022coherence} studied this property by analyzing the simplest of Bargmann invariants, two-state overlaps $\Delta_2(\rho_i,\rho_j) = \text{Tr}(\rho_i\rho_j)$.\footnote{A note on terminology: in the literature, the term overlap is sometimes used to denote the inner product between two pure states, rather than its modulus squared, as we do here. Moreover, for pure states, the overlap becomes equivalent to the fidelity between two states.}

It is sensible to discuss the different \emph{realizations} of an invariant, or of a tuple of invariants.
We say that a value for $\Delta_n$ is realisable if there exists a tuple of quantum states $(\rho_1, \ldots, \rho_n)${,for} some Hilbert space $\mathcal{H}${,} such that $\Delta_n(\rho_1, \ldots, \rho_n)$ has that value.
Typically, there are infinitely many different realizations of a given value; for example, every tuple 
of the form $(\psi,\psi)$ for a pure state $\ket{\psi}$ realises $\Delta_2(\psi,\psi) = 1$. 

More generally, we consider tuples of Bargmann invariants, for example, the three overlaps among three states 
\begin{equation}\label{eq:triple_overlap_C2}
(\Delta_2(\rho_1,\rho_2), \Delta_2(\rho_1,\rho_3), \Delta_2(\rho_2,\rho_3)) \in [0,1]^3 .
\end{equation}
If only two-state overlaps appear, we also refer to the tuple as an overlap tuple.
It is possible to obtain linear constraints bounding the tuples of Bargmann invariants {realizable with sets} of incoherent states, as shown in  Refs. \cite{galvao2020quantum, wagner2022inequalities}.
We briefly sketch the main ideas of this approach. First, it is shown that some $0/1$-valued assignments to the overlaps in a tuple are not realizable by set-incoherent states;
in our running example from Eq.~\eqref{eq:triple_overlap_C2}, {this is the case, for instance, for} the triple of values $(1,0,1)$.
We then characterize the set of realizable, coherence-free tuples as the convex hull of all $0/1$-valued assignments for the overlap tuple that are realizable by set-incoherent states~\cite{wagner2022inequalities}.

{As the convex hull of a finite number of points, this forms a polytope, which can also be characterized via a finite system of facet-defining linear inequalities.}
For bounding triples of overlaps among three set-incoherent states $\rho_1,\rho_2$, and $\rho_3$, the simplest such inequality is
\begin{equation}\label{eq: preliminaries 3-cycle inequalities}
    \Delta_2(\rho_1,\rho_2) + \Delta_2(\rho_1,\rho_3)-\Delta_2(\rho_2,\rho_3) \leq 1 .
\end{equation}
This discussion will be relevant in Sec.~\ref{sec: nonclassicality} when we relate nonclassicality of quasiprobability distributions at some phase-space points to violations of these inequalities.

Ref.~\cite{galvao2020quantum} studied the polytope {$\mCt$} of overlap triples realizable by three set-incoherent states. This polytope is fully characterized  by the inequality of Eq.~\eqref{eq: preliminaries 3-cycle inequalities}, its index relabellings, plus the trivial inequalities $0 \leq \Delta_2 \leq 1$.
We may {also consider} general quantum realizations, yielding the set
\begin{equation}\label{eq:Q3}
    {\mQt} = \left\{\left(\begin{matrix}
        \Delta_2(\rho_1,\rho_2)\\\Delta_2(\rho_1,\rho_3)\\ \Delta_2(\rho_2,\rho_3) 
    \end{matrix}\right) \in [0,1]^3 \colon (\rho_i)_{i=1}^3 \in \mathcal{D}(\mathcal{H})^3\right\}.
\end{equation}
This set was shown numerically to be convex in Ref.~\cite{galvao2020quantum}. It is sometimes referred to as the body of quantum correlations, and parts of it correspond to the elliptope of Ref.~\cite{le2023quantum}.
In Refs.~\cite{galvao2020quantum,le2023quantum} it was shown that points in {$\mQt$} satisfy the inequality
\begin{equation}
    \begin{aligned}
        &\Delta_2(\rho_1,\rho_2)+\Delta_2(\rho_1,\rho_3)+\Delta_2(\rho_2,\rho_3)\\
        &\hspace{1.0cm}-2\sqrt{\Delta_2(\rho_1,\rho_2)\Delta_2(\rho_1,\rho_3)\Delta_2(\rho_2,\rho_3)} \leq 1.
    \end{aligned}
    \label{eq: convex body inequality}
\end{equation}
It holds that {$\mCt \subset \mQt$}, and any point in $\mQt \setminus \mCt$ serves as a witness of set coherence for the triplets realizing the tuple of invariants.

For higher-order invariants, there are also constraints on the values realizable using only set-incoherent states. It was argued in Ref.~\cite{oszmaniec2021measuring} that negativity or imaginarity of Bargmann invariants witness this form of basis-independent coherence for a set of states, namely set coherence.
In Sec.~\ref{sec: nonclassicality} we revisit this idea. 

Set coherence in particular (see Def.~\ref{def: set-coherence}), and noncommutativity of observables in general, is a unitary-invariant property, i.e., it is independent of a choice of reference basis.
As such, deciding whether a set of observables pairwise commute can be framed solely in terms of Bargmann invariants, as discussed in Ref.~\cite{oszmaniec2021measuring}.
This, in turn, demands the characterization of sufficiently many Bargmann invariants, forming a so-called \emph{complete set of invariants}, which solves the problem entirely: knowing the values of all Bargmann invariants in that set suffices for deciding the property of interest.
For the case of noncommutativity, the complete sets of Bargmann invariants and their associated bounds are not known in general.

A significantly simpler task is to select a single invariant, or an incomplete family of invariants, or functions thereof, and use their values to merely \textit{witness} nonclassicality in the form of set coherence.
Each of these witnesses signals a different manifestation of noncommutativity through the possible values attained by them. Because different witnesses usually capture significantly different phenomena, they each deserve individual analysis.
Still, as we discuss in this work, all such nonclassicality witnesses can be framed within the following common terminology.

\begin{definition}[Relational nonclassicality]\label{def: relational nonclassicality}
Consider sets $\{\rho_i\}_{i}$ of states in a finite-dimensional Hilbert space $\mathcal{H}$.
A property of such sets is said to be \emph{relational} when it is captured by a function $f(\{\rho_i\}_i)$ which is \emph{unitarily invariant} in that
$f(\{\rho_i\}_i) = f(\{U\rho_i U^\dagger\}_i)$ for all unitaries $U$.
We term \emph{relational nonclassicality} the realization of values of such a relational property $f$ that are not realizable with set-incoherent states.
\end{definition}
Clearly, Bargmann invariants provide a specific case of function $f$ in the above definition. Since this is a particularly relevant case for the analysis in this work, we may refer to the appearance of nonclassical values as \emph{Bargmann nonclassicality}.

Next, we start presenting our results, building on the following three aspects of Bargmann invariants.
First, we recognize functions of Bargmann invariants that correspond to different concepts we have reviewed, e.g., OTOCs, weak values, and KD distributions. Following that, we point out that cycle test circuits can be used to estimate these quantities without the need for post-selection. Then, we show that any nonclassical property of the functions of invariants we consider can be linked to the nonclassicality of the underlying invariants themselves.

\section{Bargmann invariants as a unifying concept}\label{sec: unifying}

In this section, we argue that Bargmann invariants serve as a unifying concept for
the discussion of nonclassicality in KD distributions and weak values.
We show how both are functions of third-order invariants.
The same holds for extended KD distributions. A practical by-product of this unified description is the realization that cycle test circuits can be used to measure weak values, KD distributions, and the spectrum of quantum states, as we discuss in Sec.~\ref{sec: circuits}.

Starting with the definition of weak value in Eq.~\eqref{eq:wv-expanded}, we recognize that
\begin{equation}\label{eq: wk-value as bi}
    \langle A \rangle_w = \sum_{a \in { \text{Spec}}(A)} a \frac{\Delta_3(\phi,a,\psi)}{\Delta_2(\phi,\psi)}. 
\end{equation}
Thus, we regard weak values as functions of second- and third-order invariants, encoding the relational information between the basis associated with observable $A$, and the two reference states $\ket{\phi}$ and $\ket{\psi}$.
While weak values can also be viewed as averages of a conditional KD distribution~\cite{lostaglio2022kirkwood}, this requires the introduction of a second complete basis of reference into the weak value setup. Therefore, Eq.~(\ref{eq: wk-value as bi}) expresses the weak value in terms of an economical function of only the strictly necessary unitary invariants, dispensing with the irrelevant, arbitrary choice of a second reference basis.

As weak values are functions of Bargmann invariants, they can be estimated with no need for post-selection, up to any desired accuracy, using the cycle test circuits of \cite{oszmaniec2021measuring}. This shows that weak values need not be defined with respect to weak measurement schemes, as they were historically introduced. Eq. (\ref{eq: wk-value as bi}) provides a natural description of the weak value as an average of the eigenvalues of observable $A$, weighted by a quasiprobability distribution $\Delta_3(\phi,a,\psi)/\Delta_2(\phi,\psi)$ in terms of states $\ket{\phi},\ket{\psi}$. Eq. (\ref{eq: wk-value as bi}) also shows it is not strictly necessary to view the weak value as merely the first term in an infinite series describing the shift of a classical pointer in a weak measurement scheme \cite{dressel2014colloquium}. Together with our analysis presented in Sec.~\ref{sec: circuits}, this strengthens the case for viewing weak values as interesting constructions of their own, as argued, e.g., in Ref. \cite{cohen2018determination}.

Similarly, the KD quasiprobability distribution can be {also} shown to be a relational property of the states involved, and thus a function of Bargmann invariants. The definition of the KD distribution at a single phase-space point, given by Eq.~\eqref{eq: KD-definition}, shows explicitly that $\xi(\rho \vert i,f) = \Delta_3(i,\rho,f)$, i.e., that its value is a third-order Bargmann invariant. The  KD distribution encodes the relational information about the quantum state $\rho$ we want to describe, together with two complete Hilbert space bases $\{\ket{i}\}_{i\in I}$ and $\{\ket{f}\}_{f\in F}$.

Consequently, the question of characterizing the nonclassicality of KD distributions (see Def.~\ref{def: KD nonclassicality}) boils down to characterizing the nonclassicality of individual third-order invariants, which we address in Sec.~\ref{sec: nonclassicality}.
As pointed out there, this novel perspective may prove useful in the future for connecting KD-nonclassicality with contextuality, an important open problem in quantum foundations.
From a practical point of view, this simple understanding allows us to propose in Sec.~\ref{sec: circuits} variations of cycle test circuits to measure KD distributions with no need for post-selection, in contrast to previous proposals based on weak measurements.

The observation above carries over to extended KD distributions relative to $n$ bases, which are expressible as $(n+1)$th order Bargmann invariants.
More generally still, one can express an extended KD distribution of the form given by Eq.~\eqref{eq: KD extended}, which is relative to arbitrary PVMs (not necessarily composed only of rank-one projectors, i.e., bases).
That additionally involves a factor that accounts for multiplicity.
Setting $\rho^i_k \coloneqq \frac{\Pi^i_k}{\text{Tr}(\Pi^i_k)}$, one has\footnote{This expression was suggested to us by one of the anonymous referees.}
\begin{equation}\label{eq: KD extended as Bargmann Referee}
    \xi(\rho \vert k_1,\dots,k_n)= \Delta_{n+1}(\rho,\rho^1_{k_1},\dots,\rho^n_{k_n})\prod_{i=1}^n \text{Tr}(\Pi^i_{k_i}).
\end{equation}
This expression shows that at each phase-space point $(k_1, \ldots, k_n)$ the extended KD distribution describes relational properties between the probed state $\rho$ together with all the states $\{\rho^i_{k_i}\}_i$. Any nonclassical property of the quasiprobability distribution must be due to the nonclassical properties of such higher-order Bargmann invariants.

A number of constructions that are derived from KD distributions are, as a result, also functions of Bargmann invariants. The extended KD distribution used in Eq.~\eqref{eq: ps QFI} is simply the collection of all invariants of the form 
\begin{equation}
    \xi(\rho_\theta \vert i,i',f) = \Delta_4(i,\rho_\theta,i',f).  
\end{equation}
This leads to the representation of the post-selected quantum Fisher information $\mathcal{I}^{\text{ps}}$ purely in terms of Bargmann invariants:
\begin{equation}\label{eq: ps QFI as Bargmann}
 \mathcal{I}^{\text{ps}} = 4\sum_{i,i',f}\lambda_i\lambda_{i'}\frac{\Delta_4(i,\rho_\theta,i',f)}{\sum_f \Delta_2(\rho_\theta,f)}-4\left \vert \sum_{i,i',f}\lambda_i \frac{\Delta_4(i,\rho_\theta,i',f)}{\sum_f \Delta_2(\rho_\theta,f)}\right \vert^2   .
\end{equation}
Similarly, the quasiprobability distribution behind the OTOC is an extended KD distribution:
\begin{equation*}
    \tilde{A}_\rho = \Delta_5(\lambda_{w_3}, \lambda_{v_2}^{(t)}, \lambda_{w_2},\lambda_{v_1}^{(t)},\rho(t)),
\end{equation*}
where we use the simplified notation $\ket{\lambda_{w_l}} \defeq \ket{w_l,\lambda_{w_l}}$,
$\ket{\lambda_{v_l}^{(t)}} \defeq U(t) \ket{v_l,\lambda_{v_l}}$, and $\rho(t) \defeq U(t)\rho U(t)^\dagger$.

Finally, it is important to remember the reasoning behind invariant theory: every relational property between states can be expressed in terms of a set {of} Bargmann invariants. This holds true for the spectrum of a quantum state, with the peculiarity that here we must consider invariants describing a (multi)set with $n$ copies of a single quantum state.
Estimating quantities of the form 
\begin{equation}
    \Delta_n(\rho,\dots,\rho) = \text{Tr}(\rho^n)
\end{equation}
for sufficiently large $n$, one can learn several interesting properties of a single state, in particular its spectrum, as we recall later in Sec.~\ref{sec: spectrum}.

There are two upshots from the unified perspective in terms of Bargmann invariants established in the present section:
\begin{itemize}[nosep,leftmargin=*]
\item First, the fact that KD distributions, OTOCs, and weak values are functions of Bargmann invariants allows for a unified discussion of nonclassicality in terms of the values taken by them, in the sense of Def.~\ref{def: relational nonclassicality}. We use this insight in Sec.~\ref{sec: nonclassicality}, where we establish connections between the resource theory of set coherence (discussed above, see Def.~\ref{def: set-coherence}) and known results on the nonclassicality of quasiprobability distributions.
\item The fact that all the constructions presented can be written as functions of Bargmann invariants means that cycle test circuits can be employed to estimate all of them, as we discuss in the next section. 
\end{itemize}

\section{Quantum circuits for measuring weak values and Kirkwood--Dirac quasiprobability distributions}\label{sec: circuits}

\begin{figure*}
    \includegraphics[width=0.9\textwidth]{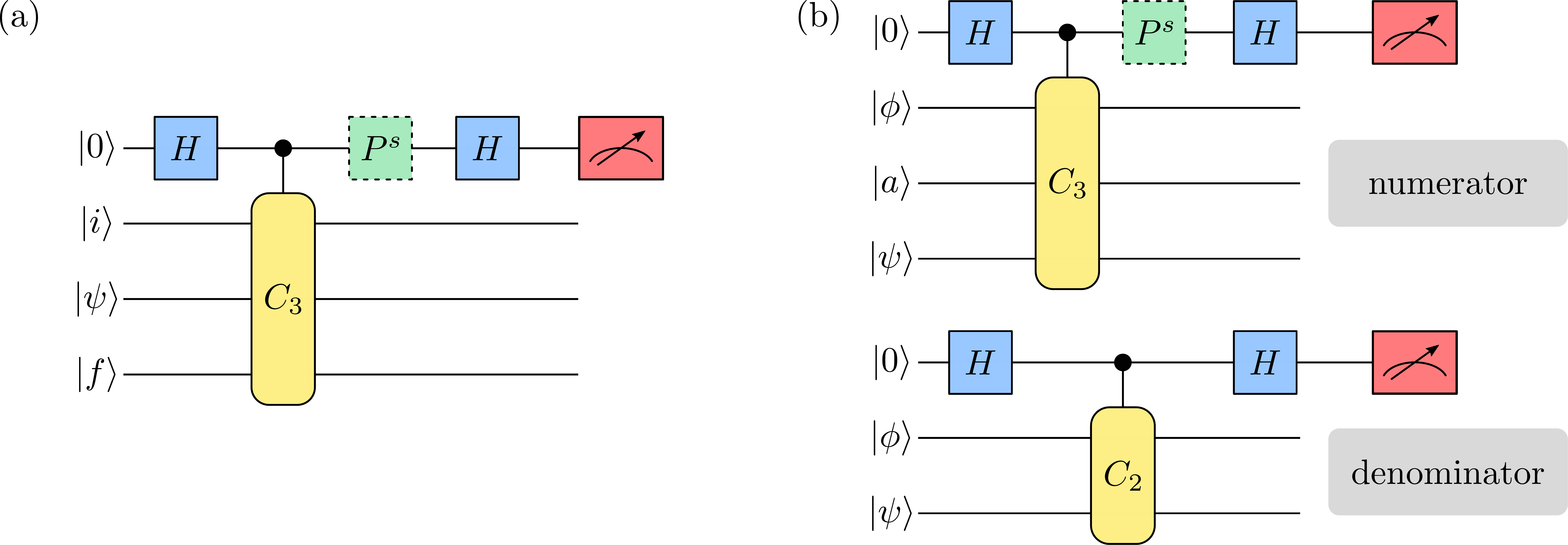}
    \caption{\textbf{Quantum circuits for measuring the third-order Bargmann invariants characterizing (a) Kirkwood--Dirac quasiprobability distribution and (b) weak values.} $P^s$ corresponds either to applying a  phase-gate $P = \text{diag}(1,i)$ in case we want to estimate the imaginary part of the invariant ($s=1$), or an identity $P^0 = \mathbb{1} =\text{diag}(1,1)$ in case we want to estimate the real part ($s=0$). Circuit (a) assumes that one can prepare basis states from the two bases $\{\vert i \rangle \}_i, \{\vert f \rangle \}_f$ defining the KD distribution, as well as the state $\vert \psi \rangle$ to be characterized. Two variations of the circuit estimate the real and imaginary parts of the value of the KD distribution at a chosen phase-space point. For the circuit in (b), we require input states that are eigenvectors $\{\vert a \rangle \}_a$ of an observable $A$, as well as the pre- and post-selection states used in the standard weak measurement protocol. On top of that, anomalous weak values result from precisely selecting the overlaps $\vert \langle \phi \vert \psi \rangle \vert^2$, which can be directly measured using the two-state SWAP test.}
    \label{fig:third_order_invariant_circuit}
\end{figure*}

\subsection{Kirkwood--Dirac quasiprobability}\label{subsec: KD distribution}

Fig.~\ref{fig:third_order_invariant_circuit}(a) presents a circuit for measuring the value of the KD quasiprobability associated with pure state $\ket{\psi}$ at a given phase-space point $(i,f)$: $\xi( \psi \vert i,f) = \langle f \vert i \rangle \langle i \vert \psi \rangle \langle \psi \vert f \rangle = \Delta_3(i,\psi,f)$.
The circuit, based on a cycle test \cite{oszmaniec2021measuring}, can be straightforwardly generalized to measure any value $\text{Tr}(\Pi_{k_1}^1\dots\Pi_{k_n}^n\rho)$ of extended KD distributions; cf.~Secs.~\ref{subsec: post-selected QFI} and~\ref{subsec: OTOC}.
While we here focus on the case of pure states, the same architecture is capable of characterizing $d$-dimensional mixed input states.

The circuit shown in Fig.~\ref{fig:third_order_invariant_circuit}(a) has four wires or systems: the first is an auxiliary qubit, while the remaining are $d$-dimensional systems described by the Hilbert space $\mathcal{H}$ under study.
Given a fixed choice of $i$ and $f$, the circuit initializes by preparing the product state $\vert 0 \rangle \otimes \vert i \rangle \otimes \vert \psi \rangle \otimes \vert f \rangle$.
The specific order is relevant as the value of the Bargmann invariant changes for different choices of orders, due to noncommutativity of the rank-one projectors involved.
The circuit then proceeds to put the auxiliary qubit in a superposition and perform a cascade of controlled-SWAP operations between the remaining states, as mentioned in Fig.~\ref{fig:cycle-test preliminaries}.

Assuming that one can prepare the states $\ket{\psi}$, $\ket{i}$, and $\ket{f}$, the circuit in Fig.~\ref{fig:third_order_invariant_circuit}(a)
measures the (real and imaginary part{s} of) the third-order Bargmann invariant defining $\xi( \psi \vert i,f)$, to precision $\varepsilon$, with high probability, using $O(1/\varepsilon^2)$ samples of the triplet of states~\cite{quek2022multivariate},
and therefore constant order of samples of the state $\ket{\psi}$.
Consequently, the protocol can estimate the entire KD distribution of a state in any finite dimension $d$ to precision $\varepsilon$ with high probability using a total of $\tilde{O}(d^2/\varepsilon^2)$ samples, where the tilde hides $\log(d)$ terms.

Assuming that $\langle i \vert f \rangle \neq 0$,  learning the full KD distribution of a state can be used to perform full tomography~\cite{johansen2007quantum}.
Tomography via KD distribution is neither better nor worse in terms of sample and measurement complexity than textbook quantum state tomography using Pauli measurements. The latter requires $O(d^4/\varepsilon^2)$ samples~\cite{nielsen2002quantum,flammia2012quantum} to achieve a precision $\varepsilon$ relative to the trace distance with high probability.
Now, if we learn the value of the KD distribution of a state $\rho$ at every phase-space point, we can completely reconstruct the state  by
\begin{equation}\label{eq:rho from its KD}
    \rho = \sum_{(i,f)\in I \times F} \frac{\vert i \rangle \langle f \vert}{\langle i \vert f \rangle } \xi(\rho \vert i,f)
\end{equation}
(in fact, this equation holds for any { bounded operator, not just density matrices}~\cite{yunger_Halpern2018quasiprobability}).
We count the number of samples and measurements required to perform tomography in this fashion, where a measurement consists of one run of the cycle test circuit to estimate some $\xi(\rho \vert i,f)$.
In order to estimate the value at each phase-space point up to precision $\varepsilon$ with probability $1-\delta$,\footnote{Whenever we say that our learning task is successful with high probability, we mean that we are considering $1-\delta$ with $\delta>0$ a fixed small number.} we need $O(\ln(2/\delta)/\varepsilon^2)$ samples, a bound provided by the Hoeffding concentration inequality~\cite{hoeffding1994probability,shalev2014understanding}.
Assuming that we want to estimate at every phase-space point with the same precision $\varepsilon$ and probability $1-\delta$, the number of samples is of the order of $\tilde{O}(d^2/\varepsilon^2)$, where we hide the dependence on $\delta$,
as it is commonly assumed to be fixed.

To compare this scaling in samples with standard tomography, we must quantify the success of the procedure in terms of the distance induced by the $1$-norm.
Write $\hat{\xi}$ for the estimated KD distribution from the cycle test and $\hat{\rho}$ for the corresponding estimate of the state $\rho$, obtained via Eq.~\eqref{eq:rho from its KD}.
Assuming that the values of KD are estimated with precision $\varepsilon_1$ at every phase-space point, i.e.,
that $\left\vert \xi(\rho \vert i,f) - \hat{\xi}(\rho \vert i,f)\right\vert\leq \varepsilon_1$
for all $(i,f)\in I \times F$, we then have
\begin{equation}
    \begin{aligned}
        \Vert \rho - \hat{\rho}\Vert_1 &= \left\Vert \sum_{i, f}\frac{\vert i \rangle \langle f \vert}{\langle i \vert f \rangle }\left(\xi(\rho\vert i,f) - \hat{\xi}(\rho\vert i,f)\right)\right\Vert_1\\
        &\leq \varepsilon_1 \left\Vert \sum_{i, f}{ 
        \frac{\vert i \rangle \langle f \vert}{\langle i \vert f \rangle }
        }\right\Vert_1 = \varepsilon_1 \left\Vert \mathbb{1}_{d \times d}
        \right\Vert_1 = \varepsilon_1 d,
    \end{aligned}
\end{equation}
using that the rank-one operators $\vert i \rangle \langle  f\vert / \langle i \vert f \rangle$ form an  orthonormal basis of $\mathcal{B}(\mathcal{H})$.
Therefore, in total, one needs $\tilde{O}(d^4/\varepsilon^2)$ samples to perform full KD state tomography $\varepsilon$ close in the trace distance.

However, from the above calculation, it is clear that performing full tomography is \textit{not necessary} for achieving arbitrarily good precision in learning the KD distribution of a given state $\rho$. Therefore, in scenarios where one is interested in learning solely the KD distribution $\{\xi(\rho \vert i,f)\}_{(i,f) \in I \times F}$, 
the required number of $\Tilde{O}(d^2/\varepsilon^2)$ samples constitutes a polynomial advantage in sample complexity with respect to performing Pauli-based full tomography, which requires $O(d^4/\varepsilon^2)$ samples, or even efficient incoherent full tomography~\cite{chen2022tight}, which requires $O(d^3/\varepsilon^2)$ samples.

\subsection{Weak values}\label{subsec: weak values}

Fig.~\ref{fig:third_order_invariant_circuit}(b) presents a circuit for estimating the weak value $A_w$ of an observable $A$ with respect to pre- and post-selection states $\psi$ and $\phi$.
It works similarly to the cycle test described in Sec.~\ref{subsec: KD distribution}.
The main difference in the estimation of the weak value stems from the fact that it requires running two different circuits.
The first, used to estimate the numerator of the expression for the weak value in Eq.~\eqref{eq: wk-value as bi}, works just like the one shown in Fig.~\ref{fig:third_order_invariant_circuit}(a), while the second corresponds to a simple SWAP-test for estimating the denominator of Eq.~\eqref{eq: wk-value as bi}.
Such a process estimates a single summand $\Delta_3(\phi,a,\psi)/\Delta_2(\phi,\psi)$.
To obtain the entire weak value, the procedure must be repeated for each eigenvector, and the estimated quantities then summed after being weighted by their corresponding eigenvalue.

The whole procedure assumes that we are able to not only prepare the selected states $\ket{\phi}$ and $\ket{\psi}$, but also each eigenvector $\ket{a}$ of $A$.
This could be feasible given the mathematical description of the spectral decomposition of the operator $A$. But even in a more operational setup, it could be achieved, for example, if assuming the ability to perform a (strong) measurement of $A$ on the totally mixed state.

The nonclassicality of weak values, witnessed by anomalous negative or nonreal values, can also be directly estimated using the above circuit and analyzed through the lens of Bargmann invariant nonclassicality.
This result dissociates once more weak values from weak measurements, similarly to the discussions in Refs.~\cite{cohen2018determination,johansen2007reconstructing,vallone2016strong}.

The lack of post-selection might suggest that the sample complexity for measuring weak values using the cycle test improves over the standard weak measurement protocol. The following theorem shows that this is not the case.
In fact, if the overlap between the pre- and post-selected states is small, our bounds show that the cycle test performs worse.

\begin{theorem}\label{theorem: sample complexity}
    For estimating the weak value $A_w$ with precision $\varepsilon$ and high probability, one needs $N^{(weak)} = O(|A_w|^2/\varepsilon^2 \Delta_2(\phi,\psi))$ samples when using the standard weak measurement scheme, with $\Delta_2(\phi,\psi) = \vert \langle \phi \vert \psi \rangle \vert^2$ being the probability of successful post-selection.
    Moreover, one needs $N^{(cycle-test)} = O(|A_w|^2/\varepsilon^2\Delta_2(\phi,\psi)^2)$ samples when using {the procedure based on} cycle test{s}.
\end{theorem}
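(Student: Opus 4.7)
The plan is to treat the two bounds separately, since the two protocols differ fundamentally: the weak measurement scheme has intrinsic error from the pointer readout plus a multiplicative cost from post-selection, whereas the cycle-test-based scheme has no post-selection but estimates weak values as a ratio of two independently sampled Bargmann invariants, inheriting error from error propagation in a quotient.

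For the weak measurement bound, I would start from the standard Gaussian-pointer analysis (see~\cite{dressel2014colloquium}): each post-selected run yields a pointer shift whose mean is proportional to $g \cdot \text{Re}(A_w)$ (similarly for the imaginary part via a conjugate pointer) and whose variance is essentially the pointer variance $\sigma_q^2$, with corrections that in the small-$g$ regime scale as $g^2 |A_w|^2$. After rescaling the pointer shift to read off $A_w$, the per-sample variance of the estimator for $A_w$ is $O(|A_w|^2)$, so $N_{\text{ps}} = O(|A_w|^2/\varepsilon^2)$ successful post-selected runs suffice by Chebyshev or a Hoeffding-type bound. Since each preparation succeeds with probability $\Delta_2(\phi,\psi)=|\langle\phi|\psi\rangle|^2$, the total sample count is $N^{(\text{weak})} = O(|A_w|^2/(\varepsilon^2\,\Delta_2(\phi,\psi)))$.

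For the cycle-test bound, I would write $A_w = \mathcal{N}/\mathcal{D}$ with $\mathcal{N} = \sum_{a\in\mathrm{Spec}(A)} a\,\Delta_3(\phi,a,\psi)$ and $\mathcal{D} = \Delta_2(\phi,\psi)$, and estimate $\mathcal{N}$ and $\mathcal{D}$ independently via the Hadamard-test-based circuits of Fig.~\ref{fig:third_order_invariant_circuit}. Each Bargmann invariant is the expectation value of a $\{\pm 1\}$-valued random variable, so Hoeffding's inequality gives that $M$ samples per invariant produce an estimate within $\delta$ with high probability whenever $M = O(1/\delta^2)$. Error propagation for the quotient yields
\begin{equation*}
    \bigl|\hat{A}_w - A_w\bigr| \;\lesssim\; \frac{|\hat{\mathcal{N}}-\mathcal{N}|}{\mathcal{D}} \,+\, \frac{|A_w|\,|\hat{\mathcal{D}}-\mathcal{D}|}{\mathcal{D}},
\end{equation*}
so to force the left-hand side below $\varepsilon$ it suffices to estimate $\mathcal{N}$ to precision $\varepsilon\,\Delta_2(\phi,\psi)$ and $\mathcal{D}$ to precision $\varepsilon\,\Delta_2(\phi,\psi)/|A_w|$. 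The dominant requirement is the denominator: inverting Hoeffding gives $O(|A_w|^2/(\varepsilon^2\,\Delta_2(\phi,\psi)^2))$ samples, matching the theorem statement.

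The main obstacle, in my view, is the weak-measurement side rather than the cycle-test side. The cycle-test argument is elementary once one commits to error propagation in a quotient and invokes Hoeffding. In contrast, justifying the $|A_w|^2$ factor in $N^{(\text{weak})}$ requires a careful choice of regime: one must specify what ``weak'' means (the coupling $g$ small enough that the first-order Aharonov--Albert--Vaidman formula is valid), argue that the optimal $g$ saturates this regime, and propagate the pointer statistics correctly so that the $\sigma_q^2$ and $g$ dependencies cancel and leave only $|A_w|^2$. Finally, I would add a brief remark observing that the two bounds coincide (up to constants) in the high-overlap regime $\Delta_2(\phi,\psi) = \Theta(1)$, but that whenever $\Delta_2(\phi,\psi)\ll 1$, the cycle test is quadratically worse in $\Delta_2(\phi,\psi)$, confirming that the absence of post-selection does not by itself yield a sample-complexity advantage.
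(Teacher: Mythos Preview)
Your proposal is correct and follows essentially the same route as the paper: Hoeffding bounds on the $\pm1$-valued cycle-test outcomes together with error propagation in the quotient $\Delta_3/\Delta_2$ for the cycle-test bound, and Gaussian pointer statistics with the weak-coupling constraint $\sigma \gg g|A_w|$ (which, after rescaling by $g$, is exactly what produces the $|A_w|^2$ per-sample variance) plus the post-selection factor $\Delta_2(\phi,\psi)$ for the weak-measurement bound. The only cosmetic differences are that the paper restricts to rank-one $A=|a\rangle\langle a|$ and first takes equal sample counts $N^{(2)}=N^{(3)}$ (then separately verifies that an unequal split does not improve the scaling), whereas you keep general $A$ and allocate samples term-by-term; the resulting orders are identical.
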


\begin{figure*}
    \includegraphics[width=1\textwidth]{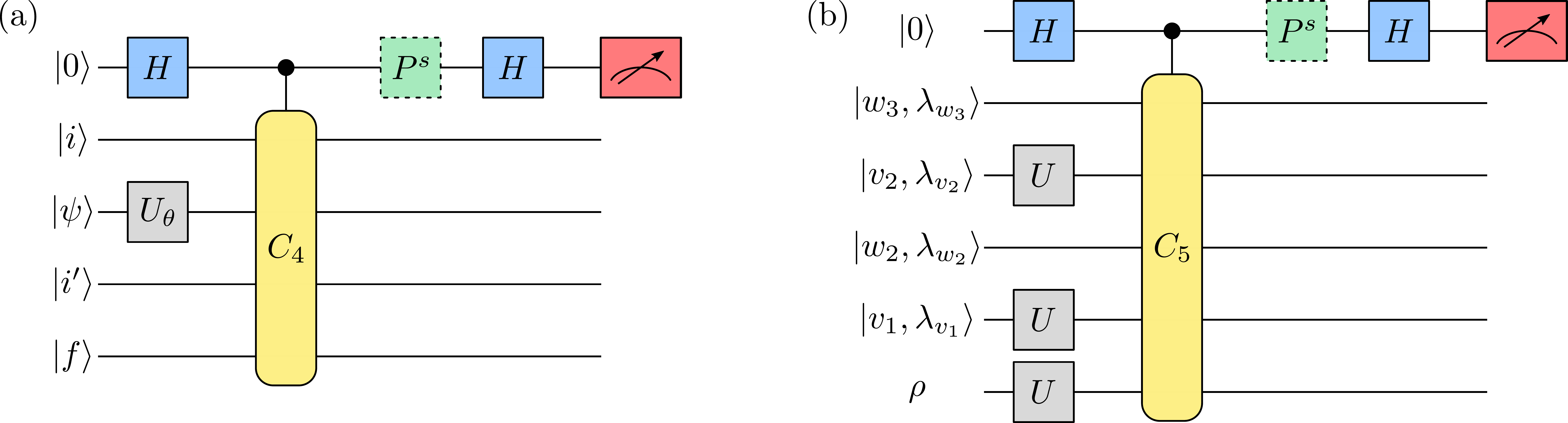}
    \caption{\textbf{Bargmann invariant circuits for measuring the extended Kirkwood--Dirac quasiprobability distributions behind (a) the post-selected quantum Fisher information and (b) the out-of-time-ordered correlator.} $P^s$ is as described in Fig.~\ref{fig:third_order_invariant_circuit}. $C_4$ and $C_5$ correspond to the circuit implementation of the unitary discussed in Sec.~\ref{sec: Bargmann invariants}. Part (a) shows how one can measure the post-selected quantum Fisher information using the cycle test and applying a unitary operator $U_\theta = e^{-i\theta I}$ over the state $\vert\psi\rangle$ considered in the protocol.  It is assumed to be possible to prepare, in parallel, states $\vert f \rangle$ associated with the post-selection and the eigenstates related to the operator $I$. Part (b) shows the protocol for estimating the out-of-time-ordered correlator from the related Bargmann invariants. A possibly scrambling unitary $U$ is applied to $\rho$ and to the projectors related to the eigenspaces of $v$ and $w$ defined by $\Pi_v^V$ and $\Pi_w^W$, respectively. It is possible to avoid {an effective} backward{s}-in-time evolution to estimate the OTOC, for any pair of observables $V$ and $W$.\label{fig: PPA and OTOC}}
\end{figure*}

We formally prove this result in Appendix \ref{appendix: cycle vs weak}.
Intuitively, the difference in the number of required samples comes from the fact that the estimate using cycle tests is constructed from the estimation of two quantities---a second- and a third-order invariants---while the standard weak measurement scheme directly estimates the weak value from the pointer's position.
Since to observe anomalous weak values in most relevant situations---in particular in weak-value amplification schemes---the overlap $\Delta_2(\phi,\psi)$ needs to be close to zero,\footnote{However, note that weak values with a small imaginary part do not require small overlaps between the pre- and post-selection states, and are also anomalous weak values. A simple example is the weak value of $\sigma_x$ for a system pre-selected in $|z-\rangle$ and post-selected in $|y+\rangle$, which is $(\sigma_x)_w = -i$. A similar argument can be made when the real part of a weak value lies slightly outside the interval determined by the spectrum of the operator.} this difference in sample complexity makes the protocol suboptimal.
Therefore, one needs to look for specific cases of interest for which the cycle test may be more relevant.
A clear example is provided by experimental situations in which one utilizes purely \textit{imaginary} weak values~\cite{brunner2010measuring, li2011ultrasensitive, dressel2012significance}.
In these cases, simply witnessing imaginarity boils down to observing imaginary third-order invariants related to the numerator of the weak value $A_w$, a task that (with high probability and to precision $\varepsilon$) requires only $O(d/\varepsilon^2)$ samples.

\subsection{Post-selected quantum Fisher information}\label{subsec: post-selected QFI}

We have seen that KD distributions can be extended { to involve a larger number of bases (or PVMs)} as described by Eq.~\eqref{eq: KD extended}, and that such constructions have appeared in the literature expressing important quantities behind OTOCs or quantum advantage in post-selected metrology{, for example}.
Their description effectively requires Bargmann invariants of order higher than $3$; see Eq.~\eqref{eq: KD extended as Bargmann Referee}.
In Fig.~\ref{fig: PPA and OTOC}, we describe circuits specifically targeted to measuring such quantities.

The circuit in Fig.~\ref{fig: PPA and OTOC}(a) measures the fourth-order Bargmann invariant in the expression for the post-selected quantum Fisher information; see Eq.~\eqref{eq: ps QFI as Bargmann}.
This quantity is of interest, for example, in proof-of-principle experiments such as that performed in Ref.~\cite{lupu_Gladstein2022negative}.
We discuss in detail how to make such a measurement using this simple quantum circuit.

Recall the metrological scenario for parameter estimation considered in Sec.~\ref{subsec: prelminaries post-selected QFI}, and the notation therein.
The circuit in Fig.~\ref{fig: PPA and OTOC}(a) consists of one (auxiliary) qubit and four systems described by the $d$-dimensional Hilbert space of interest.
It is initialized by preparing the product state
\begin{equation}
    \label{eq: prep post QFI circ}
    \ket{0} \otimes \ket{i} \otimes \ket{\psi} \otimes \ket{i'} \otimes \ket{f},
\end{equation}
where $\ket{i}$, $\ket{i'}$ are elements of an eigenbasis of (the Hamiltonian) $I$ and $\ket{f}$ an element of a basis for the range of the (post-selecting) projector $F$.
The order in which those states are considered matters, being related to the order in the KD distribution presented in Eq.~\eqref{eq: KD-PPA}.
The circuit then proceeds by applying a local unitary $U_\theta = e^{-i\theta I}$ on the state $\ket{\psi}$, which encodes information about the value of $\theta$.

A cycle test is then performed over the four states, similarly to the procedures discussed in Sec.~\ref{subsec: KD distribution} or Sec.~\ref{subsec: weak values}.
The auxiliary qubit is put in a superposition and three controlled SWAP operations are applied: first between $\ket{i}$ and $\ket{\psi_\theta}$, then between $\ket{\psi_\theta}$ and $\ket{i'}$, and finally between $\ket{i'}$ and $\ket{f}$.
As before, to estimate the real part of the invariant from Eq.~\eqref{eq: KD-PPA} one simply measures the auxiliary qubit in the $X$ basis, while to estimate the imaginary part of the invariant one first applies a phase gate $P = \text{diag}(1,i)$ and then measures the auxiliary qubit in the $X$ basis.

The description above corresponds to a single run of the protocol, which estimates a specific value of the extended KD distribution of $\ket{\psi_\theta}$ for a given phase-space point.
To estimate the entire function $\mathcal{I}^{\text{ps}}$, one must generate statistics for each preparation of the form in Eq.~\eqref{eq: prep post QFI circ} for all the labels $i$, $i'$, and $f$.
The basis $\{\vert i\rangle\}_i$ has $d$ elements, corresponding to the Hilbert space dimension, while $\{\vert f \rangle\}_f$ has fewer labels than $d$.

A note on the measurement and sample complexity related to estimating the post-selected quantum Fisher information in particular, and the extended KD distribution more generally, is necessary.
Estimating $\mathcal{I}^{\text{ps}}$ by this method requires $\tilde{O}(d^3/\varepsilon^2)$ samples and measurements, as all the information from the entire extended KD distribution is relevant.
This does not represent an information complexity gain to experimentally probe this quantity as there are other approaches that require fewer samples.
For instance, as we will recall in Sec.~\ref{sec: spectrum}, some techniques to estimate the spectrum (which  are also {used} to perform full tomography) solve the problem using $O(d^2/\varepsilon^2)$ samples.
Yet, due to the intricate aspects of performing ideal protocols such as these, the circuit from Fig.~\ref{fig: PPA and OTOC}(a) has the potential advantage of providing a simple and fixed structure for the measurement to be performed.

It is worth mentioning that, as in the experiment performed in Ref.~\cite{lupu_Gladstein2022negative}, one may be interested in estimating a single phase-space point of the extended KD distribution, in order to monitor the presence of negative values.
In such a case, provided we have access to some triplet of states $\{\vert i\rangle, \vert i'\rangle, \vert f \rangle\}$ with $i\neq i'$, estimating a single phase-space point has a simple scaling in terms of complexity relative to the one described in Sec.~\ref{subsec: KD distribution}.
{To estimate } any extended KD phase-space point value $\xi(\rho \vert \Pi_{k_1},\dots, \Pi_{k_n})$. one needs $\tilde{O}(1/\varepsilon^2)$ samples, while {a number of samples of} order $\tilde{O}((n+1)d^2/\varepsilon^2)$ {is required} to perform full tomography of all the states and projectors involved in the calculation.

As discussed in Ref.~\cite[Supp. Note 2]{arvidsson_Shukur2020quantum}, imaginary values of the post-selected quantum Fisher information do not contribute to metrological advantages since they cannot increase the second term of $\mathcal{I}^{\text{ps}}$ as described by Eq.~\eqref{eq: ps QFI}. Hence, if one is only interested in witnessing the presence of the resource, namely negativity, one can use the circuit version of Fig.~\ref{fig: PPA and OTOC}(a) with $s=0$ (i.e., $P^s=\mathbb{1}$) to estimate the real part of the invariant. 

{To conclude this section, we remark that our circuits can be applied, after straightforward adaptations, to estimate more general definitions of the post-selected quantum Fisher information \textit{matrix}, relevant for multi-parameter estimation, e.g., from Ref.~\cite[Eq.~(37)]{jenne2022unbounded}.}

\subsection{Out-of-time-ordered correlators}\label{subsec: OTOC}

The circuit in Fig.~\ref{fig: PPA and OTOC}(b) applies the cycle test to measure OTOCs. From Eq.~\eqref{eq: OTOC as invariants}, we may use the fact that $U$ is unitary and that the equivalence between the Heisenberg and Schr\"odinger pictures to rewrite the Bargmann invariant defining the OTOC as
\begin{equation}
    \begin{aligned}
        \text{Tr}(&\Pi_{w_3}^{W(t)}\Pi_{v_2}^V\Pi_{w_2}^{W(t)}\Pi_{v_1}^V\rho)\\
        &=\text{Tr}(U^\dagger\Pi_{w_3}^{W}U\Pi_{v_2}^VU^\dagger\Pi_{w_2}^{W}U\Pi_{v_1}^V\rho)\\
        &= \text{Tr}(\Pi_{w_3}^{W}U\Pi_{v_2}^VU^\dagger\Pi_{w_2}^{W}U\Pi_{v_1}^VU^\dagger U\rho U^\dagger)\\
        &= \text{Tr}(\Pi_{w_3}^{W}\mathcal{U}(\Pi_{v_2}^V)\Pi_{w_2}^{W}\mathcal{U}(\Pi_{v_1}^V)\mathcal{U}(\rho)),
    \end{aligned}
\end{equation}
where $\mathcal{U}(\cdot) \defeq U (\cdot)U^\dagger$ is the dynamical evolution in the Schr\"odinger picture. This allows us to have the cycle test that evaluates the OTOC without the need for applying { the reversed} evolution $\mathcal{U}^\dagger$. As compared with the protocols for estimating OTOCs reported in Ref.~\cite{yunger_Halpern2018quasiprobability}, not only does the cycle test avoid {an effective} backward{s}-in-time evolution, { it} also {does away with the need for} the auxiliary qubit to remain coherent during the scrambling dynamics.
The protocol estimates the extended KD distribution behind the OTOC in either its coarse-grained version or the fine-grained description of Eq.~\eqref{eq: OTOC fine-grained}, which carries nonclassical information that brings further relevant tools to the analysis of the scrambling dynamics~\cite{gonzalez2019out,alonso2022diagnosing}.

These observations place the cycle test as an interesting new paradigm for estimating OTOCs, in comparison with all protocols reported in Ref.~\cite[Table I, pg. 12]{yunger_Halpern2018quasiprobability}.
Two important drawbacks of this protocol are:
the fact that it requires quantum information processing of possibly many degrees of freedom in parallel,
and the assumption that one is capable of preparing the states associated with the eigenprojectors {$\Pi^W_w$ and $\Pi^V_v$}.
{Note that the latter assumption can be reduced to being able to measure the observables $W$ and $V$ (on any state)}

A similar research direction was pursued in Ref.~\cite{mitarai2019methodology}, where the authors analyzed when indirect metrological inference can be replaced by direct measurement processes, focusing on the Hadamard test. Our approach allows us to make claims that are similar in nature. For instance, Bargmann invariants can \textit{also} be used to directly measure OTOCs.

\subsection{Comparison with other methods for estimating the Kirkwood--Dirac quasiprobability distribution} \label{sec: comparison}

We have seen how to use cycle test circuits to estimate the KD distribution in Sec.~\ref{subsec: KD distribution}, weak values in Sec.~\ref{subsec: weak values}, and also two specific examples of interest in Secs.~\ref{subsec: post-selected QFI} and~\ref{subsec: OTOC}.
We proceed to situate our measurement scheme among those reviewed in Refs.~\cite{lostaglio2022kirkwood,yunger_Halpern2018quasiprobability}.
{
Table~\ref{tab: comparison} compares our proposal with several protocols found in the literature.}
The first column presents the references that introduced the measurement scheme reviewed, which in some cases were not directly linked to estimating the KD distribution but {were} later shown to serve for this purpose in Ref.~\cite{lostaglio2022kirkwood}.
The second column shows the output of the protocol after gathering statistics, which in most cases is later used to infer the KD distribution.
The third and fourth columns show the number of measurements and samples of the state $\rho \in \mathcal{D}(\mathcal{H})$ \textit{per trial} of the experiment, where we consider weakly coupling to a pointer as another measurement.
The fifth column indicates whether weak measurements are involved in the scheme, meaning that weak couplings are necessary. Finally, the last column indicates the {required} number of auxiliary quantum systems.

\begin{table*}[t]
    \centering
    \renewcommand{\arraystretch}{1.5}
    \begin{tabular}{c|c|c|c|c|c}
        \hline\hline
        Protocol & Output & \# Meas. & \# Samples of $\rho \in \mathcal{D}(\mathcal{H})$ & Weak Meas. & \# Aux. Systems \\
         \hline
         1) Fig.~\eqref{fig:third_order_invariant_circuit}\,\, & $\xi(\rho \vert i,f)$ & 1 & 1 & No &1 qubit + 2 systems $\mathcal{H}$\\
         \hline
         2) Ref.~\cite{johansen2007quantum}  & $p_{if}^{\text{TPM}},p_{f}^{\text{END}},p_{if}^{\text{WTPM}}$   & 5 & 3 & No & 0\\
         \hline
          3) Ref.~\cite{mazzola2013measuring}  & $\chi(u,v)$   & 1 & 1 & No & $1$ environment  \\
          \hline
          4) Ref.~\cite{buscemi2013direct}  & $p_{\pm,if}$   & 3 & 1 & No  & 1 system $\mathcal{H}$ \\
          \hline
          5) Ref.~\cite{resch2004extracting}  & Pointer shift  & 3 & 1 & Yes & 0 \\
          \hline
          6) Ref.~\cite{mitchison2007sequential}  & Pointer shift  & 3 & 1 & Yes & 0 \\
          \hline
          7) Ref.~\cite{yunger_Halpern2017jarzynski}  & $\langle i|\rho | f \rangle, \langle f|i\rangle $ & 6 & 1 & No & 2 qubits \\
          \hline
          8) Ref.~\cite{yunger_Halpern2017jarzynski}  & $\mathcal{P}_{\text{weak}}$ & 3 & 1 & Yes & 0 \\
          \hline
          9) Ref.~\cite{rall2020quantum}  & $\xi(\rho \vert i,f)$ & 1 & 1 & No & 3 qubits\\
          \hline\hline
    \end{tabular}
    \caption{{\textbf{Comparison between our approach and other methods for estimating the Kirkwood--Dirac quasiprobability distribution.} We compare previously proposed protocols (rows 2--9) with the one studied here (row 1). The first column shows the general output of the protocol; in the main text, we explain how one can recover the KD distribution from such outputs. {Other columns indicate} the number of measurements, samples, and auxiliary systems necessary for each single trial of the protocol{, and whether it requires weakly coupling to a pointer, i.e., weak measurements}.
    These single trials are solely for the estimation of the real part of the KD-value $\xi(\rho \vert i,f)$ for fixed $\rho$, $i$, and $f$.
    All protocols are {also} capable of estimating  the imaginary part with similar schemes.
    The protocol in row 6 estimates inner products, hence both real and imaginary parts, which makes a fair comparison more subtle.   {\textit{ Abbreviations:}} Meas.:  measurement, Aux.: auxiliary.} }
    \label{tab: comparison}
\end{table*}

As these protocols are quite different {in nature}, it is challenging to make fair comparisons. In what follows, we detail the specifics of each protocol and explain the reasoning for the information in Table~\ref{tab: comparison}.
We choose to present the number of measurements and samples per trial as most proposals lack a formal complexity analysis (the exception is the protocol presented in  Ref.~\cite{rall2020quantum}).
Making a complexity analysis for each protocol would be {beyond} the scope of this work, and could potentially hide relevant constant factors in the number of measurements or samples needed.

Table~\ref{tab: comparison} {analyzes the resources} for estimating just {the real part} $\text{Re}[\xi(\rho \vert i,f)]$ with $i,f$ fixed. Estimating the imaginary part constitutes a different experiment in all cases except for the interferometric protocol from Ref.~\cite{yunger_Halpern2017jarzynski}, which estimates inner products;
in this case, {our analysis considers the estimation of } the value $\xi(\rho \vert i,f)$.{ In the other cases, to also take into account} the estimation of $\text{Im}[\xi(\rho \vert i,f)]$ one simply has to multiply all measurements and samples by 2.
Whenever there is more than one output (e.g., second row), parallel estimations need to take place to reconstruct the KD distribution phase-space point $\xi(\rho \vert i,f)$, increasing the number of measurements, samples, and auxiliary systems per trial.
We now proceed to discuss each protocol in more detail.

The protocol first introduced in Ref.~\cite{johansen2007quantum} was recently investigated in Ref.~\cite{lostaglio2022kirkwood} and implemented in Ref.~\cite{hernandezgomez2023projective}. It demands the parallel implementation of { three} different schemes, each of which uses one sample of $\rho$ per trial{:} a two-point measurement (TPM)~\cite{esposito2009nonequilibrium} to learn $p_{if}^{\text{TPM}}$, another similar scheme { with a} non-selective {measurement}\footnote{This procedure was termed weak TPM {in} Ref.~\cite{lostaglio2022kirkwood}. While no weak measurement is needed, the fact that a non-selective (unsharp) measurement is used resembles the statistics that can be obtained from a weak measurement.} to estimate $p_{if}^{\text{WTPM}}${,} and a final experiment with a single measurement to estimate $p_{f}^{\text{END}}$.
The {real part of the} KD distribution is then recovered by {computing} $\text{Re}[\xi(\rho \vert i,f)] = p_{ij}^{\text{TPM}} + \frac{1}{2}(p_f^{\text{END}} - p_{if}^{\text{WTPM}})${, while} the imaginary part {can also be estimated} with a similar setup.
It is unclear how this protocol scales in terms of sample and measurement complexity. Our protocol demands {fewer} measurements and samples at the cost of quantum memory for the states $\vert i \rangle$ and $\vert f \rangle$.

The protocol from Ref.~\cite{mazzola2013measuring} estimates the characteristic function $\chi(u,v) = \sum_{if}\xi(\rho \vert i,f) e^{i \lambda_f v + i \lambda_i u}$ of the KD distribution with an interferometric scheme.
It needs to couple the system of interest to an auxiliary system (called an environment) for the estimation of the real and imaginary parts of $\chi(u,v)$. Importantly, to recover the KD distribution, one needs to post-process the final result {by} applying an inverse Fourier transform over $\chi$.
This might impact on the scalability, { in case} the classical processing  demand{s} more precision in the estimation, {as does} the classical post-processing for estimating the spectrum, which we will see in Sec.~\ref{sec: spectrum}.

While the protocols discussed above were studied with a focus on quantum foundations and quantum thermodynamics, the one from Ref.~\cite{buscemi2013direct} {provides} a general way of estimating $\text{Tr}(\rho AB)$ for two observables $A$ and $B$. It demands an auxiliary system {described by the $d$-dimensional Hilbert space} $\mathcal{H}$ prepared in a maximally mixed state.
It then has one joint projective measurement over the target and auxiliary systems $M_{\pm} \coloneqq \{\Pi^+, \Pi^-\}$ with $\Pi^{\pm} \coloneqq (\mathbb{1} {\pm} \text{SWAP})/2$, together with local projective measurements in each system, $\{\vert i\rangle\langle i \vert \}_i$ and $\{\vert f \rangle \langle f \vert\}_f $. The probability of obtaining $\pm$ from the first measurement $M_{\pm}$ followed by $i$ and $f$ from the local measurements is denoted by $p_{\pm,if}$. One then recovers the real part of $\xi(\rho \vert i,f)$ by
\begin{equation}
    \text{Re}[\xi(\rho \vert i,f)] = \frac{d+1}{2}p_{+,if}-\frac{d-1}{2}p_{-,if}.
\end{equation}
Note that this needs more measurements per trial than our protocol, but, due to the probabilistic nature of quantum measurements, one cannot target specific phase-space points unless outcomes from the local measurements are post-selected.

The protocol from Ref.~\cite{resch2004extracting} is based on joint weak measurements together with the estimation of higher-order weak values, which implies that the effective output to be measured is the statistics of an average of a pair of pointer shifts. It is important to point out that Ref.~\cite{resch2004extracting} does not directly link their findings to estimating the KD distribution, while we (and Ref.~\cite{lostaglio2022kirkwood}) argue that, {with} a direct re-reading of the protocol, one can connect the average shift of the pointer to KD values. The number of measurements in total is two weak measurements involving interactions with two different pointers in sequence, followed by final projective measurements. Since the protocol is based on standard weak measurement techniques, there is the need { for} post-selection.

While the protocol just mentioned is capable of estimating the KD distribution using joint weak measurements ({requiring} at least two systems) a similar approach using sequential weak measurements from Ref.~\cite{mitchison2007sequential}, and experimentally implemented in Refs.~\cite{thekkadath2016direct,piacentini2016measuring}, is also capable of estimating the KD distribution using a weak measurement scheme.

The interference scheme from Ref.~\cite[Appendix B]{yunger_Halpern2017jarzynski} consists of estimating each inner-product $\langle f | i \rangle$ and $\langle i|\rho |f \rangle$ separately. The general way proposed to estimate   $\langle i|\rho |f \rangle$ is using quantum state tomography and for estimating $\langle f | i \rangle$ the interferometer needs to estimate the real and imaginary parts {separately} by performing two parallel experiments: a SWAP test and another more involved one.
The overall protocol becomes fairly demanding in terms of measurements and samples, especially due {to} the need to reconstruct each inner product and the estimation of $\langle i|\rho |f \rangle$.
The same reference proposes a better scheme to estimate the extended KD distribution relevant for OTOC measurements that remains valid for general extended KD distributions (see Table~\ref{tab: comparison}).
In each trial of this weak-measurement-based scheme, for a general state $\rho$, one { performs} one weak measurement and two strong measurements in the pre/post-processing parts.\footnote{In fact, {this} proposal { requires} three weak measurements since they are essentially estimating a higher-order invariant appearing in the OTOC.
We use fewer weak measurements since we  consider only the usual KD distribution.
While this makes it more difficult to compare with our approach, this protocol is still an interesting measurement scheme to be considered for estimating the KD distribution.} For more details, see Ref.~\cite[Appendix A, Eq. (A5)]{yunger_Halpern2017jarzynski} and Ref.~\cite{yunger_Halpern2018quasiprobability}.

Finally, we comment on the more recent protocol { introduced in} Ref.~\cite{rall2020quantum}. The ideas present there greatly resemble our scheme{ since} something similar to a Hadamard test is performed (and a careful analysis of sample complexity is presented).
The main difference is that instead of { using} auxiliary systems {to prepare} states $\vert i \rangle$ and $\vert f \rangle$ entering a generic cycle test, the method uses  auxiliary qubits and performs block-encoding unitaries $U_i$ and $U_f$ that act {on the joint} system {composed of} $\vert \psi \rangle $ and the auxiliary qubits.
{As far as we know, t}his protocol is the only one that {requires a} similar number of samples and measurements to our approach without the need {for} major classical post-processing.
The main difference between the two is that we avoid the need {for} different block-encoding unitaries $U_i$ and $U_f$ for each phase-space instance, at the cost of having access to a quantum memory with prepared states $\vert i \rangle$ and $\vert f \rangle$.
{Sidestepping such encoding issues} clarifies the relational properties of the states involved, and may simplify the classical pre-processing for larger phase spaces, favoring our protocol{. Meanwhile,} for near-term devices, the lack of a large quantum memory may be more { crucial}, favoring the protocol { introduced in} Ref.~\cite{rall2020quantum}.

In contrast to the { other methods} listed in Table~\ref{tab: comparison}, our protocol works even when no classical description { of} the states being compared {is available}.
In other words, the cycle tests we use can receive as inputs unknown states provided by third parties, which may suit applications involving, for example, state comparisons in networks.
In that, our protocol is akin to the programmable quantum gate arrays, first proposed { in} Ref.~\cite{NielsenC97}, where a fixed circuit has different functionalities depending on the input program states.

{ We sum up with} a more detailed comparison of our proposal with those in Table~\ref{tab: comparison}.
Our protocol requires fewer measurements or samples than those from Refs.~\cite{buscemi2013direct,yunger_Halpern2017jarzynski,johansen2007quantum,resch2004extracting}, does not require post-selection or weak measurements as { that} from Ref.~\cite{resch2004extracting}, does not require classical post-processing of data such as { that} from Ref.~\cite{mazzola2013measuring} ({ which applies} the inverse Fourier transform), and it does not require a different Hadamard test, nor complete information on the states $\ket{i}$ and $\ket{f}$, { for each} KD phase-space point as that from Ref.~\cite{rall2020quantum}.
Similarly to the protocol from Ref.~\cite{rall2020quantum}, our protocol allows the estimation of the distribution at a single phase-space point.

\subsection{Estimating the spectrum of a quantum state} \label{sec: spectrum}

So far, all quantities considered have depicted some relational aspect among different states in a set.
Interestingly, cycle test circuits can also be used to estimate quantities associated with a single quantum state.
In this section, we start  by discussing quantum circuits that can estimate univariate trace polynomials $\text{Tr}(\rho^n)$, reminiscent of known approaches~\cite{alves2003direct}.
We carefully address the numerics of estimating the spectrum using these quantities due to the post-processing of classical data required by the Faddeev--LeVerrier algorithm.
Finally, we provide a careful analysis in terms of the sample complexity needed to perform the task, together with a comparison with the best protocol available{,} known as the empirical Young diagram algorithm. We revisit the task of learning the spectrum, known from the early 2000s~\cite{horodecki2003fromlimits}, in light of the results from Ref.~\cite{odonnel2015quantumspectrum} that provided optimal bounds for sample complexity of learning the spectrum, arguing that with our circuits it is possible to achieve the same efficiency with a simplified construction.

In Fig.~\ref{fig:learning spectrum}, we describe the cycle test circuit that estimates $\text{Tr}(\rho^n)$ for any {given} $n\in \mathbb{N}$.
It uses a single auxiliary qubit and $n$ controlled-SWAP gates over $n$ copies of the state $\rho$.
This proposal is not optimal in the number of gates, however. In Refs.~\cite{quek2022multivariate, oszmaniec2021measuring}, families of circuits were proposed with either logarithmic or constant depth.
Yet, we observe that such advantage in depth is gained at the cost of an increase in space, as the proposals require the preparation of auxiliary multi-party Greenberger--Horne--Zeilinger (GHZ) states~\cite{greenberger1989going}.

The Faddeev--LeVerrier algorithm, which uses the Newton identities~\cite[Chapter 3]{escofier2012galois}, allows us to compute the characteristic polynomial of the density matrix $\rho$ from the quantities $\{\text{Tr}(\rho^n)\}_{n=1}^d$.
The eigenvalues are then obtained by finding the roots of the characteristic equation.
We provide code for this computation, as well as supporting information, in the repository~\cite{molero2022github}, where we present these ideas with an experimentally-friendly approach. 

This algorithm has some limitations, associated with finding the roots of the characteristic polynomial. As the dimension of the system increases, the coefficients of lower-degree monomials decrease, making them prone to numerical inaccuracies. We find that, for dimensions greater than 9, the predicted roots have an imaginary part larger than $10^{-8}$.
This imaginary part appears because of numerical inaccuracies, and it is independent of statistical or experimental noise. Therefore, it does not affect the error in the real part of the predicted eigenvalues, and we may simply discard the imaginary part of the results. Appendix \ref{appendix: samples spectrum} presents a thorough study of this behavior.

\begin{figure}
    \centering
    \includegraphics[width=\columnwidth]{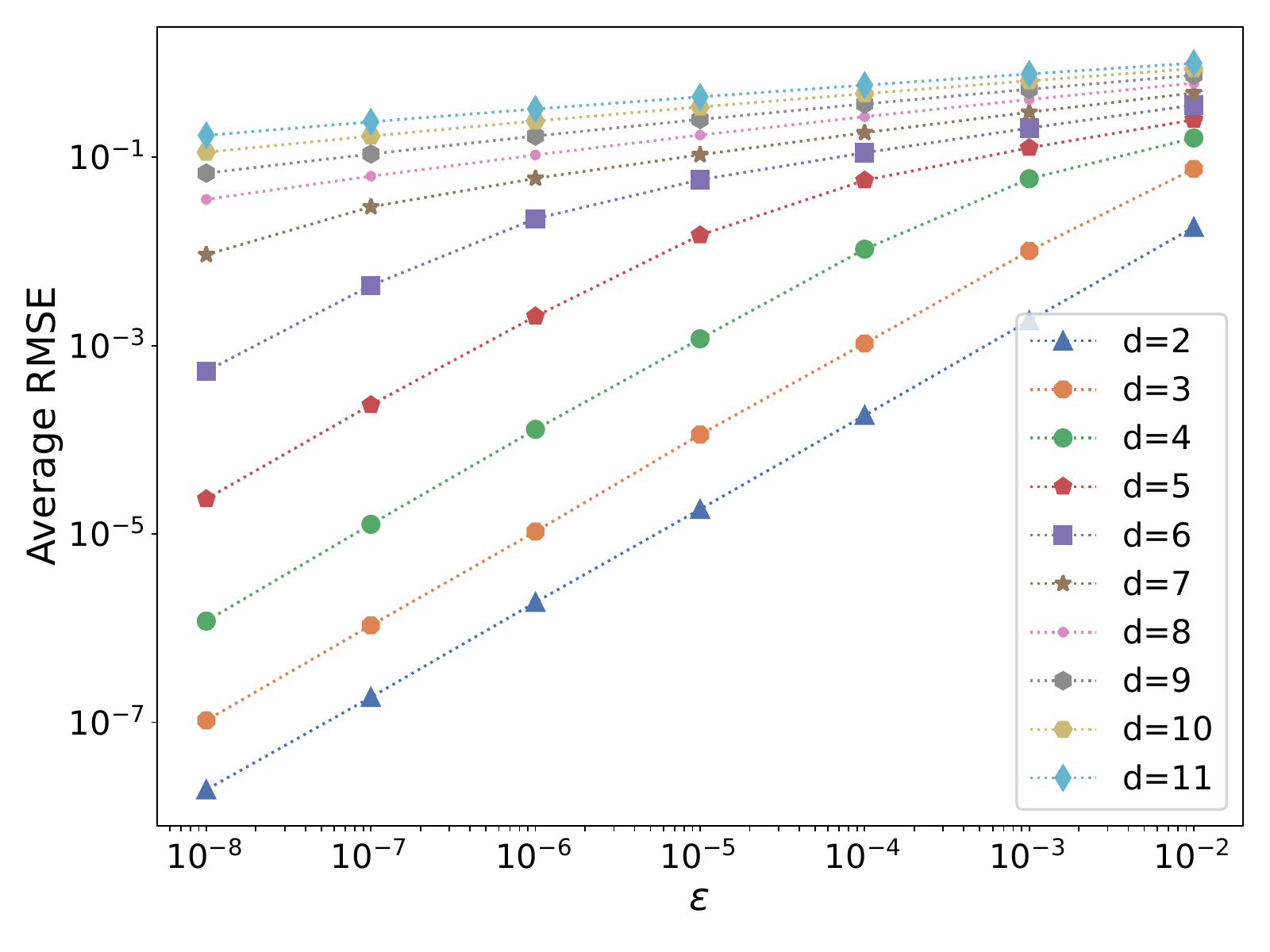}
    \caption{\textbf{Average root-mean-squared error (RMSE) of the estimate for the real part of the eigenvalues of random mixed states, under Gaussian noise $\epsilon$.} We start from a data set of exact values of $\Delta_n \defeq \text{Tr}(\rho^n)$,  $\{\widehat{\Delta_n}\}_{n=2}^d$, introduce Gaussian noise with standard deviation $\varepsilon$, and plot the average root-mean-squared error (RMSE) of the estimated eigenvalues under noise. The spectrum reconstruction uses  the Faddeev--LeVerrier algorithm based on Newton's identities, and the average RMSE is over 5000 samples, each used to generate 1000 noisy samples. We use the Ginibre random ensemble of mixed states of rank larger than one, employing the algorithm introduced in Ref.~\cite{bruzda2009random}.}
    \label{fig:spectrum MSE}
\end{figure}

In real experiments, noisy data will distort the predictions of the algorithm. In Fig.~\ref{fig:spectrum MSE}, we study the aggregated root-mean-squared error (RMSE) of the real part of the eigenvalues of random mixed density matrices considering different amounts of noise in the experimental data.
For a given set of traces computed from a random state, we can simulate the results from experiments by adding Gaussian noise with standard deviation $\varepsilon$.
This noisy set of traces represents the average results from experiments accurate up to an error $\varepsilon$.  The RMSE is computed {by} comparing the predictions of the algorithm using the noisy traces with the noiseless ones. We repeat the procedure and average with $1 \times 10^3$ noisy versions for each state, for $5 \times 10^4$ random states using the Ginibre ensemble~\cite{bruzda2009random}. These numbers are not related to the experimental samples required to estimate the traces, but to give enough statistics to see the response of the algorithm to noisy data.
Due to the estimation of the entire spectrum, it is expected that higher dimensions incur a high RMSE even for small $\varepsilon$.
Typical error ranges in experimental realizations, e.g.,  $\varepsilon \sim 10^{-3}$, would allow us to estimate the entire spectrum for dimensions 3 or 4 with good accuracy.
A lower error $\varepsilon$ in the experimental data could be attainable by increasing the number of samples. We remark that we only take statistical errors into account, as more details about the specific experimental implementation would be needed for further error analysis.

A simpler task involves learning the largest eigenvalue of $\rho$ with high precision.
This task should give better results as the higher-degree coefficients of the characteristic polynomial have a greater size than those of lower degrees, and are thus less affected by noise.
We study the scaling of the error in the estimation of the largest eigenvalue in Appendix~\ref{appendix: samples spectrum}.
As expected, higher dimensional systems can be studied. For instance, for an experimental error of $\varepsilon = 10^{-4}$, the average RMSE is around $5 \times 10^{-3}$ for dimension 6.
However, measuring the complete set of traces for estimating the largest eigenvalue is not an optimal procedure.
While the coefficients of higher-degree monomials of the polynomial require traces of lower orders in powers of the density matrix, coefficients of lower degrees require traces of higher orders. As these traces require a more complex circuit, they will also be noisier.
A sensible alternative for computing the largest eigenvalue is to truncate the polynomial to a certain lower degree $k$~\cite{subacsi2019entanglement},
that is, to compute the higher-degree coefficients of the polynomial using traces of lower powers of $\rho$, $\{ \text{Tr}(\rho^2), \hdots, \text{Tr}(\rho^{d-k})\}$.
In this approximation, the traces that are more difficult to measure are not taken into account, as their contribution is smaller.

Oddly enough, estimating the spectrum of $\rho$ from $\{\text{Tr}(\rho^n)\}_{n=1}^d$ for $d > 2$ has not, to the best of our knowledge, received any attention from experimental investigations yet, even though one can show that the required number of samples is optimal.

\begin{theorem}\label{theorem: sample complexity learning spectrum}
    The number of samples needed to estimate $\{\text{Tr}(\rho^n)\}_{n=1}^d$ up to precision $\varepsilon$ in all quantities with high probability is $N = \tilde{O}(d^2/\varepsilon^2)$ with the cycle test. The number of measurements needed over auxiliary qubits is $\tilde{O}(d/\varepsilon^2)$.
\end{theorem}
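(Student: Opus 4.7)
The plan is to estimate each univariate trace $\Delta_n(\rho,\dots,\rho) = \text{Tr}(\rho^n)$ separately with a cycle test circuit, apply a Hoeffding concentration bound to count the runs required per order $n$, and sum the resulting sample and measurement costs over $n = 1, \ldots, d$. The final ingredient is a union bound to turn per-estimate concentration into simultaneous success for all $d$ quantities.

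First, fix $n$ and consider the cycle test of Fig.~\ref{fig:cycle-test preliminaries} applied to $n$ fresh copies of $\rho$. The auxiliary qubit, when measured in the $X$ basis (respectively after a $P$ gate), yields a $\pm 1$-valued random variable whose expectation equals $\text{Re}\,\text{Tr}(\rho^n)$ (resp.\ $\text{Im}\,\text{Tr}(\rho^n)$). Since this random variable is bounded in $[-1,1]$, Hoeffding's inequality~\cite{hoeffding1994probability} gives that averaging over $N_n = O(\log(1/\delta')/\varepsilon^2)$ independent circuit runs produces an estimate that is $\varepsilon$-close to $\text{Tr}(\rho^n)$ with probability at least $1-\delta'$ (up to a factor of $2$ from separately estimating real and imaginary parts, absorbed in the big-$O$). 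Each such run consumes exactly $n$ fresh copies of $\rho$ and performs exactly one measurement of the auxiliary qubit.

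Next, to guarantee that \emph{all} $d$ estimates $\{\widehat{\Delta_n}\}_{n=1}^d$ are $\varepsilon$-accurate simultaneously with probability at least $1-\delta$, I would apply a union bound by setting $\delta' = \delta/d$ for each order. The per-order run count becomes $N_n = O(\log(d/\delta)/\varepsilon^2)$, which is the same bound for every $n$. Summing the number of measurements across orders gives
\begin{equation*}
\sum_{n=1}^d N_n \;=\; d \cdot O\!\left(\tfrac{\log(d/\delta)}{\varepsilon^2}\right) \;=\; \tilde{O}\!\left(\tfrac{d}{\varepsilon^2}\right),
\end{equation*}
which is the claimed bound on auxiliary-qubit measurements. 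Summing the number of $\rho$-copies consumed gives
\begin{equation*}
\sum_{n=1}^d n \cdot N_n \;=\; O\!\left(\tfrac{\log(d/\delta)}{\varepsilon^2}\right)\sum_{n=1}^d n \;=\; \tilde{O}\!\left(\tfrac{d^2}{\varepsilon^2}\right),
\end{equation*}
matching the claimed sample complexity, with $\delta$ treated as a fixed constant and $\log d$ factors hidden in the tilde.

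There is no real obstacle to this argument; the proof is essentially a concentration-plus-union-bound counting exercise. The only subtlety worth flagging is the distinction between samples of $\rho$ and measurements of the auxiliary qubit: the factor of $n$ between them arises because the $n$th-order cycle test circuit intrinsically consumes $n$ copies of $\rho$ per single measurement event. A brief remark could also acknowledge that this scaling is optimal in the sense of matching the sample complexity of the empirical Young diagram algorithm~\cite{odonnel2015quantumspectrum}, connecting the statement back to the discussion preceding it.
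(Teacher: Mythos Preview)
Your proposal is correct and follows essentially the same approach as the paper: Hoeffding concentration for each $\text{Tr}(\rho^n)$ estimated via the cycle test, a union bound over the $d$ orders, and then summing $n$ copies per run over $n=1,\ldots,d$ to obtain $\tilde O(d^2/\varepsilon^2)$ samples and $\tilde O(d/\varepsilon^2)$ auxiliary-qubit measurements. One very minor remark: since $\rho$ is Hermitian, $\text{Tr}(\rho^n)$ is real, so the imaginary-part estimation is unnecessary (though harmless for the asymptotic bound).
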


We prove this result in Appendix~\ref{appendix: samples spectrum}. The protocol of Ref.~\cite{quek2022multivariate} improves this sample complexity by logarithmic factors, obtaining that order $O(d^2/\varepsilon^2)$ samples are needed with high probability.
One can learn the spectrum of a given state $\rho$ using an optimal number of samples and measurements using the cycle test circuits described in this work or those described in Ref.~\cite{quek2022multivariate}.
The best bounds on the number of samples needed to estimate the spectrum ${ \text{Spec}}(\rho)$ for $d$-dimensional mixed states $\rho$ were first rigorously obtained in Refs.~\cite{keyl2001estimating,alicki1988symmetry} with the use of the empirical Young diagram (EYD) algorithm, later connected with complexity-theoretic arguments~\cite{wright2016learn, odonnel2015quantumspectrum}, and formally used to show that the same sample complexity is optimal for full state tomography~\cite{O_Donnell2016, Haah2016, chen2022tight, yuen2023improved}.

\begin{figure}
    \centering
    \includegraphics[width=0.9\columnwidth]{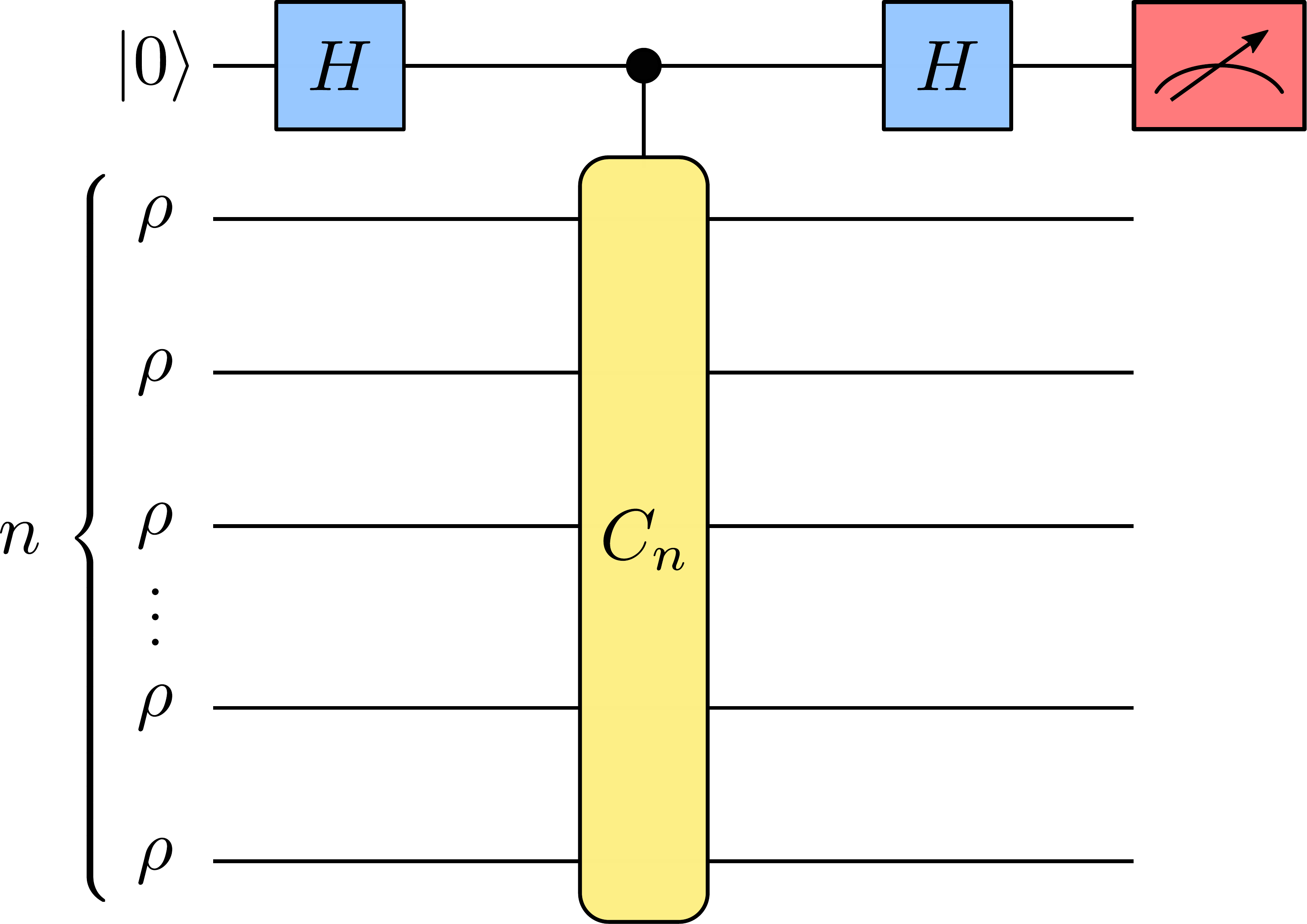}
    \caption{\textbf{Circuit for measuring} $\text{Tr}(\rho^n)$\textbf{.} The input consists of $n$ copies of a $d$-dimensional state $\rho$. Measurements of $\text{Tr}(\rho^n)$ for $n=1,2, \hdots, d$ gives us the spectrum of $\rho$, via the Newton identities.}
    \label{fig:learning spectrum}
\end{figure}

We provide a brief review of the EYD algorithm. Recall that by Schur--Weyl duality, the $n$-tensor product of $ \mathbb{C}^d $ decomposes as a direct sum of irreducible representations of $ \mathrm{U} \left( d \right) $:
\begin{equation}\label{Schur_Weyl}
    \left( \mathbb{C}^d \right)^{\otimes n} \cong \bigoplus_\lambda \left( V_\lambda \right)^{\otimes m_\lambda}
\end{equation}
where the direct sum goes over all partitions $\lambda$ of $n$ as a sum of no more than $d$ positive integers.
The direct sum contains $m_\lambda$ copies of the irreducible representation $V_\lambda$, where $m_\lambda$ is the number of standard Young tableaux corresponding to the partition $\lambda$.
The irreducible representations $V_\lambda$ (also known as \textit{Weyl modules} in this context) can be constructed from $(\mathbb{C}^d)^{\otimes n} $ as the images of certain projections.
Essentially, for each standard Young tableau, the corresponding projection acts on $(\mathbb{C}^d)^{\otimes n}$ by symmetrizing the tensor indices $1, \ldots, n$ that correspond to each row of the tableau, and anti-symmetrizing the indices corresponding to each column.
Now, the EYD algorithm is performed by fixing a large value of $n$, and then measuring $\rho^{\otimes n}$ with respect to the projections $\{ \Pi_\lambda\}$, where $\Pi_\lambda$ projects onto the subspace indexed by $\lambda$ in Eq.~\eqref{Schur_Weyl}.
The measurement outcome $\lambda$ is a set of positive integers that sum to $n$. After dividing each one by $n$, they comprise an estimate for the nonzero eigenvalues of $\rho$.

Some important differences between the two methods for learning the spectrum---the EYD algorithm and the protocol using the cycle test---deserve to be mentioned. To implement the EYD algorithm, one needs to measure $\rho^{\otimes n}$ in the highly entangled Schur basis (an orthonormal basis comprised of eigenvectors of the projections $\Pi_\lambda$), and to obtain the spectrum with precision $\varepsilon$, one needs $n_{\text{EYD}} = O \left( d^2 / \varepsilon^2 \right)$. In terms of sample complexity, this is optimal: one requires $N = n_{\text{EYD}}$ copies of $ \rho$, which is the optimal sample complexity---and the same as in our protocol, as stated before.

At first glance, the EYD algorithm may seem to be simpler, as it requires implementing only a single quantum circuit. Indeed, one should choose a large enough value of $n_{\text{EYD}}$, and then implement the Schur transform (mapping the Schur basis to the computational one) and measure the final state only for that fixed $n_{\text{EYD}}$. In contrast, our approach requires estimating $\Tr \left( \rho^n \right)$ for $n = 1, \ldots, d$, where every increment of $n$ requires a slightly larger circuit. However, this ``partition'' of the algorithm into $d$ quantum circuits holds a huge practical advantage. As is evident from the previous paragraph, the EYD algorithm requires a quantum computer that can reliably sustain a highly entangled state in a Hilbert space of dimension $ d^{n_{\text{EYD}}} $. This dimension grows larger for smaller desired precision $\varepsilon$.
In contrast, in our approach, the largest circuit acts on dimension $d^d$, which still grows large for higher values of $d$ but does not depend on $\varepsilon$.
Moreover, the gate complexity of the Schur transform is polynomial in both $d$ and $n_{\text{EYD}}$~\cite{bacon2006efficient}, while the complexity of the circuit for measuring $\Tr \left( \rho^n \right)$ only depends on $n$.\footnote{Or even better, achieving $\log(n)$, as shown in Ref.~\cite{oszmaniec2021measuring}, or constant depth, as shown in Ref.~\cite{quek2022multivariate}.}
Thus, the gate complexity of our approach does not depend on $\varepsilon$ and is generally much smaller when compared with EYD.

It should be noted that the authors of Ref.~\cite{Beverland2018} overcome the issue of gate complexity by considering a specific experimental setup where the symmetry of the interaction Hamiltonian is utilized to implement the Schur transform efficiently.\footnote{While acknowledging that some of the experimental aspects needed to implement the EYD algorithm ``seem like a daunting task in practice''~\cite[pg. 3]{Beverland2018}.}

Moreover, other methods have been proposed for estimating $\text{Tr}(\rho^n)$. One such test, known as Brun's multicopy method~\cite{brun2004measuring}, uses the fact that any polynomial  $f(\rho)$ on the matrix elements of $\rho$, with respect to some fixed basis, can be expressed as the expectation of some operator acting over $\rho^{\otimes m}$ for some natural number $m$.
Observables acting over such tensor products are sometimes called \textit{multicopy observables} and have been proposed to access nonclassicality of quantum states in optics~\cite{arnhem2022multicopy}.
For a formal comparison between our scheme and Brun's, see Ref.~\cite[Sec. V]{brun2004measuring}.
It is also worth mentioning that the cycle test was used in earlier works~\cite{ekert2002direct}, focusing on the estimation of $\text{Tr}(\rho^n)$ only, without considering more general Bargmann invariants.
Another recently introduced methodology for general estimation of Bargmann invariants is presented in Ref.~\cite{quek2022multivariate}.

\section{Minimal experimental conditions for witnessing nonclassicality}\label{sec: nonclassicality}

Most of the constructions discussed in the last section capture the community's interest because they present features that are puzzling to be described classically---{notably, values that are negative or have nonzero imaginary part}.
Given that the { protocols} discussed in this work are capable of estimating these quantities,
we are now interested in studying minimal experimental setups {for which it is possible} to witness nonclassicality {in various guises}.

{We} first discuss how the negativity or imaginarity of {(third- and higher-order)} Bargmann invariants{, and in particular that of weak values or KD distributions,} can be understood as {witnessing} quantum {(set)} coherence.

{However, following Ref.~\cite{wagner2022inequalities} (see Sec.~\ref{sec: nonclassicality of Bargmann invariants preliminaries}),
set coherence can also be witnessed from two-state overlaps.
So, afterwards, we consider to what extent these two ways of witnessing coherence are interrelated.
In particular, we consider the question of whether anything can be inferred about the negativity or imaginarity of third-order invariants from knowledge of pairwise overlaps alone.
{
In the simplest scenario, with three states, it turns out that, in general, overlaps alone are not enough to decide about negativity or imaginarity of third-order invariants, but if one restricts to real amplitude states, then overlap nonclassicality and negativity of the third-order invariant are mutually exclusive (complementary) witnesses of coherence.
}}

{\subsection{Negativity, imaginarity, and set coherence}}

{\subsubsection{Negativity and imaginarity as coherence witnesses}\label{subsubsec: negativity imaginarity witness coherence}}

{
As reviewed in Sec.~\ref{sec: nonclassicality of Bargmann invariants preliminaries},
the assumption of set incoherence for a set $\{\rho_i\}_i$ of quantum states, described in Def.~\ref{def: set-coherence}, imposes restrictions on tuples of Bargmann invariants among states drawn from that set.
This was extensively studied in Refs.~\cite{galvao2020quantum, wagner2022inequalities} for the case of overlaps, and briefly mentioned in an appendix of Ref.~\cite{oszmaniec2021measuring} for more general Bargmann invariants.
One such restriction is that all the invariants must necessarily be real and nonnegative (in fact, must take values in the interval $[0,1]$), as we now see.
}

{
Consider a set of states $\{\rho_i\}_i$ in a finite-dimensional Hilbert space $\mathcal{H}$,
and suppose that it is set incoherent.
This means that its elements are simultaneously diagonalizable, i.e., there is a basis with respect to which all those states $\rho_i$ are represented by a diagonal matrix, say $\text{diag}(p_i^{(1)}, \dots, p_i^{(d)})$, where $d=\dim(\mathcal{H})$.
Any Bargmann invariant drawn from this set then takes the form
\begin{equation}\label{eq: classical Bargmann}
    \Delta_n(\rho_{i_1}, \ldots, \rho_{i_n}) = \sum_{k=1}^d p_{i_1}^{(k)}\dots p_{i_n}^{(k)}.
\end{equation}
This value can be understood as the probability that, upon measuring the $n$ states $\rho_{i_1}, \ldots, \rho_{i_n}$
with an observable corresponding to the basis that diagonalizes all the states in $\{\rho_i\}_i$, they all return the same outcome.
}

Nonreal or negative values { are {blatantly} at odds with this interpretation, and thus immediately rule out a set-incoherent realization}.
{More generally, the specific form in Eq.\eqref{eq: classical Bargmann} constraints the possible values of Bargmann invariants---and of tuples of Bargmann invariants---admitting a set-incoherent realization.}

{Let us remark that Bargmann nonclassicality, as in Def.~\ref{def: relational nonclassicality}, is not equivalent to  set coherence or incompatibility, from Def.~\ref{def: set-coherence}. Bargmann nonclassicality is associated with functions of Bargmann invariants attaining values inaccessible by set-incoherent realizations.
Each departure from such a set incoherent explanation signals a specific form of nonclassicality.
A few examples that we have already encountered are (i) $\text{Re}[\Delta_n(\rho_1,\ldots,\rho_n)]<0$ (negativity), (ii) $\text{Im}[\Delta_n(\rho_1,\ldots,\rho_n))] \neq 0$ (imaginarity), and (iii) violations of overlap inequalities as in Eq.~\eqref{eq: preliminaries 3-cycle inequalities} ({interpretable as coherence from two-state overlaps, as per Ref.~\cite{wagner2022inequalities}}),
each of which expresses an inequivalent form of Bargmann nonclassicality.}
{Imaginarity, in particular, received some attention in Ref.~\cite{oszmaniec2021measuring}.}

{\subsubsection{Nonclassicality in weak values and KD distributions as relational nonclassicality}}

{
Negativity and imaginarity have also been studied as markers of nonclassicality for weak values or KD distributions.
Given our rendering of these quantities in terms of Bargmann invariants, the observations above admit a reading in those terms.
The following two lemmas {are an instantiation of the observations presented in Sec.~\ref{subsubsec: negativity imaginarity witness coherence} and} make explicit this trivial---albeit important---connection between set coherence and negativity or imaginarity of weak values or KD distributions.}

{
\begin{lemma}\label{lemma: weak value negativity and imaginarity are set-coherence}
In a finite-dimensional  Hilbert space $\mathcal{H}$, let $\ket{\phi}$, $\ket{\psi}$ two non-orthogonal vectors,
$A: \mathcal{H} \to \mathcal{H}$ a Hermitian operator,
and $a \in \text{Spec}(A)$ an eigenvalue of $A$ with corresponding eigenstate $\ket{a}$.
Then, negativity or imaginarity of the weak value 
$P^a_w$ of the eigenprojector $P^a = \ket{a}\!\bra{a}$, relative to pre- and post-selected states $\psi$ and $\phi$,
witness the set coherence of $\{\ket{a},\ket{\psi}, \ket{\phi}\}$.
\end{lemma}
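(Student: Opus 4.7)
The plan is to reduce the claim about weak values to the elementary observation, already made earlier in the excerpt (Eqs.~\eqref{eq: wk-value as bi} and~\eqref{eq: classical Bargmann}), that set-incoherent realizations force Bargmann invariants to be real and nonnegative.

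First, I would rewrite the weak value as a ratio of Bargmann invariants. Starting from Def.~\ref{def: nonclassical weak values} and multiplying numerator and denominator by $\langle \psi | \phi \rangle$, one obtains
\begin{equation*}
P^a_w \;=\; \frac{\langle \phi | a \rangle \langle a | \psi \rangle}{\langle \phi | \psi \rangle} \;=\; \frac{\langle \phi | a \rangle \langle a | \psi \rangle \langle \psi | \phi \rangle}{|\langle \phi | \psi \rangle|^2} \;=\; \frac{\Delta_3(\phi, a, \psi)}{\Delta_2(\phi, \psi)}.
\end{equation*}
Since $\ket\phi$ and $\ket\psi$ are assumed non-orthogonal, the denominator $\Delta_2(\phi,\psi) = |\langle \phi | \psi \rangle|^2$ is a strictly positive real number. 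Consequently, $P^a_w$ has a negative real part if and only if $\mathrm{Re}\,\Delta_3(\phi, a, \psi) < 0$, and $P^a_w$ has nonzero imaginary part if and only if $\mathrm{Im}\,\Delta_3(\phi, a, \psi) \neq 0$.

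Next, I would argue by contrapositive. Assume the set $\{\ket{a},\ket{\psi},\ket{\phi}\}$ is set incoherent in the sense of Def.~\ref{def: set-coherence}: there exists an orthonormal basis diagonalizing the three (rank-one) projectors $\ket{a}\!\bra{a}$, $\ket{\psi}\!\bra{\psi}$, and $\ket{\phi}\!\bra{\phi}$ simultaneously. Applying Eq.~\eqref{eq: classical Bargmann} to this tuple, one has
\begin{equation*}
\Delta_3(\phi, a, \psi) \;=\; \sum_{k=1}^{d} p_\phi^{(k)}\, p_a^{(k)}\, p_\psi^{(k)} \;\in\; [0,1],
\end{equation*}
which is real and nonnegative (indeed, since each pure state corresponds to a $0/1$-valued assignment supported on a single basis vector, each summand is itself in $\{0,1\}$). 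Combined with the previous step, set incoherence would force $P^a_w$ to be a real number in $[0, 1/\Delta_2(\phi,\psi)]$, in particular real and nonnegative. Contrapositively, nonreality or negativity of $P^a_w$ witnesses set coherence of $\{\ket{a}, \ket{\psi}, \ket{\phi}\}$.

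I do not anticipate a serious obstacle: the whole content of the lemma is the rewriting $P^a_w = \Delta_3/\Delta_2$ together with the by-now-standard characterization of classically realizable Bargmann invariants. The only care required is to ensure that the denominator is both nonzero (guaranteed by non-orthogonality) and real-positive (automatic), so that reality/positivity of the weak value is genuinely equivalent to reality/nonnegativity of the third-order invariant. Everything else is a direct invocation of Eq.~\eqref{eq: classical Bargmann}.
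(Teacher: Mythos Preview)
Your proposal is correct and follows exactly the route the paper intends: the lemma is presented there without a separate proof, merely as an ``instantiation of the observations'' around Eq.~\eqref{eq: classical Bargmann}, namely that set-incoherent realizations force all Bargmann invariants to lie in $[0,1]$. Your rewriting $P^a_w = \Delta_3(\phi,a,\psi)/\Delta_2(\phi,\psi)$ together with the positivity of the denominator and the contrapositive via Eq.~\eqref{eq: classical Bargmann} is precisely the argument the paper has in mind.
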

}

{
\begin{lemma}\label{lemma: KD negativity and imaginarity are set-coherence}
Let $\{\vert i\rangle\}_{i\in I}, \{\vert f \rangle\}_{f \in F}$ be two (reference) orthonormal bases of a finite-dimensional Hilbert space $\mathcal{H}$,
and $\rho \in \mathcal{D}(\mathcal{H})$ be a state.
Then, negativity or imaginarity of $\xi(\rho \vert i, f)$, the value of the KD distribution associated with $\rho$ at a specific phase-space point $(i,f) \in I \times F$,
witness the set coherence of $\{\rho, \ket{i}, \ket{f}\}$.
An analogous statement also holds for any extended KD distribution.
\end{lemma}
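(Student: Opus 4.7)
The plan is to reduce the statement to the general observation, established for Bargmann invariants in Sec.~\ref{subsubsec: negativity imaginarity witness coherence}, that any Bargmann invariant $\Delta_n(\rho_{i_1},\dots,\rho_{i_n})$ drawn from a set-incoherent set $\{\rho_i\}_i$ must be real and nonnegative (since it takes the form $\sum_k p_{i_1}^{(k)}\cdots p_{i_n}^{(k)}$ with $p_i^{(k)} \in [0,1]$). The lemma is essentially the contrapositive of this fact once the KD value is identified with a specific Bargmann invariant.

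First, I would observe that, as shown in Sec.~\ref{sec: unifying}, the KD value at a phase-space point $(i,f)$ can be rewritten as the third-order Bargmann invariant
\begin{equation*}
    \xi(\rho \vert i,f) = \langle f | i\rangle\langle i | \rho | f\rangle = \mathrm{Tr}\bigl(\ket{i}\!\bra{i}\,\rho\,\ket{f}\!\bra{f}\bigr) = \Delta_3\bigl(\ket{i}\!\bra{i},\rho,\ket{f}\!\bra{f}\bigr),
\end{equation*}
which is a Bargmann invariant drawn from the three-element set $\{\rho,\ket{i}\!\bra{i},\ket{f}\!\bra{f}\}$. Next, assume for contradiction that this set is set incoherent, so that there exists a basis in which all three states are simultaneously diagonal; then the invariant takes the form $\sum_{k} p_\rho^{(k)} p_i^{(k)} p_f^{(k)}$ with $p_\rho^{(k)},p_i^{(k)},p_f^{(k)}\in[0,1]$, which is manifestly real and nonnegative. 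Hence $\xi(\rho\vert i,f) \in [0,1]$, contradicting the assumed negativity or imaginarity. The contrapositive yields the claim.

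For the extended case, I would invoke Eq.~\eqref{eq: KD extended as Bargmann Referee}, which expresses
\begin{equation*}
    \xi(\rho \vert k_1,\dots,k_n)= \Delta_{n+1}\bigl(\rho,\rho^1_{k_1},\dots,\rho^n_{k_n}\bigr)\prod_{i=1}^n \mathrm{Tr}(\Pi^i_{k_i}),
\end{equation*}
where each $\rho^i_{k_i} = \Pi^i_{k_i}/\mathrm{Tr}(\Pi^i_{k_i})$ is a bona fide state. The multiplicative prefactor is a strictly positive real number (assuming nonzero projectors), so the sign and imaginary part of the extended KD value are identical to those of the underlying $(n{+}1)$th order Bargmann invariant among $\{\rho,\rho^1_{k_1},\dots,\rho^n_{k_n}\}$. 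Applying the same set-incoherent expansion argument to this invariant gives the analogous witness statement.

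There is no real obstacle here: the statement is a direct corollary of the identification of KD values with third- or higher-order Bargmann invariants (Sec.~\ref{sec: unifying}) together with the positivity of Bargmann invariants evaluated on simultaneously-diagonalizable tuples. The only minor subtlety worth flagging explicitly is that the prefactor in the extended-case identity does not alter negativity or imaginarity, so the witnessing structure transfers cleanly from the Bargmann invariant to the KD quasiprobability.
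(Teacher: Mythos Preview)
Your proposal is correct and mirrors exactly the paper's approach: the paper presents this lemma without a standalone proof, explicitly framing it as ``an instantiation of the observations presented in Sec.~\ref{subsubsec: negativity imaginarity witness coherence}'' together with the identification $\xi(\rho\vert i,f)=\Delta_3(i,\rho,f)$ from Sec.~\ref{sec: unifying}. Your contrapositive argument and your handling of the extended case via Eq.~\eqref{eq: KD extended as Bargmann Referee} (noting the prefactor is real and positive) spell out precisely the details the paper leaves implicit.
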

}

{Note that for weak values, given that we assume pure pre- and post-selections, the measurement of $\Delta_2(\ket{\psi},\ket{\phi})$ alone guarantees coherence.} Recently, Ref.~\cite{wagner2023simple} considered in more depth the implications of Lemma \ref{lemma: weak value negativity and imaginarity are set-coherence}, showing that \textit{any} anomalous weak values require coherence, in the more general case where pre- and post-selection states can be mixed.

{\subsubsection{Set coherence beyond negativity and imaginarity}}

The analysis of Bargmann invariants for diagonal states sheds some light on the reason why negativity and imaginarity
require set coherence or noncommutativity, as noticed in Ref.~\cite{arvidsson_Shukur2021conditions}{, while the converse is not the case}.

Indeed, as already discussed, these conditions {on} Bargmann invariants are only two among \textit{many} that such quantities should satisfy in order to corroborate the assumptions connected to the classical probabilistic interpretation provided by Eq.~\eqref{eq: classical Bargmann}.
It is, therefore, {to be expected} that negativity or imaginarity \textit{alone} are incapable of completely characterizing set coherence.
To illustrate this point, consider the {{ following three} states:
{
\begin{equation}
    \begin{aligned}
      \ket{\psi_1} &\defeq \ket{0}, \\
      \ket{\psi_2} &\defeq \frac{\ket{0} + \sqrt{3}\ket{1}}{2}, \\
      \ket{\psi_3} &\defeq \frac{\sqrt{3}\ket{0} + \ket{1}}{2}. 
    \end{aligned}
\end{equation}}%
These states {
satisfy $\Delta_3(\psi_1,\psi_2,\psi_3) = \sfrac{3}{8} > 0$ and similarly for the other orientation.
As such, all Bargmann invariants of third (and therefore of \textit{any}) order among states drawn from the set $\{\ket{\psi_1},\ket{\psi_2},\ket{\psi_3}\}$ are real and positive.}
{However, this set of three states is} set coherent. This} can be seen from the fact that they violate the coherence-witnessing inequalities from Refs.~\cite{galvao2020quantum,giordani2021witnesses}{, reviewed in  Section~\ref{sec: nonclassicality of Bargmann invariants preliminaries}}. { Concretely}, they violate the inequality
\begin{equation}
    -\Delta_2(\psi_1,\psi_2)+\Delta_2(\psi_1,\psi_3)+\Delta_2(\psi_2,\psi_3)\leq 1 ,
\end{equation}
whose left-hand side, for these states, evaluates to 
\begin{equation}
    -\frac{1}{4}+\frac{3}{4}+\frac{3}{4} \;=\; \frac{5}{4} \;>\; 1.
\end{equation}
Later on, we { revisit} these two-state overlap inequalities in a different context{; see Eq.~\eqref{eq: overlap inequalities}}.
For proof that these inequalities are witnesses of coherence, we refer the reader to  Refs.~\cite{galvao2020quantum, giordani2021witnesses} {(see also the brief summary in Sec.~\ref{sec: nonclassicality of Bargmann invariants preliminaries})}.\\

{\subsection{Can negativity and imaginarity be witnessed from overlaps?}}\label{subsec: negativity and imaginarity from overlaps}

{
We have seen that the negativity or imaginarity of (any) Bargmann invariant witness set coherence.
In other situations, knowledge of two-state overlaps is enough to witness set coherence; see Refs.~\cite{galvao2020quantum, wagner2022inequalities}.

A natural question is to understand the extent to which such  witnesses capture distinct aspects of nonclassicality.
Specifically, we ask whether---or to what extent---it is possible to infer negativity and imaginarity of higher-order Bargmann invariants from the knowledge of overlaps (i.e., second-order Bargmann invariants) alone.
}

{\subsubsection{Dimension-specific KD negativity and imaginarity from overlaps in the whole phase space}}

{We first review a known result of this kind.
Refs.~\cite{debievre2021complete,deBivre2023relating, arvidsson_Shukur2021conditions, xu2022classification, xu2022kirkwood}
identified conditions that imply
the presence of \textit{some} negativity or imaginarity in the KD distribution of a pure state $\ket{\psi}$,
using only knowledge of the overlaps between $\ket{\psi}$ and the states in each of the reference bases.
In fact, the condition only requires \textit{possibilistic} information about such overlaps, i.e., whether they are equal to zero (orthogonality) or not.

Concretely, let $\{\ket{i}\}_{i\in I}$ and $\ket{f}_{f \in F}$ be two orthonormal bases of a Hilbert space of dimension $d$,
to be used as reference for the KD distribution,
and make the usual assumption that $\langle i \vert f \rangle \neq 0$ for all $i,f$.
Given a pure state $\ket{\psi}$, define
\begin{equation}
    n_I(\psi) \defeq \left\vert\{i\in I \colon \langle i \vert \psi \rangle \neq 0\}\right\vert
\end{equation}
to be the number of elements of the basis $\{\ket{i}\}_{i\in I}$ not orthogonal to $\ket{\psi}$, or equivalently, with nonzero overlap $\Delta_2(i,\psi)$;
and similarly define $n_F$ with respect to the basis $\ket{f}_{f \in F}$.

Then, one} obtains negativity or imaginarity {at \textit{some} phase-space point $\xi(\psi \vert i,f)$
if }
\begin{equation}\label{eq: Bievre}
    n_I(\psi)+n_F(\psi)>d+1.
\end{equation}
{%

If, moreover, the two reference bases are mutually unbiased, the KD distribution $\xi(\psi \vert \cdot)$ is classical, i.e., it has real and non-negative values at every phase-space point,}
if and only if~\cite{xu2022kirkwood}
\begin{equation}\label{eq: xu}
    n_I(\psi)n_F(\psi) = d.
\end{equation}

Note that, in both cases, { the condition depends on the} dimension of the physical system. Moreover, in order to apply those results, one needs (some) information about all the overlaps $\Delta_2(i,f),\Delta_2(i,\psi),\Delta_2(\psi,f)$, namely whether they are zero or not.

{
We aim to address a related question, still about witnessing negativity or imaginarity from overlaps alone, but adopting a somewhat different perspective.
First, we focus on one phase-space point at a time, i.e., on each triplet $\ket{\psi}$, $\ket{i}$, $\ket{f}$ with fixed $i, f$.
Second, we are interested in dimension-independent witnesses, which avoid assuming a specific Hilbert state dimension.}\\

{\subsubsection{Inferring higher-order Bargmanns from overlaps: the three-state scenario}}\label{subsubsec: three state scenario}

{
The preceding discussion leads us to consider the minimal scenario where the question of witnessing negativity or imaginarity from two-state overlaps is meaningful.
Suppose that there are three (unknown) pure states $\ket{\psi_1}$, $\ket{\psi_2}$, $\ket{\psi_3}$,
and that we are able to obtain knowledge of the three pairwise overlaps
$\Delta_2(\psi_1,\psi_2), \Delta_2(\psi_2,\psi_3)$, and $\Delta_2(\psi_1,\psi_3)$.
We ask the following question:
is such information enough to infer about the negativity or imaginarity of the third-order Bargmann invariant $\Delta_3(\psi_1,\psi_2,\psi_3)$?

In the remainder of this Sec.~\ref{subsec: negativity and imaginarity from overlaps}, we obtain and review some impossibility results and some partial results in this direction.

We set some notation to facilitate precisely stating our results about the three-state overlap scenario.
Recall from Eq.~\eqref{eq:Q3} the definition of the set $\mQt \subseteq [0,1]^3$ 
of overlap triples realizable by quantum states, and from the preceding paragraph in Sec.~\ref{sec: nonclassicality of Bargmann invariants preliminaries} that of the subset $\mCt \subseteq \mQt$ of those overlap triples realizable by set-incoherent states.
We are interested in distinguishing various cases for the (not directly available) value of the third-order invariant among the states.
We write $\mRt$ (resp. $\mIt$, $\mPt$, $\mNt$, $\mZt$)
for the set of overlap triples realizable by three states $\rho_1, \rho_2, \rho_3$ whose third-order invariant $\Delta_3(\rho_1,\rho_2,\rho_3)$ is real (resp. nonreal, positive, negative, zero);
for example,
\begin{equation}
    \mRt = \left\{\left(\begin{matrix}
        \Delta_2(\rho_1,\rho_2)\\\Delta_2(\rho_1,\rho_3)\\ \Delta_2(\rho_2,\rho_3) 
    \end{matrix}\right) \in [0,1]^3 \colon \begin{matrix} \rho_1, \rho_2, \rho_3 \in \mathcal{D}(\mathcal{H}), \\[3pt] \text{Im}[\Delta_3(\rho_1,\rho_2,\rho_3)] \neq 0 \end{matrix} \right\}.
\end{equation}
Many of our results concern realizations with \textit{pure} states.
We write $\pQt$ for the overlap triples realizable by three \textit{pure} states,
\begin{equation}
    \pQt = \left\{\left(\begin{matrix}
        \Delta_2(\psi_1,\psi_2)\\\Delta_2(\psi_1,\psi_3)\\ \Delta_2(\psi_2,\psi_3) 
    \end{matrix}\right) \in [0,1]^3 \colon  \ket{\psi_1}\!, \ket{\psi_2}\!, \ket{\psi_3}\! \in \mathcal{H} \right\}.
\end{equation}
Similarly, $\pCt$, $\pRt$, $\pIt$, $\pPt$, $\pNt, \pZt$
denote the sets of overlap triples admitting a realization with pure states that moreover satisfies the condition corresponding to each letter as above.
Finally, we use the superscript $\wdimd{(\cdot)}$ to restrict the condition to realizations by states in $d$-dimensional Hilbert space.
Trivially, from reasoning at the level of realizations,
$\wdimd{\pXt} \subseteq \wdimd{\mXt}$, $\wdim{\aXt}{d} \subseteq \wdim{\aXt}{d+1}$, and $\aXt = \bigcup_d \wdim{\aXt}{d}$, for all $\cX \in \{ \cQ, \cC, \cR, \cI, \cP, \cN, \cZ\}$ and ${\circbullet} \in \{{\circ},{\bullet}\}$.

By the same token, it is clear  that $\aPt \cup \aNt \cap \aZt = \aRt$ and $\aRt \cup \aIt = \aQt$,
but observe that, crucially, these unions are not \textit{a priori} disjoint.
This is because an element of these sets does not carry all the information about a realization, but only the values of the overlaps among the three realizing states;
a particular overlap triple typically admits more than one realization.
Such limited information might not even be sufficient to distinguish real and imaginary values (or positive and negative values) for the third-order invariant of the realizing states.
In other words, the same tuple of overlaps could be realizable by different sets of three states, with different values for the third-order invariant, including negative and positive, or with and without imaginary part.\footnote{The case $\pZt$ when the third-order Bargmann invariant $\Delta_3(\psi_1,\psi_2,\psi_3)$ is zero is easier to separate because it implies one of the pairwise two-order invariants also vanish.} Our first result below (Theorem~\ref{theorem: impossibility overlaps}) establishes that this indeed happens.

Before stating it, we make a useful remark regarding imaginarity. 
A set of states can be simultaneously represented with only real amplitudes (i.e., with respect to the same basis) if and only if all its Bargmann invariants are real. 
Consequently, in the case of pure states, since a set of three states is fully characterized by Bargmanns up to order $3$, the set
$\pRt$ is equivalently described as the set of overlap values realizable by a choice of three pure states with \textit{real amplitudes}\footnote{For mixed states $\rho_1, \rho_2, \rho_3$, it is conceivable that $\Delta_3(\rho_1\rho_2\rho_3)$ is real but imaginarity still appears in higher-order invariants, such as $\Delta_4(\rho_1,\rho_2,\rho_3,\rho_2)$. As far as we know, it is an open question whether this can happen. For now, only one of the inclusions holds: the set of overlaps realizable by three states with real density matrices is contained in $\mRt$.}.

}

\subsubsection{Overlaps are, in general, not enough for negativity and imaginarity}

We now show that overlaps are, in general, not sufficient to decide negativity or imaginarity of third-order invariants, even in the case of pure states.

\begin{theorem}
    \label{theorem: impossibility overlaps} 
    Let $\ket{\psi_1}, \ket{\psi_2}, \ket{\psi_3}$ be three (unknown) pure states in a finite-dimensional Hilbert space $\mathcal{H}$.
    Suppose that one is given the values of the three pairwise overlaps 
    $\Delta_2(\psi_1,\psi_2), \Delta_2(\psi_2,\psi_3)$, and $\Delta_2(\psi_1,\psi_3)$.
    \begin{enumerate}
        \item In general, it is not possible to discriminate between real and strictly complex third-order Bargmann invariant, i.e., to decide whether $\mathrm{Im}[\Delta_3(\psi_1,\psi_2,\psi_3)]=0$ or $\mathrm{Im}[\Delta_3(\psi_1,\psi_2,\psi_3)]\neq 0$.
    
        \item Further assuming that $\Delta_3(\psi_1,\psi_2,\psi_3)$ is real, it is in general not possible to discriminate between positive and negative third-order Bargmann invariants, i.e., to decide
        whether
        $\Delta_3(\psi_1,\psi_2,\psi_3)>0$ or $\Delta_3(\psi_1,\psi_2,\psi_3)<0$.
    \end{enumerate}
    In summary, $\pRt\cap\pIt \neq \varnothing$ and  $\pPt\cap\pNt \neq \varnothing$.
\end{theorem}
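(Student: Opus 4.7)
The plan is to exploit the gauge freedom in the Bargmann phase. Given three pure states $\ket{\psi_1}, \ket{\psi_2}, \ket{\psi_3}$, define the Gram matrix $G$ with entries $G_{ij} = \braket{\psi_i|\psi_j}$. Then
$$\Delta_3(\psi_1,\psi_2,\psi_3) \;=\; G_{12}G_{23}G_{31},$$
so $|\Delta_3|^2 = \Delta_2(\psi_1,\psi_2)\Delta_2(\psi_2,\psi_3)\Delta_2(\psi_1,\psi_3)$ is determined by the overlap triple. The only remaining degree of freedom is the phase $\phi = \arg\Delta_3$, which is invariant under rephasings $\ket{\psi_i}\mapsto e^{i\theta_i}\ket{\psi_i}$ (a genuine Bargmann phase). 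Thus the theorem reduces to showing that there are overlap triples $(p_{12},p_{13},p_{23})$ for which both $\phi=0$ (giving $\Delta_3>0$), $\phi=\pi$ (giving $\Delta_3<0$), and $\phi=\pi/2$ (giving nonzero imaginary part) are all simultaneously realizable by pure-state triples with those overlaps.

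Next I would translate realizability into positive semidefiniteness of $G$. Parameterizing $G_{12}=\sqrt{p_{12}}\,e^{i\alpha}$, $G_{23}=\sqrt{p_{23}}\,e^{i\beta}$, $G_{13}=\sqrt{p_{13}}\,e^{i\gamma}$, so that $\phi=\alpha+\beta-\gamma$, a direct computation gives
\begin{equation*}
\det G \;=\; 1 - p_{12} - p_{23} - p_{13} + 2\sqrt{p_{12}p_{23}p_{13}}\cos\phi.
\end{equation*}
The $2\times 2$ principal minors reduce to $p_{ij}\leq 1$, so $G\succeq 0$ is equivalent to $\det G\geq 0$, i.e.,
\begin{equation*}
\cos\phi \;\geq\; \frac{p_{12}+p_{23}+p_{13}-1}{2\sqrt{p_{12}p_{23}p_{13}}}.
\end{equation*}
In particular, whenever $p_{12}+p_{23}+p_{13}\leq 1 - 2\sqrt{p_{12}p_{23}p_{13}}$, \emph{every} $\phi\in[0,2\pi)$ yields a valid $G\succeq 0$.

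Therefore I would pick, as explicit witnesses, the equilateral triple $p_{12}=p_{23}=p_{13}=\tfrac{1}{10}$, for which $p_{12}+p_{23}+p_{13}=\tfrac{3}{10}<1$ and the condition above holds comfortably. For each of $\phi\in\{0,\pi,\pi/2\}$ I would construct three explicit pure states in $\mathbb{C}^3$ realizing the corresponding Gram matrix: take $\ket{\psi_1}=\ket{0}$, fix $\ket{\psi_2}$ in the span of $\{\ket{0},\ket{1}\}$ with the desired overlap with $\ket{\psi_1}$, and determine $\ket{\psi_3}$ from its (prescribed) inner products with the previous two, using the third basis vector $\ket{2}$ as a reservoir for the remaining amplitude (equivalently, Cholesky-factorize $G$). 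The three choices of $\phi$ give respectively a realization in $\pPt$, in $\pNt$, and in $\pIt$, all sharing the same overlap triple $(\tfrac{1}{10},\tfrac{1}{10},\tfrac{1}{10})$. This simultaneously witnesses $\pRt\cap\pIt\neq\varnothing$ and $\pPt\cap\pNt\neq\varnothing$, proving both parts.

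The routine obstacle is really just bookkeeping with the Gram matrix parameterization and verifying that the explicit $\mathbb{C}^3$ realization indeed produces the claimed $\Delta_3$ (i.e., that the gauge-invariant phase matches $\phi$); there is no deeper difficulty, since the crux---that the Bargmann phase is a free parameter once positivity of $G$ is ensured---captures the essence of the impossibility.
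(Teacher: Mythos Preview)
Your proof is correct and takes a genuinely different route from the paper's.  The paper proceeds by exhibiting explicit ad hoc triples of pure states: for part~1 it shows that $(\ket{0},\ket{+},\ket{i+})$ and the three real-amplitude qutrit states $(\ket{0}+\ket{1})/\sqrt{2}$, $(\ket{1}+\ket{2})/\sqrt{2}$, $(\ket{0}+\ket{2})/\sqrt{2}$ share the overlap triple $(\tfrac12,\tfrac12,\tfrac12)$ while having nonreal and real $\Delta_3$ respectively; for part~2 it writes down two real-amplitude triples (one in $\mathbb{C}^2$, one in $\mathbb{C}^3$) with common overlap triple $(\tfrac14,\tfrac14,\tfrac14)$ but $\Delta_3=\mp\tfrac18$.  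Your argument is more structural: you observe that the overlaps fix $|\Delta_3|$ and that the remaining Bargmann phase $\phi$ is constrained only by the Gram-matrix positivity condition $\det G=1-\sum p_{ij}+2\sqrt{p_{12}p_{23}p_{13}}\cos\phi\geq 0$, so that for sufficiently small overlaps (e.g.\ your $(\tfrac{1}{10},\tfrac{1}{10},\tfrac{1}{10})$) \emph{every} phase is realizable, and a Cholesky factorization in $\mathbb{C}^3$ provides the states.  This buys you a complete characterization of which phases are attainable for a given overlap triple---indeed your inequality is exactly the boundary of the elliptope of Eq.~\eqref{eq: convex body inequality}---whereas the paper's approach buys immediate verifiability with small explicit states and no appeal to Gram-matrix theory.
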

\begin{proof}
    We prove statement 1 by constructing a simple counterexample.
    The idea is to find realizations with equal overlaps
    but different imaginary parts of the third-order invariant.
    
    Considering the triplet of states $(\ket{0}, \ket{+}, \ket{i+})$, where $\ket{i+} = (\ket{0} +i \ket{1})/\sqrt{2}$,
    we have that all pairwise overlaps are equal to $\sfrac{1}{2}$ while their third-order invariants are $\Delta_3 = (1 \pm i)/4$ (with the sign depending on the order), hence $\text{Im}[\Delta_3] \neq 0$.
    
    We now give a different triplet of states with all pairwise overlaps equal to $\sfrac{1}{2}$,
    but whose states can all be described with only real amplitudes.
    The simplest such example would be the triplet consisting of three maximally mixed qubits, i.e., $(\sfrac{\mathbb{1}}{2},\sfrac{\mathbb{1}}{2},\sfrac{\mathbb{1}}{2})$.
    This would suffice to establish the result for mixed states, or even to show
    $\pIt \cap \mRt \neq \varnothing$ (even more specifically, $\wdim{\pIt}{2} \cap \wdim{\mRt}{2} \neq \varnothing$).
    For the stated result, however, we need an example with pure states. The
    following  states (in a Hilbert space of dimension $\geq 3$) have real amplitudes and pairwise overlaps $\sfrac{1}{2}$:
    \begin{equation}
        \frac{\ket{0} + \ket{1}}{\sqrt{2}}, \frac{\ket{1} + \ket{2}}{\sqrt{2}}, \frac{\ket{0} + \ket{2}}{\sqrt{2}}  .
    \end{equation}
    We have thus established that $(\sfrac{1}{2},\sfrac{1}{2},\sfrac{1}{2}) \in \wdim{\pIt}{2} \cap \wdim{\pRt}{3}$.
    
    We now proceed to prove statement 2. We do so, again, by constructing a counterexample where two different realizations return the same two-state overlaps but third-order invariants {that are real and} of opposite sign.
    For the three states
    \begin{equation}
        \ket{0}, \frac{\ket{0}  + \sqrt{3}\ket{1}}{2}, \frac{\ket{0} - \sqrt{3}\ket{1}}{2},
    \end{equation}
    all three overlaps equal $\sfrac{1}{4}$, while $\Delta_3 = -\sfrac{1}{8} < 0$.
    On the other hand, for the three states
    \begin{equation}
        \ket{0}, \frac{\ket{0} + \sqrt{3}\ket{1}}{2}, \frac{3\ket{0} + \sqrt{3}\ket{1} + \sqrt{24}\ket{2}}{6},
    \end{equation}
    all overlaps, once more, equal $\sfrac{1}{4}$, while $\Delta_3 = +\sfrac{1}{8} > 0$.
   This shows that $(\sfrac{1}{4},\sfrac{1}{4},\sfrac{1}{4}) \in \wdim{\pNt}{2} \cap \wdim{\pPt}{3}$.
\end{proof}

The theorem above shows that the knowledge of two-state overlaps for a given triple of states, in general, provides insufficient information about their third-order invariant
to decide its imaginarity or negativity.
The following theorem, pieced together from results in Ref.~\cite{galvao2020quantum}, is a result in a similar direction, for imaginarity.

\begin{theorem}\label{theorem: triplets are real}
    {A triple of overlaps is realizable by three quantum states if and only if it is a convex combination of triples of overlaps realizable by pure states with real amplitudes.
    In other words, $\mQt$ is the convex hull of $\pRt$.}
\end{theorem}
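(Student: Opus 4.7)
I establish both inclusions separately. For the forward inclusion $\text{conv}(\pRt) \subseteq \mQt$, I simply note that $\pRt \subseteq \mQt$ holds trivially (a pure state with real amplitudes is a particular quantum state), so the convex hull of $\pRt$ is contained in the convex hull of $\mQt$; invoking the convexity of $\mQt$ recalled after Eq.~\eqref{eq:Q3} then yields the claim.

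For the reverse inclusion, I would prove the stronger statement $\mQt \subseteq \pRt$, which renders the convex hull superfluous on the right-hand side. Fix any $(a,b,c) \in \mQt$; by Eq.~\eqref{eq: convex body inequality}, one has $a,b,c \in [0,1]$ and $a+b+c-2\sqrt{abc}\leq 1$. Consider the $3 \times 3$ real symmetric matrix
\[
G \;=\; \begin{pmatrix} 1 & \sqrt{a} & \sqrt{b} \\ \sqrt{a} & 1 & \sqrt{c} \\ \sqrt{b} & \sqrt{c} & 1 \end{pmatrix} .
\]
Its leading principal minors are $1$, $1-a\geq 0$, and $\det G = 1 + 2\sqrt{abc} - (a+b+c) \geq 0$, the last of these being precisely Eq.~\eqref{eq: convex body inequality} rewritten. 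Hence $G$ is positive semidefinite.

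Since any $3 \times 3$ PSD matrix with unit diagonal is the Gram matrix of three real unit vectors, I extract $\ket{\phi_1}, \ket{\phi_2}, \ket{\phi_3}$, realized as real-amplitude vectors in a $3$-dimensional real subspace of some $\mathcal{H}$, whose pairwise inner products are the nonnegative square roots $\sqrt{a}, \sqrt{b}, \sqrt{c}$. Their overlap triple is exactly $(a,b,c)$, and their third-order Bargmann invariant $\Delta_3(\phi_1,\phi_2,\phi_3) = \sqrt{abc} \geq 0$ is real. Hence $(a,b,c) \in \pRt$ (in fact in $\pPt$), which completes the proof.

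The main obstacle I anticipate is ensuring that Eq.~\eqref{eq: convex body inequality} is truly \emph{tight} for $\mQt$, i.e., that every triple satisfying both it and $[0,1]^3$ actually lies in $\mQt$. The excerpt quotes this inequality only as a necessary condition; however, the Gram matrix construction above simultaneously proves sufficiency (it produces an explicit pure-state realization from the inequality alone), so the argument is internally consistent: everything I need about $\mQt$ is either granted by the excerpt or constructed along the way. A secondary subtlety, not a real obstacle, is that one has a sign freedom in choosing the off-diagonal entries of $G$; picking all positive signs is convenient because it minimizes the number of PSD conditions to check and yields a realization that is unambiguously in $\pRt$.
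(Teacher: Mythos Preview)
Your argument is correct and takes a genuinely different, more elementary route than the paper. The paper proceeds by (i) reducing mixed-state realizations to convex combinations of pure-state ones, (ii) writing three pure states in a canonical parametrized form and using calculus to locate the extremal overlap triples, showing these occur at real-amplitude configurations, and (iii) separately arguing convexity of $\mQt$ via a star-shapedness construction at the origin together with the elliptope description. By contrast, you bypass the parametrization and optimization entirely: you feed the overlap triple directly into a Gram matrix, verify positive semidefiniteness via Eq.~\eqref{eq: convex body inequality}, and read off real unit vectors. This actually proves the stronger statement $\mQt = \pRt$ (every realizable overlap triple admits a real pure-state realization, not merely a convex combination of such), which the paper does not isolate. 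Your approach buys brevity and a cleaner algebraic structure; the paper's approach is more hands-on and yields explicit state parametrizations along the way.

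Two small technical remarks. First, Sylvester's criterion with \emph{leading} principal minors certifies positive definiteness, not semidefiniteness; for PSD one must check all principal minors. In your matrix this is harmless, since the remaining $2\times 2$ principal minors are $1-b$ and $1-c$, both nonnegative, but you should say so. Second, both your argument and the paper's ultimately rely on the convexity of $\mQt$ (you invoke it for the forward inclusion; the paper for the reverse). Since you have shown $\mQt = \pRt$, the theorem's statement $\mQt = \text{conv}(\pRt)$ becomes equivalent to convexity of $\mQt$, which you take from the cited literature---exactly as the paper does.
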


\begin{proof}
    It was shown in Ref.~\cite{galvao2020quantum} that all {the overlap triples that are }extremal points of the set $\mQt$ 
    are realizable using real-amplitude pure states. For completeness, we proceed to sketch the proof of this result.

    First, note that a Bargmann invariant for mixed states can be written as a convex combination of Bargmann invariants of pure states, using a convex decomposition of each mixed state into pure states. Fixing a decomposition for each mixed state in a set $\{\rho_i\}_i$, all Bargmann invariants drawn from the set can be written as the \textit{same} convex mixture of Bargmann invariants among pure states. In particular, this applies to overlap triples, hence $\mQt$ is contained in the convex hull of $\pQt$. It thus suffices to look at realizations by pure states to characterize the border of $\mQt$.
    
    Now, any three pure states can be unitarily sent to a concrete representation of the form
    \begin{align*}
        \ket{\psi_1} &= \ket{0}, \\
        \ket{\psi_2}  &= \cos(\beta)\ket{0} + \sin(\beta)\vert 1\rangle,\\
        \ket{\psi_3} &= \cos(\gamma)\ket{0} + e^{i\phi}\sin(\gamma)\sin(\alpha)\vert 1 \rangle + \sin(\gamma)\cos(\alpha)\vert 2 \rangle,
    \end{align*}
    with $\alpha,\beta,\gamma \in [0,\sfrac{\pi}{2})$ and $\phi \in [0,2\pi)$.
    This fixes $\Delta_2(\psi_1,\psi_2) = \cos^2(\beta)$ and $\Delta_2(\psi_1,\psi_3) = \cos^2(\gamma)$ to be real.
    The existence of imaginary values in the representation of the states may therefore only affect the third overlap $\Delta_2(\psi_2,\psi_3)$.
    To conclude the argument, the maximal values achievable by $\Delta_2(\psi_2,\psi_3)$ for any fixed values of $\beta, \gamma$ can be found as the extreme points of the multivariate function $\Delta_2(\psi_2(\alpha,\phi),\psi_3(\alpha,\phi))$. Ref.~\cite[Appendix A]{galvao2020quantum} performs these calculations explicitly, showing this happens either if $\sin(\alpha)= 0$ or if $\sin(\phi)= 0$, both conditions that imply the three states have real amplitudes.
    We conclude that the {overlap triples in the} boundary of $\pQt$ are realizable by pure states with real amplitudes only. Therefore, $\mQt$ is in the convex hull of $\pRt$.

    For the other direction, it suffices to establish that $\mQt$ is convex. In fact, $\pQt$ (or even $\pRt$) is.
    This was observed numerically in  Ref.~\cite{galvao2020quantum}. Moreover, its bounds were analytically found to correspond to the boundaries of an \textit{elliptope}, discussed in detail in Ref.~\cite{le2023quantum}.\footnote{There is a dual relationship between the scenarios investigated in Ref.~\cite{le2023quantum} and those in Ref.~\cite{galvao2020quantum}. For more information about these aspects, we refer the reader to the discussion in Ref.~\cite{wagner2022inequalities}.}.
    More formally, we establish convexity by showing that if a triple of overlaps $(\Delta_{12},\Delta_{13},\Delta_{23})$ is pure-state realizable,
    then so is any triple in the line segment connecting it to $(0,0,0)$, i.e. any triple of the form $(\lambda \Delta_{12}, \lambda \Delta_{13}, \lambda \Delta_{23})$ for $\lambda \in [0,1]$.
    Starting from $\ket{\psi_1}\!, \ket{\psi_2}\!, \ket{\psi_3}\!$ the pure states in a Hilbert space $\mathcal{H}$ realizing $(\Delta_{12},\Delta_{13},\Delta_{23})$, 
    consider the larger Hilbert space $\mathcal{H} \oplus \mathbb{C}^3$ with $\{\ket{e_i}\}_{i=1}^3$ a basis for the additional summand, and take $\ket{\psi'_i} \defeq \sqrt{\sqrt{\lambda}}\ket{\psi_i} \oplus \sqrt{1-\sqrt{\lambda}}\ket{e_i}$.
    Then, for $i \neq j$, $\langle \psi'_i \vert \psi'_j \rangle = \sqrt{\lambda} \langle \psi_i \vert \psi_j \rangle + (1-\sqrt{\lambda}) \langle e_i \vert e_j \rangle = \sqrt{\lambda} \langle \psi_i \vert \psi_j \rangle $,
    and so $\Delta_2(\psi'_i,\psi'_j) = \lambda \Delta_2(\psi_i,\psi_j)$. Therefore, $\{\ket{\psi'_i}\}_{i=1}^3$ realizes $(\lambda \Delta_{12}, \lambda \Delta_{13}, \lambda \Delta_{23})$, as wanted.
\end{proof}

In other words, this theorem shows that the realizable overlap triples can be described as convex combinations of overlap triples realizable by pure states that are real with respect to some reference basis.
If Theorem~\ref{theorem: impossibility overlaps} showed that an overlap triple does not \textit{always} give enough information to decide imaginarity of the third-order Bargmann, Theorem~\ref{theorem: triplets are real} seems to indicate that it can \textit{never}---or \textit{hardly}---provide useful information in this regard.

These results highlight that, to test imaginarity  using overlaps, one needs to {make} further assumptions on the states, such as on purity or the Hilbert space dimension.
Indeed, a careful look at the proof of Theorem~\ref{theorem: impossibility overlaps} suggests there is scope for such examples if one restricts to \textit{pure} single-qubit states, $\wdim{\pQt}{2}$. 

It turns out that, for instance, three \textit{pure} single-qubit states that have pairwise overlaps of $\sfrac{1}{2}$ must have an imaginary third-order invariant.

\subsubsection{Negativity and overlaps for real amplitude states}

In contrast to imaginarity, for negativity, we are able to obtain some positive results:
restricting to realizations with real-amplitude states (i.e., to $\pRt$), it is possible to draw conclusions about the negativity of third-order Bargmann invariants given only their pairwise overlaps.
In fact, it turns out that negativity and overlap nonclassicality, as witnessed by the overlap inequalities such as Eq.~\ref{eq: preliminaries 3-cycle inequalities} studied in Refs.~\cite{galvao2020quantum, wagner2022inequalities, wagner2022coherence}, witness mutually exclusive forms of set coherence.

We first consider the two-dimensional case, for which we are able to obtain a full discriminating result, partitioning $\wdim{\pRt}{2}$.
Given pure states $\ket{\psi_1}, \ket{\psi_2}, \ket{\psi_3}$, the following inequalities are witnesses of set coherence: the first three are the overlap inequalities defining the polytope $\mCt$, the last is nonnegativity of the third-order Bargmann invariant.
\begin{equation}\label{eq: overlap inequalities}
\begin{aligned}
        +\Delta_2(\psi_1,\psi_2)+\Delta_2(\psi_2,\psi_3)-\Delta_2(\psi_1,\psi_3) &{ \leq } 1,\\
         +\Delta_2(\psi_1,\psi_2)-\Delta_2(\psi_2,\psi_3)+\Delta_2(\psi_1,\psi_3) &{ \leq } 1,\\
          -\Delta_2(\psi_1,\psi_2)+\Delta_2(\psi_2,\psi_3)+\Delta_2(\psi_1,\psi_3) &{ \leq } 1,\\
        \Delta_3(\psi_1,\psi_2,\psi_3) & {\geq} 0.
    \end{aligned}
\end{equation}
If the $\psi_i$ are one-qubit pure states with real amplitudes with respect to some basis, it turns out that no two of these inequalities can be violated.
For any such choice of three states, there exists a unitary that maps them to the rebit subspace.
This implies that we can parametrize the states in the following way;
\begin{equation}
    \begin{aligned}
        { \ket{\psi_1} } &= \ket{0}, \\
        { \ket{\psi_2} }    &= \cos(\theta) \ket{0} + \sin(\theta)\ket{1}, \\
        { \ket{\psi_3} }    &= \cos(\phi) \ket{0} + \sin(\phi)\ket{1}.
    \end{aligned}
\end{equation}
with $\theta, \phi \in [0,\pi]$ being sufficient to capture all possible values. This yields the following values for the Bargmann invariants:
\begin{equation}
    \begin{aligned}
    \Delta_2(\psi_1,\psi_2)&=\cos^2(\theta), \\
    \Delta_2(\psi_1,\psi_3)&=\cos^2(\phi), \\
    \Delta_2(\psi_2,\psi_3)&=\cos^2(\theta-\phi), \\
    \Delta_3(\psi_1,\psi_2,\psi_3) &= \cos(\theta-\psi)\cos(\theta)\cos(\phi).
    \end{aligned}
\end{equation}
The domains in which each of the inequalities from Eq.~\eqref{eq: overlap inequalities} is violated are plotted in Fig.~\ref{fig:real_qubits_negativity}, in terms of the parameters $\theta$ and $\phi$.
There, we can observe that each value of $\theta$ and $\phi$ defining a triplet of states, except when those states are set-incoherent corresponding to borders of the colored regions, violates one and only one inequality. 

We thus conclude that $\wdim{\pRt}{2}$ partitions into two \textit{disjoint} subsets: $\wdim{\pPt}{2}$ and $\wdim{\pCt}{2}$, with the latter being the disjoint union of $\wdim{\pNt}{2}$ and $\wdim{\pZt}{2}$ (the only triples with a set-incoherent realization, where at least one of the pairwise overlaps vanishes).

\begin{figure}
    \centering
    \includegraphics[width=\columnwidth]{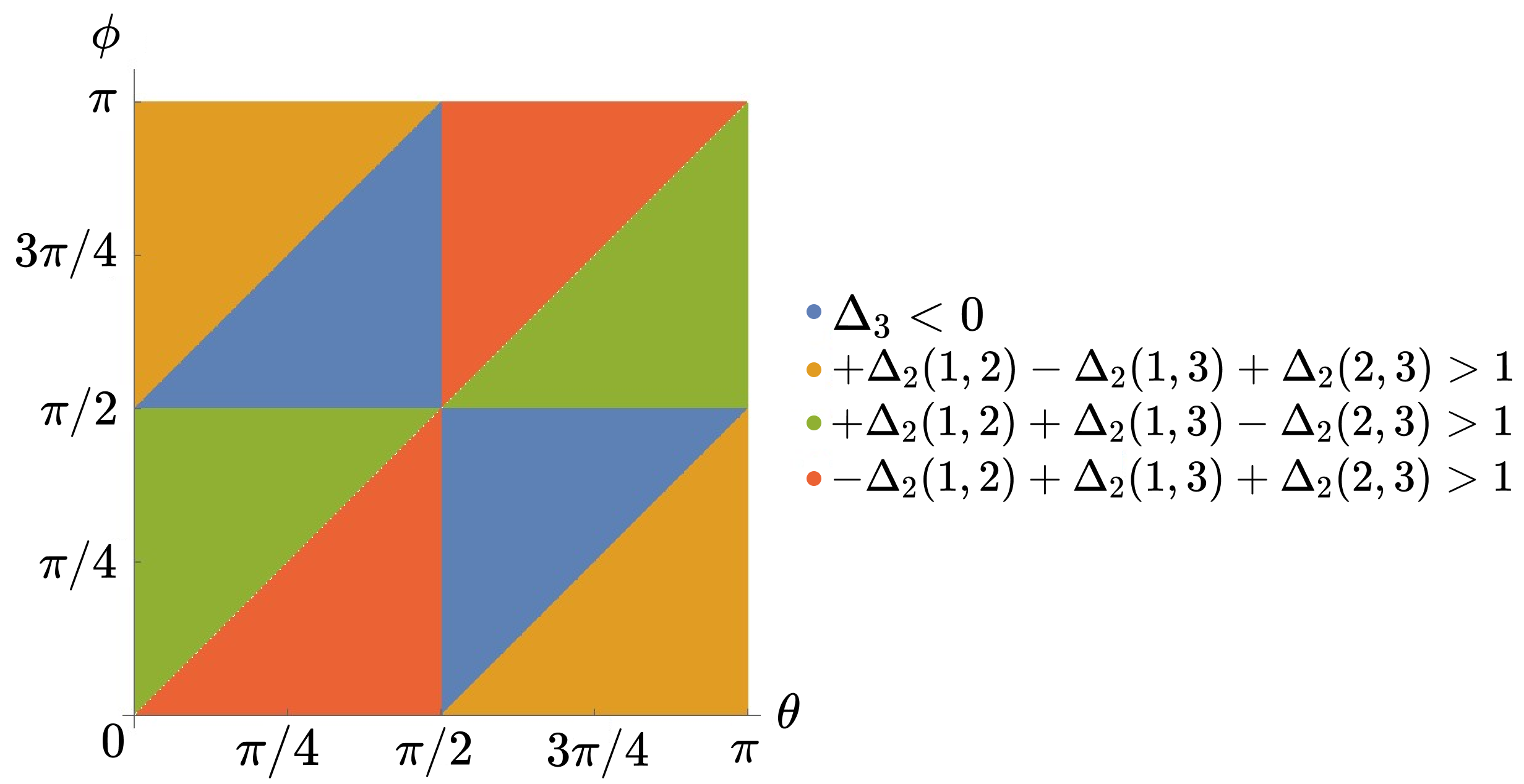}
    \caption{\textbf{Negativity of third-order invariants from measurements of overlaps only.} For real-amplitude one-qubit states, we show a complementarity relation between negativity of third-order invariants and overlap inequalities. Here we use the notation $\Delta_2(i,j) = \vert \langle \psi_i \vert \psi_j \rangle \vert^2$ and $\Delta_3 = \langle \psi_1 \vert \psi_2 \rangle \langle \psi_2 \vert \psi_3 \rangle \langle \psi_3 \vert \psi_1 \rangle$.}
    \label{fig:real_qubits_negativity}
\end{figure}

Turning our attention to arbitrary dimension (i.e., to $\pRt$), this observation leads us to search for similar results, converging on the lemma below.
Even in this more general case, violation of overlap inequalities and negativity of $\Delta_3$ are mutually exclusive, i.e., $\pNt \subseteq \pCt$.

\begin{lemma}\label{lemma: negativity from overlaps}
    Let $\ket{\psi_1}\!, \ket{\psi_2}\!, \ket{\psi_3}$ be pure states {that are} real with respect to some basis{, in} a finite-dimensional Hilbert space $\mathcal{H}$ of dimension $d$.
    Then, violation of any of the inequalities
    \begin{equation}
        \begin{aligned}
            &\Delta_{2}(\psi_1,\psi_2)+\Delta_{2}(\psi_1,\psi_3)-\Delta_{2}(\psi_2,\psi_3) \leq 1 ,\\
            &\Delta_{2}(\psi_1,\psi_2)-\Delta_{2}(\psi_1,\psi_3)+\Delta_{2}(\psi_2,\psi_3) \leq 1 ,\\
            -&\Delta_{2}(\psi_1,\psi_2)+\Delta_{2}(\psi_1,\psi_3)+\Delta_{2}(\psi_2,\psi_3) \leq 1 .
        \end{aligned}
    \end{equation}
    implies that 
    \begin{equation}
        \Delta_{3}(\psi_1,\psi_2,\psi_3) > 0.
    \end{equation}
    In other words, $(\pRt \setminus \pCt) \subseteq \pPt$ and thus $\pNt \subseteq \pCt$.

    If, moreover, $d=2$, the converse also holds: if all overlap inequalities are strictly satisfied (i.e., strictly less than $1$), we obtain $\Delta_{3}<0$.
\end{lemma}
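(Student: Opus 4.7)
The plan is to work with the three real inner products $a = \langle\psi_1|\psi_2\rangle$, $b = \langle\psi_1|\psi_3\rangle$, $c = \langle\psi_2|\psi_3\rangle$, which are real because the states have real amplitudes with respect to a common basis. Then $\Delta_2(\psi_i,\psi_j)$ is the square of the relevant inner product and $\Delta_3(\psi_1,\psi_2,\psi_3) = abc$. The Gram matrix $G$ of the three unit vectors---with $1$s on the diagonal and $a,b,c$ off-diagonal---is positive semidefinite, hence
\begin{equation*}
    \det G \,=\, 1 + 2abc - a^2 - b^2 - c^2 \,\geq\, 0,
\end{equation*}
which provides the key bound $2abc \geq a^2 + b^2 + c^2 - 1$.

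For the first part, suppose without loss of generality that $a^2 + b^2 - c^2 > 1$, i.e., the first overlap inequality is violated. Rewriting the Gram bound yields $2abc \geq (a^2 + b^2 - c^2) + 2c^2 - 1 > 2c^2 \geq 0$, and the case $c = 0$ is excluded because then $\det G \geq 0$ would force $a^2 + b^2 \leq 1$, contradicting the assumed violation. Hence $\Delta_3 = abc > c^2 > 0$. The other two overlap inequalities are handled by a symmetric relabeling of the three states, under which $abc$ is invariant.

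For the converse in $d = 2$, I would use that three real vectors in $\mathbb{R}^2$ must be linearly dependent, strengthening $\det G \geq 0$ to $\det G = 0$ and hence $2abc = a^2 + b^2 + c^2 - 1$. Using the parametrization of the lemma statement, $a = \cos\theta$, $b = \cos\phi$, $c = \cos(\theta - \phi)$ with $\theta, \phi \in [0,\pi]$ (so that $\sin\theta, \sin\phi \geq 0$), identities such as $\cos^2\theta + \cos^2\phi - \cos^2(\theta - \phi) = 1 - 2\sin\theta\sin\phi\cos(\theta - \phi)$ recast the three strict overlap inequalities respectively as
\begin{align*}
    \sin\theta\sin\phi\cos(\theta-\phi) &> 0, \\
    \sin\theta\cos\phi\sin(\theta-\phi) &> 0, \\
    \sin\phi\cos\theta\sin(\theta-\phi) &< 0.
\end{align*}
From the first, one extracts $\sin\theta, \sin\phi > 0$ and $\cos(\theta-\phi) > 0$. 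Multiplying the second and third---of opposite signs---yields $\sin\theta\sin\phi\cos\theta\cos\phi\sin^2(\theta-\phi) < 0$; dividing by the strictly positive factor $\sin\theta\sin\phi\sin^2(\theta-\phi)$ gives $\cos\theta\cos\phi < 0$. Combining, $\Delta_3 = \cos\theta\cos\phi\cos(\theta-\phi) < 0$, as desired.

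The main obstacle is the trigonometric sign-analysis for the converse: the constraint $\det G = 0$ does not factor cleanly as a polynomial in $(a,b,c)$, so some passage through an angular parametrization seems unavoidable, and the sign-tracking across the three factorized inequalities must be carried out carefully. A secondary subtlety is handling the boundary cases where $\sin\theta$, $\sin\phi$, or $\sin(\theta-\phi)$ vanishes; these correspond to at least one pair of states being parallel or orthogonal, which saturate one of the overlap inequalities and are therefore automatically excluded by the strictness hypothesis.
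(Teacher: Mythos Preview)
Your argument is correct and takes a genuinely different---and cleaner---route than the paper's.

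For the forward implication, the paper parametrizes the three real states by angles $\alpha,\beta,\gamma$ in a three-dimensional subspace and treats the three overlap inequalities $h_1>1$, $h_2>1$, $h_3>1$ case by case via trigonometric manipulation. The first two cases yield $\Delta_3>0$ by a short chain of inequalities, but for $h_3>1$ the paper resorts to an auxiliary lemma whose proof is declared to ``be easily seen to hold numerically,'' together with a supporting figure. By contrast, your Gram-determinant bound $2abc \geq a^2+b^2+c^2-1$ handles all three cases at once by a symmetric one-line rearrangement, and the degenerate case (the distinguished overlap vanishing) is ruled out by the same determinant condition. This is both shorter and fully rigorous.

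For the $d=2$ converse, the paper does not give an analytic argument at all: it refers back to the two-parameter plot in Fig.~8, observing that the four coloured regions tile the $(\theta,\phi)$-square with common boundaries exactly where one inequality saturates. Your replacement---using $\det G=0$ to reduce to the angular parametrization, then factorizing each strict overlap inequality as a sign condition on a product of sines and cosines, and extracting $\cos\theta\cos\phi<0$ by multiplying the second and third conditions---is a genuine analytic proof. The boundary-case remark you make (that $\sin\theta$, $\sin\phi$, or $\sin(\theta-\phi)$ vanishing saturates one of the inequalities, hence is excluded by strictness) is exactly what is needed to justify the divisions.

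In short: same statement, different engines. The paper's proof leans on explicit coordinates and partial numerical verification; yours leans on the positive-semidefiniteness of the Gram matrix and is uniform across the three inequalities, at the modest cost of the trigonometric sign-chase for the two-dimensional converse.
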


We prove this lemma in Appendix \ref{appendix: proof of the lemma}. The part of the lemma regarding dimension $d=2$ was already shown by our previous argument and is depicted in Fig.~\ref{fig:real_qubits_negativity}. Overlaps provide a simple test for when the third-order invariants are positive, in which case it might simplify the search for negativity.

\section{Discussion and future directions}\label{sec: conclusion}

In this work, we have presented two distinct approaches for the study of central concepts in quantum mechanics, namely, weak values, KD distributions, and state spectra. 

In this work, we {have} explored two complementary approaches to a better understanding of various selected constructions useful in quantum information processing{, including weak values, KD distributions, and state spectra.} The first approach is more practical and application-focused, where we show simple quantum circuits that can be used to estimate various constructions of fundamental interest by measuring multivariate traces known as Bargmann invariants. Specifically, we have described circuits to experimentally estimate the spectrum of any mixed quantum state, weak values, KD quasiprobability distribution, post-selected quantum Fisher information, and OTOCs. The second approach is more foundational and investigates conditions for the nonclassicality of these functions, {in particular,} negativity and imaginarity, relating it to the presence of a recently introduced quantum resource termed set coherence, which acknowledges coherence as a basis-independent, relational property of a set of states.

These results bring proof-of-principle tests of the nonclassicality of quantum theory---a vital resource for information processing---closer to current experimental capabilities. Indirect measurements of unitary invariants up to fourth order have been made in different platforms~\cite{jones2022distinguishability,bitter1987manifestation,suter1988study}. We believe that the unified view of nonclassicality we discussed here may boost the interest in a systematic exploration of nonclassicality witnessed and quantified by unitary invariants.
Moreover, the connection { between} coherence theory {and} all the above constructions motivates not only the { development} of a formal resource theory based on Bargmann invariants but also the experimental investigation of coherence for higher-dimensional systems through the lens of { resource-theoretic} monotones.

The relationship between Bargmann invariants and KD distributions could bring quantitative understanding to other problems of interest. For instance, nonclassicality of Bargmann invariants understood as { witness} of coherence may clarify the connection between the nonclassicality of OTOCs and scrambling of information~\cite{gonzalez2019out} for quantum computation. This new view could lead to more robust and rigorous ways of benchmarking the scrambling of information against decoherence that goes beyond numerical evidence~\cite{harris2022benchmark}. Another interesting point to notice is that the tools described in Ref.~\cite{oszmaniec2021measuring} and further developed here clarify the understanding of when it is indeed \textit{crucial} to experimentally estimate higher-order invariants in extended KD distributions. 
It would be interesting to characterize the minimum cardinality and/or { order} of sets of Bargmann invariants that are necessary for quantifying nonclassicality in particular applications.

We believe that our circuits and the unified picture in terms of Bargmann invariants will motivate further progress in analyzing and quantifying nonclassicality and its relation with quantum information advantage for different tasks. As an example, in Ref.~\cite{wagner2022coherence}, the authors prove a quantum advantage for the task of quantum interrogation relying on a function of overlaps. Our work suggests that higher-order invariants may provide a more complete characterization of advantages for general tasks.

{Conceptually, we expect future work to consolidate---or reject---the view that violations of different constraints over Bargmann invariants, in the form of inequalities, are a helpful guide to identifying different nonclassical resources provided by quantum mechanics.
We view some of our results as preliminary steps in this direction.
A complete answer to these questions, together with a formal framework for addressing the possible consequences for quantum information science, has yet to be investigated.}

As a final technical remark, even though some of the results regarding the use of the univariate traces $\text{Tr}(\rho^n)$ for $n=2,\dots,d$ have previously appeared in the literature, we stress that our novel sample complexity analysis and numerical experiments have shown concrete benefits { in} learning the spectrum using the higher-order Bargmann invariants.
Interesting future work in this direction could involve experimentally probing $\text{Tr}(\rho^n)$ for small $n=2,3,4$ { and} using the { obtained} information to estimate the spectrum of density matrices.
These estimates can serve as subroutines in variational quantum eigensolvers aimed at preparing Gibbs states, such as those in Ref.~\cite{wang2021variational}.

Given the foundational importance of KD distributions and weak values, we expect that our unified framework based on measurements of Bargmann invariants will provide both theoretical insight and practical recipes for experimental implementations of the many applications we analyze here.

\begin{acknowledgments}
We would like to thank Stephan De Bi\`evre, Mafalda Francisco Ram\^oa da Costa Alves, Francesco Buscemi, and Agung Budiyono for providing useful comments on an early version of this work. We would also like to thank an anonymous referee for pointing out the simple yet elegant Eq.~\eqref{eq: KD extended as Bargmann Referee}, which provides a link between the extended KD distribution and higher-order Bargmann invariants.

We thank the International Iberian Nanotechnology Laboratory (INL) and the Bar-Ilan Institute of Nanotechnology and Advanced Materials (BINA) for their support, which was instrumental in kick-starting this collaboration.
RW and ARM acknowledge support from FCT---Fundação para a Ciência e a Tecnologia (Portugal) through PhD Grants SFRH/BD/151199/2021 and SFRH/BD/151453/2021, respectively.
RSB and EFG also acknowledge support from FCT, CEECINST/00062/2018.
ILP acknowledges support from the ERC Advanced Grant FLQuant.
The work of the INL-based authors (RW, ARM, RSB, EFG) was supported by the Digital Horizon Europe project FoQaCiA, GA no. 101070558.
EC was supported by the Israeli Innovation Authority under Project 73795 and the Eureka project, by Elta Systems Ltd., by the Pazy Foundation, by the Israeli Ministry of Science and Technology, and by the Quantum Science and Technology Program of the Israeli Council of Higher Education.
\end{acknowledgments}

\bibliography{bibliography}

\appendix

\section{Formal comparison between standard weak measurement and circuit protocol}\label{appendix: cycle vs weak}

In this section, we estimate the sample complexity of finding weak values using the Bargmann invariants scheme and compare it with the sample complexity of measuring weak values using the standard protocol, known as weak measurement. Operationally, the two procedures are drastically different, which makes the comparison in terms of purely complexity arguments difficult. Therefore, for simplicity of the argument, we consider observables $A = \vert a \rangle \langle a \vert$. Recall that weak values can be written as
\begin{equation}
    A_w = \frac{\Delta_3(\phi, a, \psi)}{\Delta_2(\phi, \psi)}.
\end{equation}

\subsection{Same number of samples for numerator and denominator}\label{subsec: same number of samples}

In order to compute $A_w$ with the cycle test, we must compute the quantities in the numerator and denominator separately: using the SWAP-test for one and the cycle test associated with the operator $C_3$ for the other. We can focus solely on the complexity of estimating the real part $A_w$. The same analysis can be repeated for the imaginary part. Then, since $\Delta_2(\phi,\psi)$ is real, we must determine the sample complexity related to estimating the quantity $\text{Re}[A_w] = \text{Re}[\Delta_3(\phi,a,\psi)]/\Delta_2(\phi,\psi)$, i.e., we can focus on the real part of $\Delta_3(\phi,a,\psi)$ for the estimation of the numerator of $A_w$.

The results provided by runs of such tests can be described by random variables $X_i$ taking values in $\{\pm 1\}$, with $i$ denoting each run. Denoting by $p_+ = (1+\text{Re}[\Delta_3(\phi,a,\psi)])/2$ and $p_- = 1-p_+$ the probabilities of $X_i$ being $+1$ and $-1$, respectively, we write $X_i$'s expectation value as $\mathbb{E}[X_i] = +p_+-p_- = \text{Re}[\Delta_3(\phi,a,\psi)]$.

Assuming we have $N^{(3)}$ runs associated with $X_i$, where superscript $(3)$ refers to the fact that we are estimating third-order invariant, we consider the total random variable $X \defeq \sum_{i=1}^{N^{(3)}} X_i$. By linearity of $\mathbb{E}$, $\mathbb{E}[X] = N^{(3)} \text{Re}[\Delta_3(\phi,a,\psi)]$. We, then, apply Hoeffding's inequality, a standard method in learning theory and statistical analysis~\cite{hoeffding1994probability, shalev2014understanding}. According to this bound, if $a_i$ and $b_i$ are real constants such that $a_i \leq X_i \leq b_i$ for every $i\in\{1,\dots,N\}$ and $X = \sum_i X_i$, we have that, for any $t > 0$,
\begin{equation}
    \mathbb{P}[|X - \mathbb{E}[X]|\geq t] \leq 2 \exp\left(-\frac{2t^2}{\sum_{i=1}^{N^{(3)}}(b_i-a_i)^2}\right).
\end{equation}
In our case, $a_i=-1$ and $b_i=+1$ for every $i$, which allows us to rewrite the above expression as
\begin{equation}
    \mathbb{P}\left[\left|X - N^{(3)} \text{Re}[\Delta_3(\phi,a,\psi)]\right|\geq t\right] \leq 2 e^{-2t^2/4N^{(3)}},
\end{equation}
which implies that
\begin{equation}
    \mathbb{P}\left[\left |\frac{X}{N^{(3)}} -  \text{Re}[\Delta_3(\phi,a,\psi)]\right|\geq \frac{t}{N^{(3)}}\right] \leq 2 e^{-t^2/2N^{(3)}}.
\end{equation}

Defining $\delta^{(3)} \defeq 2e^{-t^2/2N^{(3)}}$, we get that $\delta^{(3)}/2 = e^{-t^2/2N^{(3)}}\implies t = \sqrt{2N^{(3)}\ln(2/\delta^{(3)})}$. Hence, we can ensure that, with probability greater than $1-\delta^{(3)}$, the estimator $X/N^{(3)}$ gives the true value of the real third-order invariant within an error of $\sqrt{2\ln(2/\delta^{(3)})/N^{(3)}}$, i.e.,
\begin{equation}
    \mathbb{P}\left[\left |\frac{X}{N^{(3)}} -  \text{Re}[\Delta_3(\phi,a,\psi)]\right|\geq \sqrt{\frac{2\ln(2/\delta^{(3)})}{N^{(3)}}}\right] \leq \delta^{(3)}.
\end{equation}

Similarly, we can introduce random variables $Y_i$ for $N^{(2)}$ runs of the SWAP test, each with expectation $\mathbb{E}[Y_i] = \Delta_2(\phi,\psi)$, and the respective counterpart of other relevant parameters to write
\begin{equation}
    \mathbb{P}\left[\left |\frac{Y}{N^{(2)}} -  \Delta_2(\phi,\psi)\right|\geq \sqrt{\frac{2\ln(2/\delta^{(2)})}{N^{(2)}}}\right] \leq \delta^{(2)},
\end{equation}
where we have used the fact that $\Delta_2(a,b) \equiv \text{Re}[\Delta_2(a,b)]$ for any pair of states $\vert a \rangle, \vert b \rangle \in \mathcal{H}$.

In our protocol, we have two different procedures: one in which we use $N^{(3)}$ samples to estimate $\text{Re}[\Delta_3(\phi,a,\psi)]$ and another in which we use $N^{(2)}$ samples to estimate $\Delta_2(\phi,\psi)$. Defining $\varepsilon^{(n)} \defeq \sqrt{2\ln(2/\delta^{(2)})/N^{(2)}}$, we can write
\begin{equation}
    \Delta_2(\phi,\psi)-\varepsilon^{(2)} \leq \frac{Y}{N^{(2)}}\leq \Delta_2(\phi,\psi)+\varepsilon^{(2)}
\end{equation}
or simply
\begin{equation}
    \left \vert \frac{Y}{N^{(2)}}\right\vert \geq \vert  \Delta_2(\phi,\psi) - \varepsilon^{(2)}\vert.
    \label{eq:bound-y}
\end{equation}

At first, let us assume that we have equal choices for the precision with a high probability of learning the related second- and third-order invariants, i.e., $\varepsilon^{(2)} = \varepsilon^{(3)} \equiv \varepsilon$ and $\delta^{(2)} = \delta^{(3)} \equiv \delta$. This implies that $N^{(2)} = N^{(3)} \equiv N$. Then, with a probability greater than $1-2\delta$, we have
\begin{equation}
    \begin{aligned}
        \left\vert \frac{X}{Y} \right. & \left.- \frac{\text{Re}[\Delta_3(\phi,a,\psi)]}{\Delta_2(\phi,\psi)}\right\vert = \\
        &= \left \vert \frac{\frac{X}{N}\Delta_2 - \frac{Y}{N}\text{Re}[\Delta_3]}{\frac{Y}{N} \Delta_2}\right \vert\\
        &=\left \vert \frac{\frac{X}{N}\Delta_2 -\Delta_2\text{Re}[\Delta_3]+\Delta_2\text{Re}[\Delta_3]- \frac{Y}{N}\text{Re}[\Delta_3]}{\frac{Y}{N} \Delta_2}\right \vert\\
        &\leq \frac{\Delta_2 \vert \frac{X}{N}-\text{Re}[\Delta_3]\vert + \vert \text{Re}[\Delta_3]\vert \vert \frac{Y}{N}-\Delta_2 \vert }{\Delta_2\frac{Y}{N}}\\
        &\leq \frac{\Delta_2 \varepsilon + \vert \text{Re}[\Delta_3]\vert\varepsilon}{\Delta_2 \vert \Delta_2 - \varepsilon\vert} = \frac{1+\vert \text{Re}[A_w] \vert }{\left\vert \frac{\Delta_2}{\varepsilon}-1\right\vert}.
    \end{aligned}
\end{equation}
In the last inequality, we have used that $\vert Y/N\vert \geq \vert \Delta_2 - \varepsilon \vert$, as seen in Eq.~\eqref{eq:bound-y}. In the above expression, we have also simplified the notation using $\Delta_2 \equiv \Delta_2(\phi,\psi)$ and $\Delta_3 \equiv \Delta_3(\phi,a,\psi)$. As a result, we conclude that
\begin{equation}
    \mathbb{P}\left[\left \vert \frac{X}{Y}-\text{Re}[A_w] \right\vert  \geq \frac{1+\vert \text{Re}[A_w] \vert }{\left\vert \frac{\Delta_2}{\varepsilon}-1\right\vert}\right] \leq 2\delta,
\end{equation}
i.e., the resulting accuracy $\varepsilon_T$ of estimating the real part of the weak value is
\begin{equation}
    \varepsilon_T = \frac{1+\vert \text{Re}[A_w] \vert }{\left\vert \frac{\Delta_2}{\varepsilon}-1\right\vert}.
\end{equation}
Since $\varepsilon = \sqrt{\ln(2/\delta)/N}$, we see that the number of samples needed scales as
\begin{equation}
    N = O\left(\frac{\ln(2/\delta)\vert\text{Re}[A_w] \vert^2}{\Delta_2^2 \varepsilon_T^2}\right).
\end{equation}

This allows us to compare the sample complexity of measuring weak values using the cycle test with the sample complexity associated with standard weak measurement schemes\footnote{It is noteworthy that, differently from the protocol we consider next, there are weak measurement schemes able to estimate both the real and imaginary parts of the weak value simultaneously~\cite{hariri2019experimental}.}.

In the standard weak measurement, one has a successful post-selection given by $N_s = \vert \langle \phi \vert \psi \rangle \vert^2 N$ with $N$ the total number of samples used. For the successful incidents, the measurement outcomes are assumed to be Gaussian distributed with variance satisfying $\sigma_i^2 \gg \gamma^2 \vert A_w \vert^2$. Then, for every $i\in\{1,\dots,N_s\}$, $\mathbb{E}[X_i] = \gamma \text{Re}[A_w]$. Also, $\text{Var}[X] = N_s\sigma_i^2$. Given that $\mathbb{E}[X/N_s \gamma] = \text{Re}[A_w]$ and because the distribution of successful incidents is Gaussian distributed, we have that the probability within a certain region is guaranteed by the error function
\begin{equation}
    \mathbb{P}\left[\left\vert \frac{X}{\gamma N_s} - \text{Re}[A_w] \right\vert \geq \varepsilon\right] \leq 1 - \text{erf}\left(\frac{\varepsilon}{\sqrt{2\text{Var}(X/\gamma N_s)}}\right).
\end{equation}
To simplify the notation, we use that $\text{Var}(X/\gamma N_s) = \text{Var}(X)/\gamma^2N_s^2 = \sigma_i^2/\gamma^2N_s$ and define 
\begin{equation}
    \delta \defeq \text{erfc}\left(\sqrt{\frac{\varepsilon^2 \gamma^2\Delta_2 N}{2\sigma_i^2}}\right),
\end{equation}
where $\text{erfc}(x) = 1-\text{erf}(x)$ is the complement of the error function. Then, the necessary total number of samples $N$ to estimate $\text{Re}[A_w]$ using $X/\gamma N_s$ is
\begin{equation}
    N = {2 (\text{erfc}^{-1}(\delta))^2\frac{\sigma_i^2}{\varepsilon^2 \gamma^2 \Delta_2} \gg } 2(\text{erfc}^{-1}(\delta))^2\frac{\vert A_w \vert^2}{\varepsilon^2 \Delta_2},
\end{equation}
which gives an order of samples better than the case of the cycle test when $\Delta_2 \to 0$ since, in this case,
\begin{equation}
    N^{(\text{weak})} = O\left(\text{erfc}^{-1}(\delta))^2\frac{\vert A_w \vert^2}{\varepsilon^2 \Delta_2}\right).
\end{equation}

This implies that there might be no complex theoretic advantage in using the cycle test protocol to implement weak value amplification, for which $\Delta_2(\phi,\psi)$ must be extremely small. However, this does not exclude the possibility that, for specific setups, there might be benefits in using the cycle test---e.g., in cases where the constant $C$ related to the weak coupling is exponentially big. As pointed out in the main text, for simply witnessing \textit{nonclassical values}, the cycle test \textit{does} provide a sample complexity advantage since one does not need to estimate $\Delta_2(\phi,\psi)$. In particular, this feature might be interesting in specific setups of quantum sensing using the imaginary part of weak values.

\subsection{Different number of samples for numerator and denominator}\label{subsec: different number of samples}

Given that we expect $\Delta_2(\phi,\psi)\ll 1$ in various applications of weak values, e.g., weak value amplification, it is relevant to consider the use of more samples to estimate the overlap such that $N^{(2)}>N^{(3)}$. We can, therefore, analyze this particular situation to observe if (and how) the sample complexity of estimating the weak value changes.

If we do not assume that $N^{(2)} = N^{(3)}$, we have
\begin{widetext}
\begin{equation}
    \begin{aligned}
        &\left\vert \frac{X}{Y}-\frac{N^{(3)}\text{Re}[\Delta_3(\phi,a,\psi)]}{N^{(2)}\Delta_2(\phi,\psi)}\right\vert = \left \vert \frac{N^{(2)}\Delta_2(\phi,\psi)X-N^{(3)}\text{Re}[\Delta_3(\phi,a,\psi)]Y}{YN^{(2)}\Delta_2(\phi,\psi)}\right\vert=\\
        &=\left\vert \frac{N^{(2)}\Delta_2(\phi,\psi)X-N^{(2)}N^{(3)}\Delta_2(\phi,\psi)\text{Re}[\Delta_3(\phi,a,\psi)]+N^{(2)}N^{(3)}\Delta_2(\phi,\psi)\text{Re}[\Delta_3(\phi,a,\psi)]-N^{(3)}\text{Re}[\Delta_3(\phi,a,\psi)]Y}{Y N^{(2)}\Delta_2(\phi,\psi)}\right\vert\\&\leq 
        \frac{\left \vert N^{(2)}\Delta_2(\phi,\psi)X-N^{(2)}N^{(3)}\Delta_2(\phi,\psi)\text{Re}[\Delta_3(\phi,a,\psi)]\right \vert + \left \vert N^{(2)}N^{(3)}\Delta_2(\phi,\psi)\text{Re}[\Delta_3(\phi,a,\psi)]-N^{(3)}\text{Re}[\Delta_3(\phi,a,\psi)]Y\right \vert }{Y N^{(2)}\Delta_2(\phi,\psi)}\\
        &=\frac{N^{(2)}\Delta_2(\phi,\psi)\vert X - N^{(3)}\text{Re}[\Delta_3(\phi,a,\psi)]\vert + N^{(3)}\vert \text{Re}[\Delta_3(\phi,a,\psi)]\vert \vert N^{(2)}\Delta_2(\phi,\psi)-Y\vert}{Y N^{(2)}\Delta_2(\phi,\psi)}.
    \end{aligned}
\end{equation}
Hoeffding's inequality leads to
\begin{equation}
    \mathbb{P}[\vert X - N^{(3)}\text{Re}[\Delta_3(\phi,a,\psi)]\vert \geq \sqrt{2N^{(3)}\ln(2/\delta^{(3)})}]\leq \delta^{(3)}
\end{equation}
and
\begin{equation}
    \mathbb{P}[\vert Y - N^{(2)}\Delta_2(\phi,\psi)\vert \geq \sqrt{2N^{(2)}\ln(2/\delta^{(2)})}]\leq \delta^{(2)},
\end{equation}
implying that, with probability $\geq 1-\delta^{(2)}$,
\begin{equation}
    \frac{1}{Y}\leq \frac{1}{N^{(2)}\Delta_2(\phi,\psi)-\sqrt{2N^{(2)}\ln(2/\delta^{(2)})}} \approx \frac{1}{N^{(2)}\Delta_2(\phi,\psi)}
\end{equation}
for a sufficiently large $N^{(2)}$. In turn, this leads to
\begin{equation}
    \begin{aligned}
        &\left\vert \frac{X}{Y}-\frac{N^{(3)}\text{Re}[\Delta_3(\phi,a,\psi)]}{N^{(2)}\Delta_2(\phi,\psi)}\right\vert \leq \frac{N^{(2)}\Delta_2(\phi,\psi)\sqrt{2N^{(3)}\ln(2/\delta^{(3)})} + N^{(3)}\vert \text{Re}[\Delta_3(\phi,a,\psi)]\vert \sqrt{2N^{(2)}\ln(2/\delta^{(2)})}}{N^{(2)}N^{(3)}\Delta_2(\phi,\psi)^2}\\
        &\left\vert \frac{X}{Y}-\frac{N^{(3)}\text{Re}[\Delta_3(\phi,a,\psi)]}{N^{(2)}\Delta_2(\phi,\psi)}\right\vert \leq \frac{\sqrt{N^{(3)}}}{N^{(2)}\Delta_2(\phi,\psi)}\sqrt{2\ln(2/\delta^{(3)})} + \frac{N^{(3)}}{(N^{(2)})^{3/2}\Delta_2(\phi,\psi)^2}\vert \text{Re}[\Delta_3(\phi,a,\psi)]\vert\sqrt{2\ln(2/\delta^{(2)})},
    \end{aligned}
\end{equation}
from which we conclude that
\begin{equation}
    \left \vert \frac{\frac{X}{N^{(3)}}}{\frac{Y}{N^{(2)}}} - \frac{\text{Re}[\Delta_3(\phi,a,\psi)]}{\Delta_2(\phi,\psi)}\right \vert = \frac{N^{(2)}}{N^{(3)}}\left \vert \frac{X}{Y}-\frac{N^{(3)}\text{Re}[\Delta_3(\phi,a,\psi)]}{N^{(2)}\Delta_2(\phi,\psi)}\right \vert \leq \underbrace{\frac{\sqrt{2\ln(2/\delta^{(3)})}}{\sqrt{N^{(3)}}\Delta_2(\phi,\psi)} + \frac{\vert \text{Re}[\Delta_3(\phi,a,\psi)]\vert\sqrt{2\ln(2/\delta^{(2)})}}{\sqrt{N^{(2)}}\Delta_2(\phi,\psi)^2}}_{=: \varepsilon}.
\end{equation}
Defining $\delta \defeq \delta^{(2)}+\delta^{(3)}$, we write
\begin{equation}
    \mathbb{P}\left[\left \vert \frac{\frac{X}{N^{(3)}}}{\frac{Y}{N^{(2)}}}-\frac{\text{Re}[\Delta_3(\phi,a,\psi)]}{\Delta_2(\phi,\psi)}\right \vert \geq \varepsilon\right]\leq \delta.
\end{equation}

Since we have already analyzed the case in which $N^{(2)}=N^{(3)}$, we now focus on the case in which they differ. Let us capture this difference with a parameter $c$, letting $N^{(2)} \sim N^{(3)}/\Delta_2(\phi,\psi)^c$. Since our main interest lies in scenarios $\Delta_2(\phi,\psi)$ is assumed to be small, we want to study the situation $N^{(2)}\gg N^{(3)}$ to see if we have some improvement in the sample complexity. Then, we assume $c$ is positive and $\Delta_2(\phi,\psi)^2\ll 1$. As a result, the total number of samples satisfies $N \defeq N^{(2)} + N^{(3)} \simeq N^{(3)}/\Delta_2(\phi,\psi)^c$. This implies that
\begin{equation}
    \varepsilon = \frac{\sqrt{2\ln(2/\delta^{(3)})}}{\sqrt{N^{(3)}}\Delta_2(\phi,\psi)} + \frac{\vert \text{Re}[\Delta_3(\phi,a,\psi)]\vert\sqrt{2\ln(2/\delta^{(2)})}}{\sqrt{N^{(2)}}\Delta_2(\phi,\psi)^2} = \frac{\sqrt{2(\ln(4/\delta))}}{\sqrt{N}}\left(\frac{1}{\Delta_2(\phi,\psi)^{1+c/2}} + \frac{\vert \text{Re}[\Delta_3(\phi,a,\psi)]\vert}{\Delta_2(\phi,\psi)^2}\right),
\end{equation}
from which we can extract the dependency in the number of samples with precision $\varepsilon$ in learning the real part of the weak value:
\begin{equation}
    N \simeq \frac{2\ln(4/\delta)}{\varepsilon^2}\left(\frac{1}{\Delta_2(\phi,\psi)^{1+c/2}} + \frac{\vert \text{Re}[\Delta_3(\phi,a,\psi)]\vert}{\Delta_2(\phi,\psi)^2}\right)^2.
\end{equation}
\end{widetext}

Finally, let us study the number of samples by considering the possible values of the parameter $c$:
\begin{enumerate}
    \item Case $c=2$: We have $1+c/2=2$. Then, we are left with an order of number of samples equal to the one when we had $N^{(2)}=N^{(3)}$ since $N=\frac{2(\ln(4/\delta))}{\varepsilon^2\Delta_2(\phi,\psi)^4}(1+\vert \text{Re}[\Delta_3(\phi,a,\psi)]\vert)^2$. Note that, in the results presented in the previous subsection, the number of samples refers to the estimation of the real part of the weak value while, here, it refers to the estimation of the third-order invariant. This is the reason for the difference seen in the order of the overlap.
    \item Case $0<c<2$: Assuming that $\vert \text{Re}[\Delta_3(\phi,a,\psi)]\vert = O(1)$ and that it does not scale with $\Delta_2(\phi,\psi)$, it follows that $\vert \text{Re}[\Delta_3(\phi,a,\psi)]\vert/\Delta_2(\phi,\psi)\gg 1/\Delta_2(\phi,\psi)^{1 + c/2}$ for small $\Delta_2(\phi,\psi)$. After substituting these approximations into the equation for $N$, we are left with the same sample complexity as before.
    \item Case $c>2$: Direct calculation leads to $N \simeq \frac{2(\ln(4/\delta))}{\varepsilon^2\Delta_2(\phi,\psi)^{2+c}}$, which is worse in sample complexity than the previous results, and worsens the larger the value the parameter $c$ takes.
\end{enumerate}

In summary, these calculations show that the sample complexity does not improve statistically if one estimates the overlaps with a larger, or even much larger, number of samples than the number of samples necessary to estimate the third-order invariant in the measurement of weak values.

\section{Sample and measurement complexity of estimating the spectrum using the cycle test}\label{appendix: samples spectrum}

\subsection{Proof of Theorem~\ref{theorem: sample complexity learning spectrum}}\label{subsec: proof of theorem sample spectrum}

To estimate $\{ \Tr(\rho^2), \hdots, \Tr(\rho^d) \}$, we want to learn with high probability $1-\delta$ the entire set of estimators within an error $\varepsilon$. Using the union bound,
\begin{equation}
    \begin{aligned}
        &\mathbb{P}\left[ \bigcup_{i=2}^d \left\{ \hat{\rho^i} \; \colon | \Tr(\hat{\rho^i})- \Tr(\rho^i)| \geq \varepsilon  \right\}  \right]  \\
        &\leq \sum_{i=2}^{d}  \mathbb{P} \left[  | \Tr(\hat{\rho^i})- \Tr(\rho^i)| \geq \varepsilon  \right] \\
        &= (d-1)   \mathbb{P} \left[  | \Tr(\hat{\rho^i})- \Tr(\rho^i)| \geq \varepsilon  \right].
    \end{aligned}
\end{equation}
We fix $\mathbb{P} \left[|\Tr(\hat{\rho^i})- \Tr(\rho^i)| \geq \varepsilon\right] = \delta/(d-1)$ such that the total error probability is $\delta$.

\begin{figure}
    \centering
    \includegraphics[width=\columnwidth]{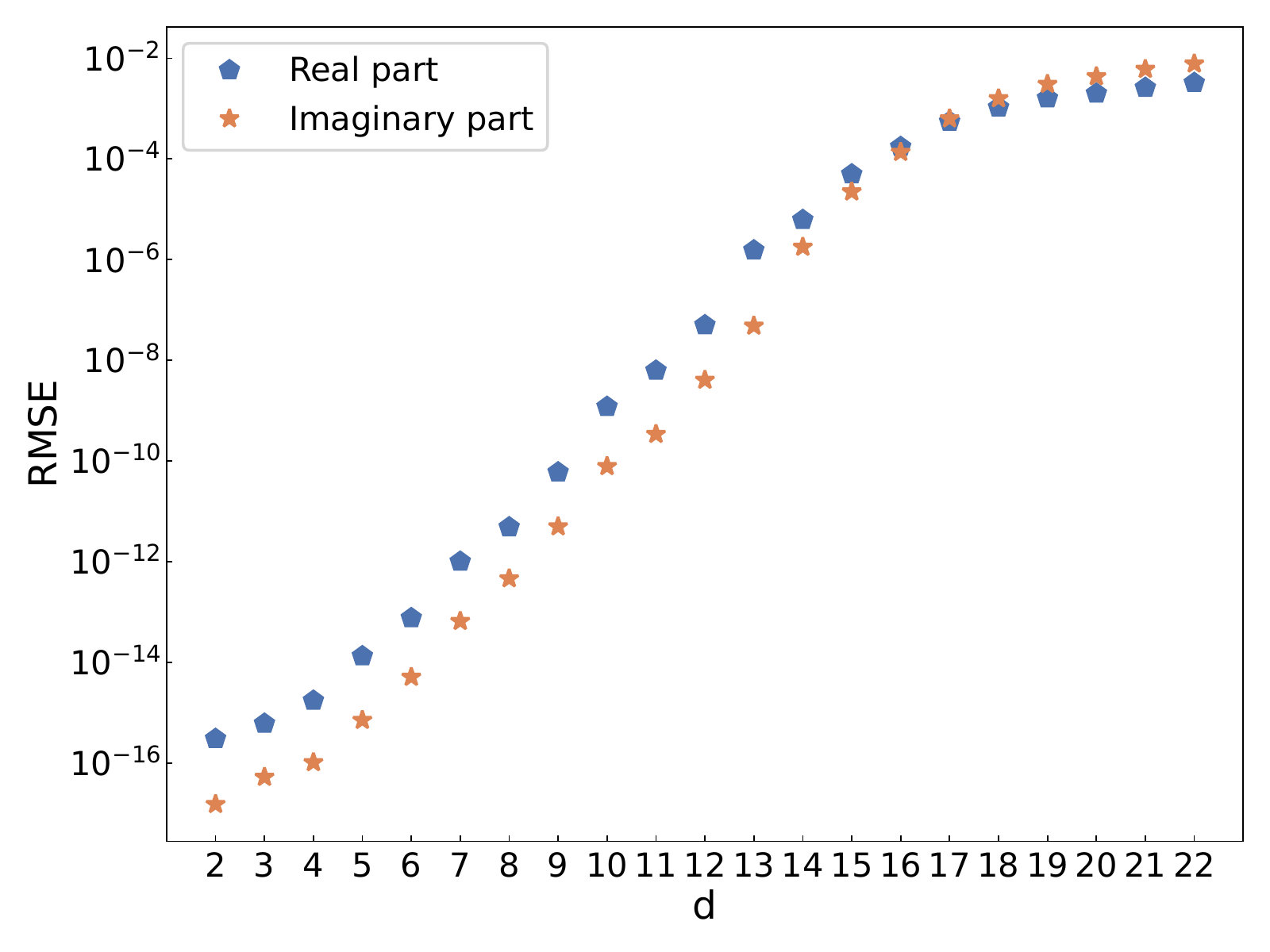}
    \caption{\textbf{Aggregated root-mean-squared error (RMSE) spectrum estimation using the Faddeev-LeVerrier Algorithm.} Eigenvalues of mixed states computed from multivariate traces $\{\text{Tr}(\rho^n)\}_{n=1}^d$ are compared with the ones obtained by diagonalization of the density matrix using \textit{numpy}. For each dimension, a data set of random 1000 density matrices $\rho$ is used. We sample the states  using the Ginibre ensemble.}
    \label{fig:dimension Fa-Le re_im}
\end{figure}

From Hoeffding's inequality, for $N$ independent samples $X_i \in [a, b]$ and mean estimator, the  number of \textit{statistical} samples needed to estimate each trace is, as already mentioned in Appendix \ref{appendix: cycle vs weak}, $O(\ln(2/\delta)/\varepsilon^2)$. But each statistical sample requires $O(d^2)$ states to compute $\{\Tr(\rho^2), \hdots, \Tr(\rho^d)\}$ using the cycle test. The total number of states is, therefore, $Nd^2 = \frac{d^2}{\varepsilon^2} \ln\sqrt{2 (d-1)/\delta}$ and, as a consequence, $N = O\left(\frac{d^2}{\varepsilon^2}\ln(d/\delta)\right)$ samples estimate the chosen set of traces to precision $\varepsilon$ with probability $1-\delta$. Since the number of measurements equals the number of statistical samples times $d$, which accounts for one measurement for each element in the set of traces, we have that the number of measurements is of order $O\left(\frac{d}{\varepsilon^2} \ln\left(d/\delta\right)\right)$. This concludes the proof.

In the main text, we reported the sample complexity $\tilde{O}(d^2/\varepsilon^2)$ that hides $\log(d)$ factors and fixed $\delta$ to be a reference value to prescribe what is meant by ``high probability,'' usually taken to be equal to $1-\delta = 2/3$.

\subsection{Numerical analysis for the  Faddeev--LeVerrier algorithm}\label{subsec: numerical analysis FV}

In this section, we conduct a numerical study of the behavior of the Faddeev--LeVerrier algorithm~\cite[Chapter 3]{escofier2012galois}, which returns the estimates of the eigenvalues of $\rho$, given $\{\Tr(\rho^i)\}_{i=2}^{d}$ as an input.

\begin{figure}
    \centering
    \includegraphics[width=\columnwidth]{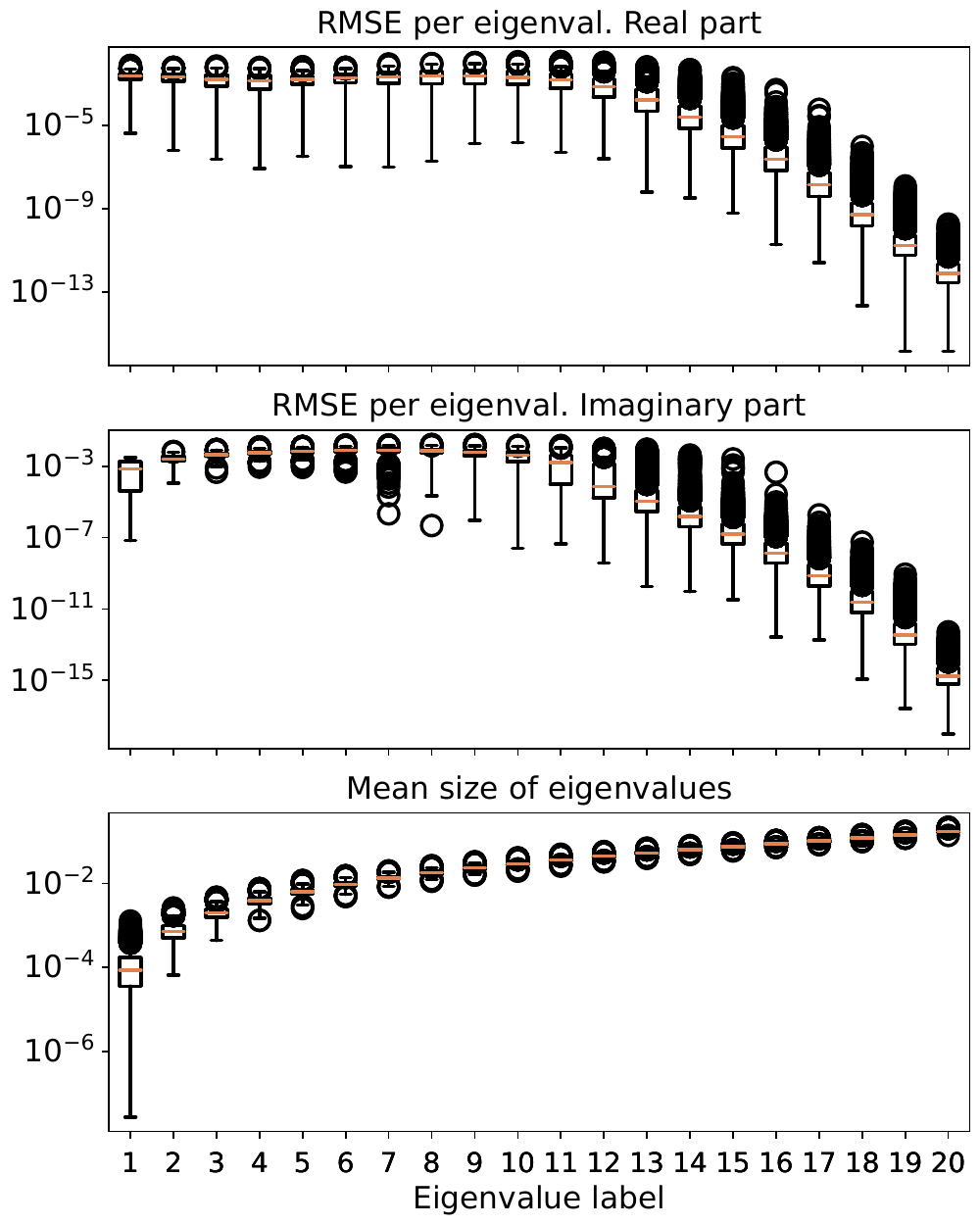}
    \caption{\textbf{Mean size of eigenvalues and RMSE per eigenvalue}. For 1000 random mixed density matrices of fixed Hilbert space dimension 20, we study the real and imaginary parts of the spectrum given from the Faddeev--LeVerrier algorithm.}
    \label{fig:spectrum size}
\end{figure}

In Fig.~\ref{fig:dimension Fa-Le re_im}, using noiseless experimental data, i.e., the traces $\{\Tr(\rho^i)\}_i$, we show the aggregated RMSE of the real and imaginary parts of the spectrum of $\rho$. We compare the predicted spectrum with the one computed via diagonalization with \textit{numpy}.

Ideally, the spectrum should be real-valued. However, as mentioned in the main text, numerical instabilities generate imaginary parts in the outputs of the algorithm. At a certain dimension, the RMSE of the imaginary part of all eigenvalues \textit{exceeds} the one for the real part. This is expected, as numerical inaccuracies occur when very small roots need to be estimated. As a result, the aggregated RMSE for high dimensions can be orders of magnitude bigger than the actual size of some eigenvalues. In what follows, we study if the appearance of these imaginary values, as well as their size, significantly influences the comparison between the real part of the spectrum and the ideal spectrum.

Let us take as a case study the error \textit{per eigenvalue} for a sufficiently large Hilbert space dimension. In Fig.~\ref{fig:spectrum size}, we show the average size of the eigenvalues using 1000 random density matrices of dimension 20. The biggest eigenvalues have sizes of order $10^{-2}$ while the associated error between the real and imaginary part is smaller than $10^{-13}$. On the other hand, the smallest eigenvalues have an error approximately equal to the size of the eigenvalue, which explains the aggregated RMSE from Fig.~\ref{fig:dimension Fa-Le re_im} in higher dimensions.

\begin{figure}
    \centering
    \includegraphics[width=\columnwidth]{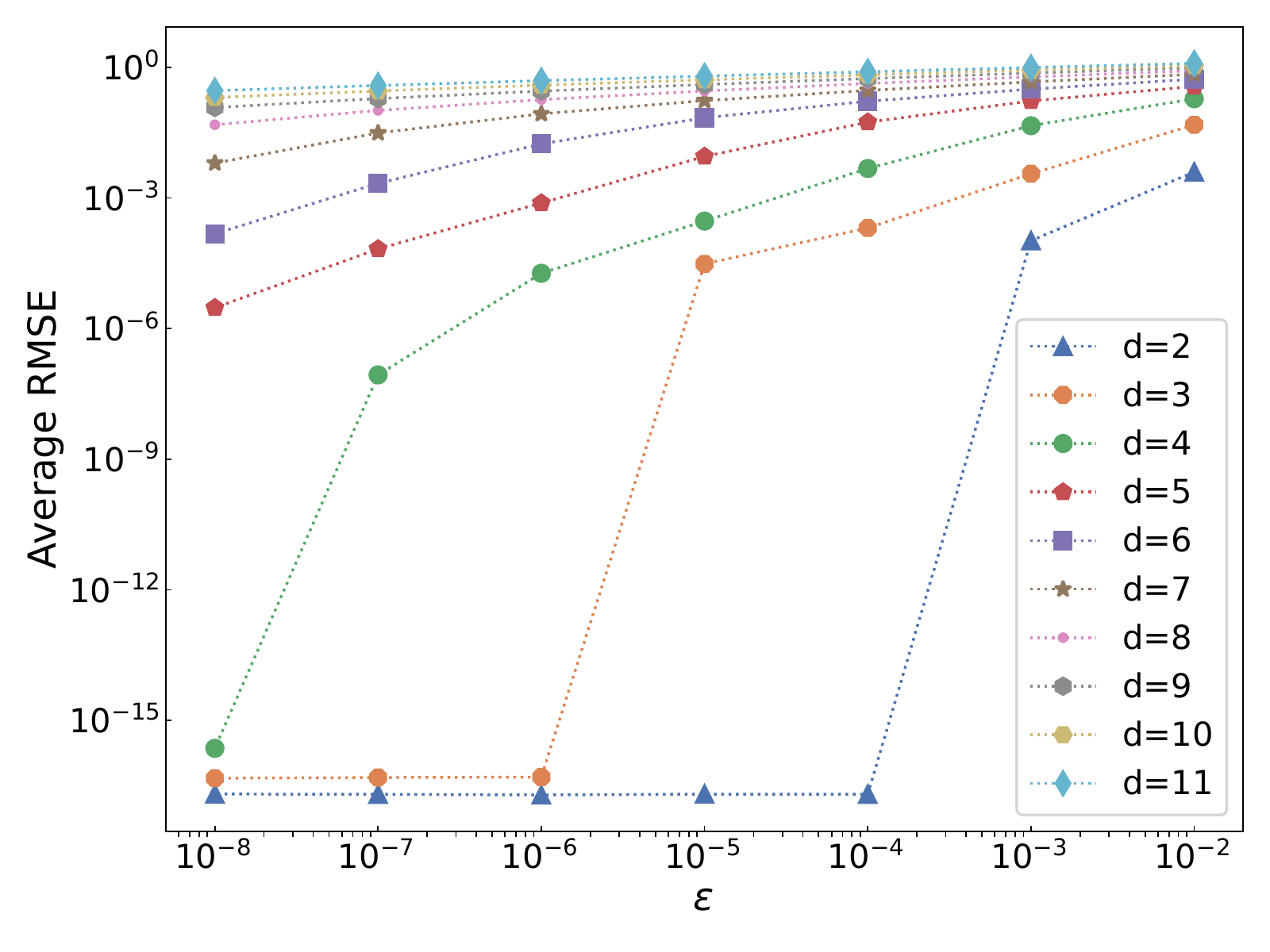}
    \caption{
    \textbf{Average RMSE of the imaginary parts of the spectrum estimation of 5000 random $\rho$.} This figure represents the imaginary counterpart of Fig.~\ref{fig:spectrum MSE}. Each matrix generates 1000 noisy samples to compute the RMSE.}
    \label{fig:spectrum imaginary_noisy}
\end{figure}

Although the predicted eigenvalues can contain non-negligible imaginary parts, their presence does not affect the prediction of the real parts. These imaginary values come from the presence of small coefficients in the characteristic polynomial. In Fig.~\ref{fig:spectrum imaginary_noisy}, we show the RMSE of the imaginary part in the estimation of the spectrum using noisy data. The sudden increase for a certain dimension is closely related to the amount of noise introduced. For larger dimensions, smaller amounts of noise modify the coefficients enough to produce the appearance of imaginary values. However, by comparing with Fig.~\ref{fig:spectrum MSE}, the trend of the RMSE of real parts is not affected by these ``bumps'' in the imaginary parts, allowing us to safely discard them from the predictions.

\subsubsection{Learning the largest eigenvalue}

As smaller eigenvalues produce bigger errors, the Faddeev--LeVerrier algorithm can be used to estimate only the largest eigenvalues. In Fig.~\ref{fig:spectrum RMSE largest}, we show the RMSE of the real part in the prediction of the largest eigenvalue using the same noisy dataset as the one used for the prediction of the entire spectrum. Since the largest eigenvalue depends mostly on the coefficients of the highest order, the size of these makes them less sensitive to the noise and allows the obtaining of a RMSE of $10^{-3}$ in dimension 6 with an error $\varepsilon=10^{-4}$ for the traces. However, as stated in the main text, computing the entire set of traces is not recommended in this task as higher powers in the traces of $\rho$ are more difficult to estimate and contribute less to the largest eigenvalue.

\begin{figure}
    \centering
    \includegraphics[width=\columnwidth]{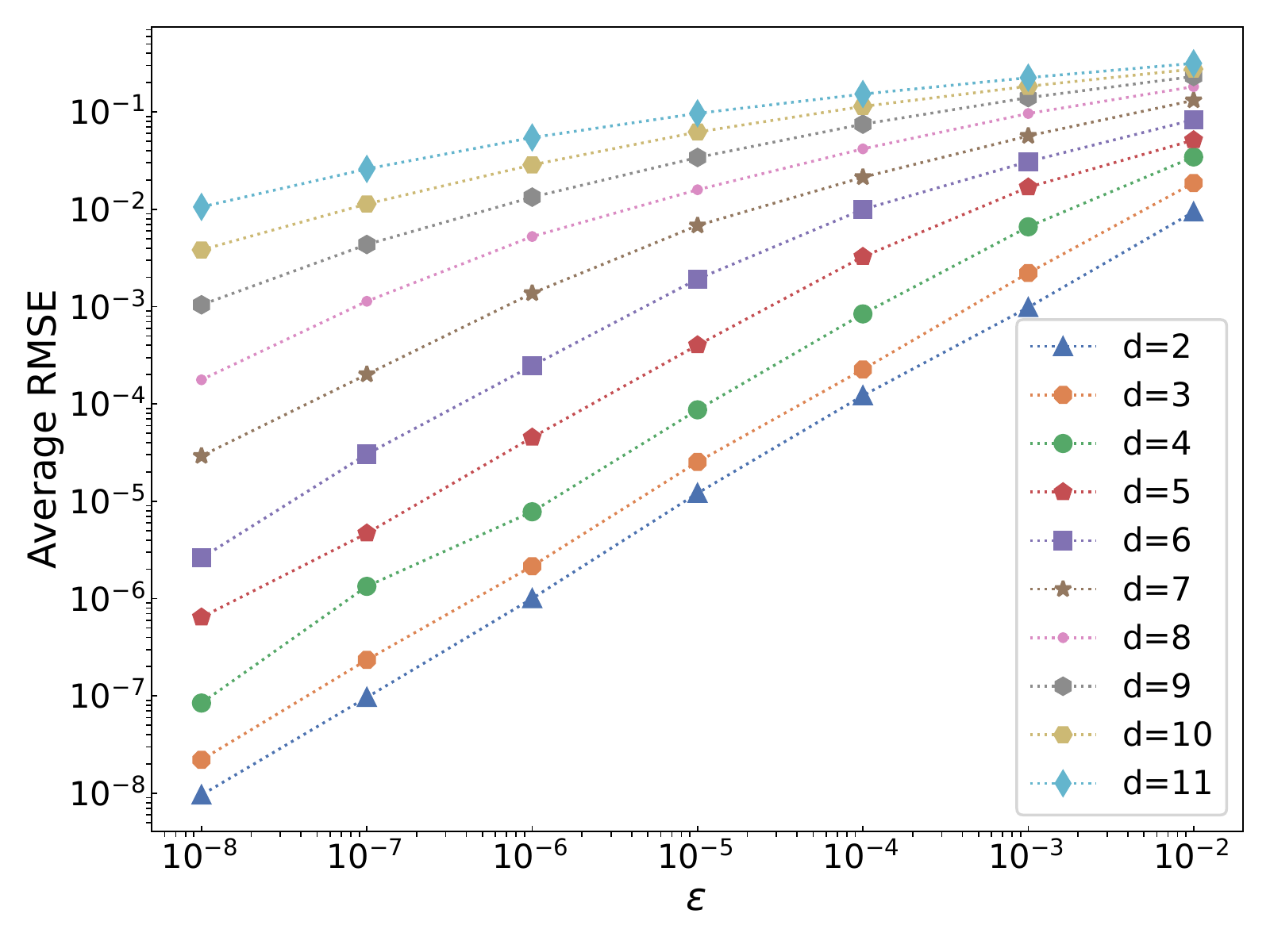}
    \caption{\textbf{Average RMSE of the real part of largest eigenvalue for 5000 random density matrices.} Each matrix generates 1000 noisy samples to compute the RMSE. We see that, when focusing on the largest eigenvalue, the accuracy of the estimation improves when compared to the results for the entire spectrum estimation from Fig.~\ref{fig:spectrum MSE} in the main text.}
    \label{fig:spectrum RMSE largest}
\end{figure}

\section{Proof of Lemma \ref{lemma: negativity from overlaps}}\label{appendix: proof of the lemma}

Consider a general triplet of pure states $\{\vert \psi_1 \rangle, \vert \psi_2 \rangle, \vert \psi_3 \rangle\}$ real with respect to some basis and choose the following parametrization for it:
\begin{equation}
    \begin{aligned}
        &\vert \psi_1\rangle = \vert 0 \rangle \\
        &\vert \psi_2 \rangle = \cos(\beta)\vert 0 \rangle + \sin(\beta)\vert 1 \rangle \\
        &\vert \psi_3 \rangle = \cos(\gamma)\vert 0 \rangle + \sin(\gamma)\sin(\alpha)\vert 1 \rangle + \sin(\gamma)\cos(\alpha)\vert 2 \rangle,
    \end{aligned}
\end{equation}
where $\beta,\gamma \in [0,\pi]$ and $\alpha \in [0,2\pi)$. Under this choice, made without loss of generality, we can define the function
\begin{figure*}
    \centering
    \includegraphics[width=\textwidth]{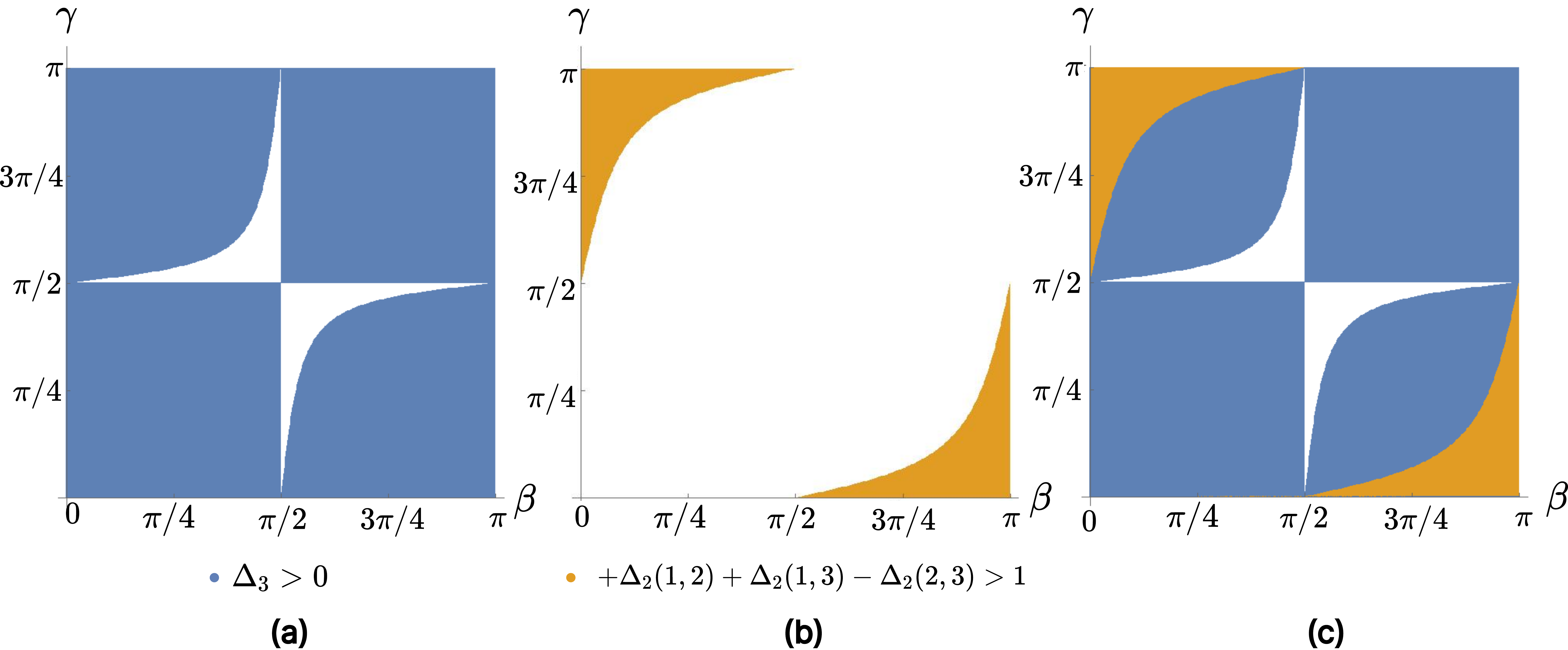}
    \caption{\textbf{Witnesses of third-order positivity.} (a) The white regions in the center are the area corresponding to $\Delta_{3}\leq 0$. The blue region is given by $\Delta_{3}(\alpha,\beta,\gamma)>0$ for $\alpha = 0.11$. (b) The light-brown  region overlaps with the blue region completely, and it is given by $h_3(\alpha,\beta,\gamma)-1>0$, showing that this condition implies $\Delta_{3}(\alpha,\beta,\gamma)>0$. However, the converse does not necessarily hold. In fact, it is true only for $\alpha \in [0,\pi]$, as can be seen in (c), where we combine both configurations (a) and (b).}
    \label{fig: lemma figure}
\end{figure*}
\begin{equation}
    \begin{aligned}
        h_{1}(\alpha,\beta,\gamma) = -\Delta_{2}(\psi_1,\psi_2)+\Delta_{2}(\psi_1,\psi_3)+\Delta_{2}(\psi_2,\psi_3),\\
        h_2(\alpha,\beta,\gamma) = +\Delta_{2}(\psi_1,\psi_2)-\Delta_{2}(\psi_1,\psi_3)+\Delta_{2}(\psi_2,\psi_3),\\
        h_3(\alpha,\beta,\gamma) = +\Delta_{2}(\psi_1,\psi_2)+\Delta_{2}(\psi_1,\psi_3)-\Delta_{2}(\psi_2,\psi_3),
    \end{aligned}
\end{equation}
where each overlap depends on the variables $\alpha,\beta,\gamma$. We can, then, study what the constraints $h_{i}>1$ represent, i.e.,
\begin{equation}
    \begin{aligned}
        &h_1 > 1 \implies \\
        &0<-1-\cos^2(\beta)+\cos^2(\gamma)+\cos^2(\beta)\cos^2(\gamma)+\\
        &+\sin^2(\beta)\sin^2(\gamma)\underbrace{\sin^2(\alpha)}_{\in [0,1]}+\frac{1}{2}\sin(2\beta)\sin(2\gamma)\sin(\alpha)\\
        &\leq-1-\cos^2(\beta)+\cos^2(\gamma)+\cos^2(\beta)\cos^2(\gamma)+\\
        &\hspace{0.3cm}+\sin^2(\beta)\sin^2(\gamma)+\frac{1}{2}\sin(2\beta)\sin(2\gamma)\sin(\alpha)\\
        &=-2\cos^2(\beta)+2\cos^2(\beta)\sin^2(\gamma)\\
        &\hspace{1.6cm}+\frac{1}{2}\sin(2\beta)\sin(2\gamma)\sin(\alpha)\\
        &=-2\cos^2(\beta)+2\Delta_{3}(\alpha,\beta,\gamma)\leq 2\Delta_{3}(\alpha,\beta,\gamma),
    \end{aligned}
\end{equation}
which implies that $\Delta_{3}(\alpha,\beta,\gamma) \equiv \Delta_3(\psi_1,\psi_2,\psi_3) > 0$, where we have used that
\begin{equation}
    \begin{aligned}
        \Delta_{3}(\alpha,\beta,\gamma) &= \langle \psi_1 \vert \psi_2 \rangle \langle  \psi_2 \vert \psi_3 \rangle \langle \psi_3 \vert \psi_1 \rangle \\
        &= \cos^2(\beta)\cos^2(\gamma) + \frac{1}{4}\sin(2\beta)\sin(2\gamma)\sin(\alpha).
    \end{aligned}
\end{equation}
The same reasoning applies for $h_2 > 1$. For $h_3>1$, the situation is less straightforward since the two functions $h_3(\alpha,\beta,\gamma)-1$ and $\Delta_{3}(\alpha,\beta,\gamma)$ are not totally ordered in the domain $[0,2\pi) \times [0,\pi]^2$. However, we can analyze what $\Delta_{3}(\alpha,\beta,\gamma)>0$ corresponds to in terms of the parameters $\beta$ and $\gamma$ when $\alpha$ lies in some domains. From the above, it follows that
\begin{equation}\label{eq: bargmann positivity implication}
    \tan(\beta)\tan(\gamma)>-\frac{1}{\sin(\alpha)} \implies \Delta_{3}(\alpha,\beta,\gamma)>0
\end{equation}
if $\alpha \in [0,\pi]$. The case in which $\alpha \in [\pi,2\pi]$ can be treated similarly. The function $h_3(\alpha,\beta,\gamma)-1$ can be written as
\begin{equation}
    \begin{aligned}
        &h_3(\alpha,\beta,\gamma)-1 =\\
        &=-1 +\cos^2(\beta) + \cos^2(\gamma)-\cos^2(\beta)\cos^2(\gamma)\\
        &-\sin^2(\beta)\sin^2(\gamma)\sin^2(\alpha)-\frac{1}{2}\sin(2\beta)\sin(2\gamma)\sin(\alpha)\\
        &=-\sin^2(\beta)\sin^2(\gamma)-\sin^2(\beta)\sin^2(\gamma)\sin^2(\alpha)\\
        &\hspace{1.5cm}-\frac{1}{2}\sin(2\beta)\sin(2\gamma)\sin(\alpha)\\
        &=-\sin^2(\beta)\sin^2(\gamma)(1+\sin^2(\alpha))\\
        &\hspace{1.5cm}-\frac{1}{2}\sin(2\beta)\sin(2\gamma)\sin(\alpha).
    \end{aligned}
\end{equation}
The condition that $h_3(\alpha,\beta,\gamma)-1>0$ implies that $\sin^2(\beta)\sin^2(\gamma)(1+\sin^2(\alpha))<-\sin(2\beta)\sin(2\gamma)\sin(\alpha)/2$.

\begin{lemma}
    For any $\alpha,\beta,\gamma \in [0,\pi]$, if
    \begin{equation}
        \sin^2(\beta)\sin^2(\gamma)(1+\sin^2(\alpha))<-\frac{1}{2}\sin(2\beta)\sin(2\gamma)\sin(\alpha),
    \end{equation}
    then
    \begin{equation}
        \tan(\beta)\tan(\gamma)>-\frac{1}{\sin(\alpha)}.
    \end{equation}
\end{lemma}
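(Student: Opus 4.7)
The plan is to reduce the hypothesis to a bound on $x := \tan(\beta)\tan(\gamma)$ and then use an elementary numerical inequality to conclude. Before doing the algebra, I would extract from the hypothesis all the sign and non-degeneracy information: since the left-hand side $\sin^2(\beta)\sin^2(\gamma)(1+\sin^2(\alpha))$ is always $\geq 0$, the strict inequality forces the right-hand side $-\tfrac{1}{2}\sin(2\beta)\sin(2\gamma)\sin(\alpha)$ to be strictly positive. Expanding $\sin(2\beta)\sin(2\gamma) = 4\sin(\beta)\cos(\beta)\sin(\gamma)\cos(\gamma)$, this forces $\sin(\alpha) > 0$, $\sin(\beta), \sin(\gamma) > 0$, and $\cos(\beta)\cos(\gamma) < 0$. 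In particular, $\beta,\gamma \notin \{0,\pi/2,\pi\}$, so $\tan(\beta)$ and $\tan(\gamma)$ are well-defined (finite and nonzero), and their product $x = \tan(\beta)\tan(\gamma)$ is strictly negative.

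Next, I would divide both sides of the hypothesis by $\sin^2(\beta)\sin^2(\gamma) > 0$. Using
\[
\frac{\sin(2\beta)\sin(2\gamma)}{\sin^2(\beta)\sin^2(\gamma)} \;=\; \frac{4\cos(\beta)\cos(\gamma)}{\sin(\beta)\sin(\gamma)} \;=\; \frac{4}{\tan(\beta)\tan(\gamma)} \;=\; \frac{4}{x},
\]
the hypothesis becomes
\[
1 + \sin^2(\alpha) \;<\; -\frac{2\sin(\alpha)}{x}.
\]
Since $x < 0$, multiplying through by $x$ reverses the inequality, giving $x(1+\sin^2(\alpha)) > -2\sin(\alpha)$, i.e.
\[
x \;>\; -\frac{2\sin(\alpha)}{1+\sin^2(\alpha)}.
\]

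To finish, I would verify the purely numerical inequality $\dfrac{2\sin(\alpha)}{1+\sin^2(\alpha)} \leq \dfrac{1}{\sin(\alpha)}$, which (since $\sin(\alpha) > 0$) is equivalent to $2\sin^2(\alpha) \leq 1 + \sin^2(\alpha)$, i.e.\ $\sin^2(\alpha) \leq 1$, which holds trivially. Flipping signs then yields $-\dfrac{2\sin(\alpha)}{1+\sin^2(\alpha)} \geq -\dfrac{1}{\sin(\alpha)}$, and combining with the previous display gives $x > -1/\sin(\alpha)$, as required. Since this is all elementary manipulation, I do not anticipate a real obstacle; the only delicate point is making sure the sign analysis at the start is exhaustive enough to justify dividing by $\sin^2(\beta)\sin^2(\gamma)$ and by $x$, and to rule out the degenerate cases $\sin(\alpha) = 0$ or $\cos(\beta) = 0$ or $\cos(\gamma) = 0$ where $\tan(\beta)\tan(\gamma)$ would either vanish or fail to be defined.
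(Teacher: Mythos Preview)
Your argument is correct and, in fact, strictly stronger than what the paper provides. The paper does not give an analytic proof of this lemma at all: it simply states that ``this lemma can be easily seen to hold numerically'' and refers to a figure showing the overlap of the two regions in the $(\beta,\gamma)$ plane for a fixed value of $\alpha$. Your approach replaces this numerical appeal with a short chain of elementary inequalities: extract the sign constraints forced by the strict hypothesis (in particular $\sin\alpha>0$, $\sin\beta\sin\gamma>0$, $\cos\beta\cos\gamma<0$, so $x:=\tan\beta\tan\gamma<0$ is well defined), divide through by $\sin^2\beta\sin^2\gamma$ to reduce to $x>-\tfrac{2\sin\alpha}{1+\sin^2\alpha}$, and then observe that $\tfrac{2\sin\alpha}{1+\sin^2\alpha}\le \tfrac{1}{\sin\alpha}$ since $\sin^2\alpha\le 1$. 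This yields a self-contained analytic proof that also makes transparent exactly which degenerate parameter values are excluded by the hypothesis, something the numerical verification leaves implicit.
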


\begin{proof}
    This lemma can be easily seen to hold numerically. In the range of $\alpha,\beta,\gamma \in [0,\pi]$, we see that there is always an overlap between the region for which the first inequality is satisfied simultaneously with the second inequality. The converse is not necessarily true. 
\end{proof}

The above lemma, together with Eq.~\eqref{eq: bargmann positivity implication}, implies that $h_3-1>0 \implies \Delta_{3}>0$, as we wanted to show. The case $\alpha \in [\pi,2\pi]$ is treated similarly, as already mentioned.

Fixing $\alpha =0.11$, Fig.~\ref{fig: lemma figure} shows the region of values $(\beta,\gamma) \in [0,\pi]$ for which the third-order invariant positivity can be witnessed { with} the overlap inequality. { This provides a visual way to see that} $\Delta_{2}(\psi_1,\psi_2)+\Delta_{2}(\psi_1,\psi_3)-\Delta_{2}(\psi_2,\psi_3)>1$ implies { positivity of the third-order Bargmann invariant,} while the {converse} does not always hold.

\end{document}